\renewcommand{\O}{\mathcal{O}}
\renewcommand{\P}{\mathcal{P}}
\newtheorem*{theorem*}{Theorem}
\newtheorem{fact}{Fact}
\newtheorem*{fact*}{Fact}
\newtheorem{ob}{Observation}
\newcommand{\rlem}[1]{Lemma~\ref{lemma:#1}}
\newcommand{\w}{\mathbf{w}}
\newcommand{\U}{\mathcal{U}}
\newcommand{\Q}{\mathcal{Q}}
\newcommand{\para}[1]{\vspace{2mm} \noindent \textbf{#1}}
\newcommand{\ignore}[1]{}
\newcommand{\0}{\mathbf{0}}
\newcommand{\1}{\mathbf{1}}
\newcommand{\2}{\mathbf{2}}
\newcommand{\bi}{\mathbf{i}}
\newcommand{\bs}{\textbf{\#}}
\newcommand{\true}{1}
\newcommand{\toappendix}[4]{
	\ifnum#2=1
	\newcommand{#1}{#3 #4}
	\else
	\newcommand{#1}{}
	#4
	\fi
}
\title{Data Structure Lower Bounds for Document Indexing Problems}
\author{Peyman Afshani}
\author{Jesper Sindahl Nielsen}
\affil{MADALGO\footnote{ Research funded by MADALGO, Center for Massive Data Algorithmics, a
    Center of the Danish National Research Foundation, grant DNRF84}\\
  Department of Computer Science, Aarhus University, Denmark}
\authorrunning{P. Afshani and J. S. Nielsen}
\subjclass{F.2.2 Nonnumerical Algorithms and Problems}
\keywords{Data Structure Lower Bounds, Pointer Machine, Set Intersection, Pattern Matching}
\begin{document}

\maketitle

\begin{abstract}
	We study data structure problems related to document indexing and pattern matching queries
	and our main contribution is to show that the pointer machine model of computation 
	can be extremely useful in proving high and unconditional lower bounds that cannot be obtained in 
	any other known model of computation with the current techniques. 
	Often our lower bounds match the known space-query time trade-off curve and in fact for all 
	the problems considered, there is a very good and reasonable match between the
	our lower bounds and the known upper bounds, at least for some choice of input parameters.
 
	The problems that we consider are \emph{set intersection} queries (both the \emph{reporting} variant and the 
	\emph{semi-group counting} variant), indexing a set of documents for \emph{two-pattern} queries, or
	\emph{forbidden-pattern} queries, or \emph{queries with wild-cards}, and indexing
	an input set of \emph{gapped-patterns} (or \emph{two-patterns})	
	to find those matching a document given at the query time.

	\ignore{
	in the 
  In the \emph{Two pattern} query problem we are given a collection of strings
  (i.e., documents) $\mathcal{D} = \{d_1, d_2,\ldots,d_D\}$ with total length $n$,
  to store such that we can find all the documents matching two different patterns.
  For the dictionary matching with ``don't cares'' (\emph{wild card indexing}) we have the same setup but a query is a string containing ``don't cares'' (or wild cards), the task is to report all documents matching the string where wild cards match any character.
  In the \emph{Set Intersection} problem we are given a collection of sets $S_1, S_2,\ldots,S_m$ with total size $n$ that we must store
  such that  given two query indices $i$ and $j$ we can efficiently report $S_i \cap S_j$. 
  These are all fundamental problems in information retrieval and have many obvious applications.
  All problems can similarly be defined to count the size of the output (\emph{searching} variant) rather than listing (\emph{reporting} variant).

  Our main results are almost tight lower bounds for the reporting variant of the problems mentioned above, in the pointer machine model.
  As a corollary, we get similar lower bounds for the \emph{searching} variants of the Two Pattern and Set Intersection problems.


  

  For Two Pattern and Set Intersection our results go beyond the previous attempts that provided conditional lower bounds (e.g., reductions from matrix multiplication) 
  that could only relate the query time to the preprocessing time.  
  Furthermore, our results reveal a separation where the reporting is shown to be significantly more difficult 
  than counting. To our knowledge, such separation was not previously observed for these problems.
}
  
 \end{abstract}

\thispagestyle{empty}


\section{Introduction}
In this paper we study a number of data structure problems related to document indexing and prove
space and query time lower bounds that in most cases are almost tight. 
Unlike many of the previous lower bounds, we disallow random accesses by working in the pointer machine
model of computation, however, we obtain high and unconditional space and query time lower bounds that almost match
the best known data structures for all the problems considered; at the moment, obtaining such 
unconditional and tight lower bounds in other models of computation is a hopelessly difficult problem. 
Furthermore, compared to the previous lower bounds in the area, our lower bounds probe deeper and thus are much more informative.
Consequently, these results show the usefulness of the pointer machine model.

Document indexing is an important problem in the field of information retrieval. 
Generally,  the input is 
a collection of documents $\mathcal{D} = \{d_1, d_2, \ldots d_D\}$ with  a total length of
$n$ characters and usually the goal is to index them such that 
given a query, all the documents matching the query can be either found efficiently (the \emph{reporting variant})
or counted efficiently (the \emph{searching variant}).
When the query is just one text pattern, the problem is classical
and well-studied and there are linear space solutions with optimal query time \cite{Muthukrishnan02}. 
Not surprisingly, there have been various natural extensions of this problem. 
We summarize the problems we study and our results below.

\subsection{New and Previous Results}

\newcommand{\doubleline}[2]{
	\begin{tabular}{@{}c@{}} 
		#1 \\
		\footnotesize{#2}
	\end{tabular}
}
\begin{table}[h]
	\begin{center}
        \begin{tabular}{l|l|l|l}
\textbf{Problem} 												& \textbf{Query Bound}  					& \textbf{Space Lower Bound}    &\textbf{Assumptions}  \\ \hline \hline
\doubleline{2P, FP, SI, 2FP}{(counting)}		& $Q(n)$ 								& $\Omega(n^{2-o(1)}/Q^2(n))$	& Semi-group \\ \hline
\doubleline{2P, FP, SI, 2FP}{(reporting)}		& $Q(n) + \O(t)$						& $\Omega(n^{2-o(1)}/Q(n))$		& PM \\ 
\doubleline{ 2P, FP, SI, 2FP}{(reporting)}	& $\O( (nt)^{\frac{1}{2}-\alpha} + t)$	& $\Omega\left(n^{\frac{1+6\alpha}{1+2\alpha}-o(1)}\right)$ &PM, $\alpha>0$ a parameter \\  \hline
WCI (reporting) 								& $Q(n,\kappa) + \O(t)$						& $\Omega\left(\frac{n}{\kappa} \Theta\left( \frac{\log n}{\kappa \log Q(n,\kappa)}  \right)^{\kappa-1}\right)$	& \doubleline{PM, $\kappa$ wild-cards,}{$\kappa \le \log_{Q(n,\kappa)}n$}  \\
WCI (reporting) 							& $\O(2^{\kappa/2} + t)$				& $\Omega\left(  n^{1+\Theta(1/\log k)}\right)$ 	& \doubleline{PM, $\kappa$ wild-cards,}{$3\sqrt{\log n} \le \kappa = o(\frac{\log n}{\log\log n})$} \\ \hline
$\kappa$-GPI									& $o\left( \frac{D\gamma^\kappa}{(2+2\log D)^{\kappa+1}} \right)$ & $n^{\Omega(\log^{1/(2\kappa)}n)}$	& \doubleline{PM, $\alpha_i = 0, \beta_i=\gamma$}{$\kappa = o(\frac{\log\log n}{\log\log\log n})$}\\
\end{tabular}
\end{center}
\caption{At every row, the 3rd cell presents our space lower bound for data structures that have a query time
    bounded by the 2nd cell. PM stands for the pointer machine model. $n$ is the input size and $t$ is the output size. }
\end{table}

\subsubsection{Two-pattern and the Related Queries}
The \emph{two-pattern query problem} (abbreviated as the 2P problem)
was introduced in 2001 by Ferragina et al.~\cite{Ferragina2Dind} and since
then it has attracted lots of attention. 
In the 2P problem, each query is composed of 
two patterns and a document matches the query if both patterns occur in the document.
One can also define the \emph{Forbidden Pattern} (FP) problem~\cite{FischerGKLMSV12} where
a document matches the query
if it contains one pattern  but not the other.
For symmetry, we also introduce and consider the \emph{Two Forbidden Patterns} (2FP) problem
where none of the patterns are required to match the document.
%

\para{Previous Results.} 
Ferragina et al.~\cite{Ferragina2Dind} presented a number of solutions for the 2P problem
with space and query times that depend on the ``average size'' of
each document but the worst case query time and space is $\O(P_1 + P_2 + n^\alpha + t)$ and
$\O(n^{2-\alpha} \log^{\O(1)}n)$, for any $0 < \alpha < 1$, respectively.
Here $P_1$ and $P_2$ are the sizes of the query patterns and 
$t$ is the output size (see also \cite{Muthukrishnan02}).
Cohen and Porat~\cite{CohenP10} offered a solution that uses
$\O(n \log n)$ space with $\O(P_1 + P_2 +  \sqrt{n t} \log^{2.5}n)$ query time.
The space was improved to $\O(n)$ and the query time to $\O(P_1 + P_2 +  \sqrt{n t} \log^{1.5}n)$ by
Hon et al.~\cite{HonSTV10}. The query time  was reduced by a $\O(\sqrt{\log n})$ in \cite{LarsenMNT14} factor and finally
the query time $\O(P_1 + P_2 +  \sqrt{n t})$ was achieved in~\cite{BiswasGST15}.

The FP problem was introduced by Fischer et al.~\cite{FischerGKLMSV12} and they  presented a 
data structure that stores $\O(n^{3/2})$ bits and answers queries in  $\O(P_1 + P_2 + \sqrt n + t)$ time. 
Another solution was given by Hon et al.~\cite{HonSTV12} that uses $\O(n)$ space but has 
$\O(P_1 + P_2 +  \sqrt{n t} \log^{2.5}n)$ query time.
For the searching variant (unweighted) their solution can answer queries in
$\O(P_1 + P_2 +  \sqrt{n} \log\log n)$  time.
As Larsen et al.~\cite{LarsenMNT14} remark, the $\log\log n$ factor can be removed by 
using range emptiness data structures and that the same data structure can be used to
count the number of matching documents for the two pattern problem. 

The difficulty of obtaining fast data structures using (near) linear space
has led many to believe that very efficient solutions
are impossible to obtain.
Larsen et al.~\cite{LarsenMNT14} specifically focus on proving such impossibility claims
and they show that the 2P and FP problems are at least as hard as Boolean Matrix Multiplication,
meaning, with current techniques,
$P(n) + nQ(n) = \Omega(n^{\omega/2})$ where $P(n)$ and $Q(n)$
are the preprocessing and the query times of the data structure, respectively
and $\omega$ is the exponent of the best matrix multiplication algorithm (currently $\omega=2.3728639$).
If one assumes that there is no ``combinatorial'' matrix multiplication algorithm
with better running time than $O(n^{3-o(1)})$, then the lower bound becomes
$P(n) + nQ(n) = \Omega(n^{1.5 - o(1)})$.
Other conditional lower bounds for the 2P and FP problems but from the integer 3SUM conjecture
were obtained by Kopelowitz, et al.~\cite{KPPx,KPPc}.

The above results are conditional.
Furthermore, they tell us nothing about the
complexity of the space usage, $S(n)$,  versus the query time which is what we
are truly interested in for data structure problems. 
Furthermore, even under the relatively generous assumption\footnote{There are problems,
	such as jumbled indexing~\cite{ChanL15}, where the preprocessing time is a polynomial factor larger than the space complexity. } that $P(n) = \O(S(n) n^{o(1)})$,
	the space and query lower bounds obtained from the above results have polynomial gaps compared  with
	the current best data structures.

	We need to remark that the only unconditional space lower bound is a pointer machine lower bound
	that shows with query time of $\O(\textrm{poly}(\log n) + k)$ the space must be
	$\Omega(n(\log n / \log \log n)^3)$~\cite{FischerGKLMSV12}. 
	However this bound is very far away from the upper bounds (and also much lower than our lower bounds).

\para{Our Results.}
We show that all the known data structures for 2P and FP problems are optimal within $n^{o(1)}$ factors,
at least in the pointer machine model of computation:
Consider a pointer machine data structure that uses $S(n)$ space and can report all the $t$ documents that
match a given 2P query (or FP query, or 2FP query) in $Q(n) + \O(P_1 + P_2 + t)$ time.
We prove that we must have $S(n)Q(n) = \Omega\left(n^{2-o(1)}\right)$.
As a corollary of our lower bound construction, we also obtain that any data structure that
can answer 2P query (or FP query, or 2FP query) in
$\O( (nt)^{1/2 - \varepsilon} + t)$ time, for any fixed constant $\varepsilon > 0$ must use
super-linear space.
As a side result, we show that surprisingly, the counting variant of the problem is in fact easier:
in the semi-group model of computation (see \cite{Chazelle90b} or Appendix \ref{appendix:semi_group} 
for a description of the semi-group model), we prove that 
we must have 
$S(n)Q^2(n) = \Omega(n^2 / \log^{4}n)$.

\subsubsection{Set Intersection Queries.}
The interest in set intersection problems has grown considerably in recent years and variants of the set intersection problem
have appeared in many different contexts. 
Here, we work with the following variants of the problem.
The input is $m$ sets, $S_1, \cdots, S_m$ of total size $n$, from a universe $\U$ and 
queries are a pairs of indices $i$ and $j$.
The \emph{decision variant} asks whether $S_i \cap S_j = \emptyset$.
The \emph{reporting variant} asks for all the elements in $S_i \cap S_j$. 
In the \emph{counting variant}, the result should be $|S_i \cap S_j|$. 
In the \emph{searching variant}, the input also includes a weight function $w: \U \rightarrow G$ where 
$G$ is a semi-group. The query asks for $\sum_{x \in S_i \cap S_j} w(x)$.

\para{Previous Results.} The set intersection queries have appeared in many different formulations and
variants (e.g., see~\cite{CPoracle, PRoracle,PRToracle,popular,KPPx,CohenP10}).
The most prominent conjecture is that answering the decision variant 
with constant query time requires $\Omega(n^{2-o(1)})$ space~(see~\cite{PRoracle} for more details).
For the reporting variant,
Cohen and Porat~\cite{CohenP10} presented a data structure
that uses linear space and answers queries in $\O(\sqrt{nt})$ time, where $t$ is the output size.
They also presented a linear-space data structure for the searching variant that  answers queries in $\O(\sqrt{n})$ time. 
In~\cite{CTSEdycon} the authors study set intersection queries because of connections to  dynamic graph connectivity problems.
They offer very similar bounds to those offered by Cohen and Porat (with a $\sqrt{\log n}$ factor worse space and query times) 
but they allow updates in $\O(\sqrt{n \log n})$ time. 
It is commonly believed
that all set intersection queries are hard.  Explicitly stated conjectures on set intersection problems are used 
to obtain conditional lower bounds for problems such as
distance oracles~\cite{PRoracle, PRToracle, CPoracle} while other well-known conjectures, 
such as the 3SUM conjecture, can be used to show conditional lower bounds for variants of set intersection problems~\cite{KPPx,popular}.
For other variants see~\cite{fastunion,Pdylb,DietzMRU95}.

Dietz et al.~\cite{DietzMRU95} considered set intersection queries in the semi-group model
(a.k.a \textit{the arithmetic model}) and they presented near optimal dynamic and offline lower bounds.
They proved that given a sequence of $n$ updates and $q$ queries one must spend $\Omega(q + n\sqrt{q})$
time (ignoring polylog factors); in the offline version a sequence of $n$ insertions and $q$ queries
are used but in the dynamic version, the lower bound applies to a dynamic data structure that allows
insertion and deletion of points, as well as set intersection queries.

\para{Our Results.}
Our lower bounds for the 2P problem easily extend to the SI problem.
Perhaps the most interesting revelation here is that the searching variant is much easier than the reporting
variant ($S(n)Q(n) = \Omega(n^{2-o(1)})$ for reporting versus $S(n)Q(n)^2 = \Omega(n^{2-o(1)})$ for searching)
\footnote{ While we believe our lower bounds are certainly interesting
(particularly since they separate counting and reporting variants), they do not
make any progress towards resolving the status of the decision version which is
considered a major open problem.}.  Based on this, we make another conjecture
that even in the RAM model, reporting the elements in $S_i \cap S_j$ for two
given query indices $i$ and $j$, in $\O(n^{1-\varepsilon} + |S_i \cap S_j|)$
time, for any fixed constant $\varepsilon > 0$, requires $\omega(n)$ space.
Such a separation between counting and reporting is a rare phenomenon with
often counting being the difficult variant.

Observe that conditional lower bounds based on the Boolean Matrix Multiplication or the integer 3SUM 
conjectures have limitations in  distinguishing
between the counting and the reporting variants:
For example, consider the framework of Larsen et al.~\cite{LarsenMNT14} and for
the best outcome, assume $P(n) = S(n) n^{o(1)}$ and also that
boolean matrix multiplication requires $\Omega(n^{3-o(1)})$ time; 
then their framework  yields that $S(n) + n Q(n) = \Omega(n^{3/2-o(1)})$.  
When $Q(n) = \Theta(n^{2/3})$ this does not rule out the possibility of having $S(n) = \O(n)$
(in fact the counting variant \emph{can} be solved with linear space).
However, our lower bound shows that even with $Q(n)= \Theta(n^{2/3})$ the reporting
variant requires $\Omega(n^{4/3-o(1)})$ space.

\subsubsection{Wild Card Indexing}
We study the document indexing problem of matching with ``don't cares'', also
known as \emph{wild card matching} or \emph{indexing} (WCI).  The setup is the
same as for the 2P problem, except a query is a single pattern but it also
contains wild cards denoted by a special character ``$*$''. 
A ``$*$'' matches any \emph{one} character.  The task is
to report all documents matched by the query.  This is a well-studied problem
from the upper bound perspective and there are generally two variations: either
the maximum number of wild cards is bounded or it supports any number of wild
cards.  We consider the bounded version where patterns contain up to $\kappa$
wild cards and $\kappa$ is known in advance by the data structure. 

\para{Previous Results.}
Cole et al.~\cite{ColeGL04} presented a
data structure that uses $\O(n \frac{\log^\kappa n}{\kappa!})$ words of space and answers queries in $\O(P + 2^\kappa\log
\log n + t)$, where $t$ is the number of occurrences and $P$ is the length of the query pattern.
The space has been improved to $\O(n\log^{\kappa+\varepsilon}n)$ bits while keeping the same query time~\cite{lessspace}.
Another improvement came as a trade-off that increased query time to $\O(P + \beta^j \log \log n + t)$ and
reduced space usage $\O(n\log n\log_\beta^{\kappa-1} n)$ for any $2 \le \beta \le
\sigma$ where $\sigma$ is the alphabet size and $j \le \kappa$ is the number of wild cards in the query \cite{BilleGVV12}.  
In the same paper an alternate solution with $\O(P + t)$ query time and $\O(n\sigma^{\kappa^2}\log^\kappa n \log n)$ 
space usage was also presented.
Other results have focused on reducing space while increasing the query time but their query time now depends on the
alphabet size, e.g., in~\cite{wci14} the authors provide a data structure with $\O(n\log^{\varepsilon}n\log \sigma)$
space but with the query time of $\O(m + \sigma^\kappa\sqrt{\log\log\log n } + t)$.

From these bounds we note three things, first that all solutions have some exponential dependency on $\kappa$ and
second the alternate solution by Bille et al.~\cite{BilleGVV12} 
has an odd $\sigma^{\kappa^2}$ factor, which is exponential on a
\emph{quadratic} function of $\kappa$ as opposed to being exponential on a
linear function of $\kappa$ (such as $2^\kappa$, $\sigma^\kappa$, or
$\log^\kappa n$).
Third, when the query time is forced to be independent of $\sigma$, 
there is a discrepancy between query and space when varying $\kappa$:
Increasing $\kappa$ (when it is small) by one increases the space by approximately a $\log n$ factor,
while to increase the query time by a $\log n$ factor, $\kappa$ needs to be increased by $\log \log n$.
Based on the third point, it is quite likely that unlike SI or 2P problems, WCI does not have a simple trade-off curve.

Other results that are not directly comparable to ours include the following:
One is an $\O(n)$ space index with $\O(P + \alpha)$ query time \cite{RahmanI07}
where $\alpha$ is the number of occurrences of all the subpatterns separated by
wild card characters. 
Note that $\alpha$ could be much larger than $t$ and in fact, this can result in a worst case linear query time,
even with small values of $t$. 
Nonetheless, it could perform reasonably well in practice.
Two, there are lower bounds for the \emph{partial match} problem, which is a related problem
(see \cite{Patrascu11} for more details).

\para{Our Results.}
For WCI with $\kappa$ wild cards, we prove two results, both in the pointer machine model.
In summary,  we show that the exponential dependency of space complexity or query time on $\kappa$ 
generally cannot be avoided.

As our first result and for a binary alphabet ($\sigma=2$), 
we prove that for $3\sqrt{\log n} \le \kappa = o(\frac{\log n}{\log\log n})$, any data
structure that answers queries in $\O(2^{\kappa/2} + P + t)$ time must consume
$n^{1+\Theta(1/\log \kappa)}$ space. 
This result rules out the possibility of lowering the $\sigma^{\kappa^2}$ factor in the alternate solution
offered by Bille et al.~\cite{BilleGVV12}, over all values of $\kappa$, to $\sigma^{\kappa^{2-\varepsilon}}$
for any constant $\varepsilon > 0$:
by setting $\kappa= 3\sqrt{\log n}$ (and
$\sigma=2$), such a bound will be much smaller than our  space lower bound
(essentially involves comparing $2^{\O(\log^{1-\varepsilon/2}n)}$ factor to
 a  $2^{\Omega(\log n / \log\log n)}$ factor).  While this does not rule out
improving the space bound for small values of $\kappa$, it shows that the
exponential dependency on $\kappa^2$ is almost tight at least in a
particular point in the range of parameters.

As our second result, we prove that answering WCI queries in
$Q(n,\kappa) + \O(P+t)$ time requires
$\Omega( \frac{n}{\kappa} \Theta\left(\frac{ \log_{Q(n,\kappa)} n}{\kappa}\right)^{\kappa-1})$ space,
as long as $\kappa < \log_{Q(n,\kappa)} n$.
Note that this query time is assumed to be independent of $\sigma$.
This result also has a number of consequences.
One, it shows that 
any data structure with query time of $\O( \log^{\O(1)} n + P + t)$ requires 
$\Omega(\frac{n}{\kappa} \left( \frac{\log n}{\kappa\log\log n}\right)^{\kappa-1})$ space.
Note that this is rather close to the space complexity of the data structure of 
Cole et al.~\cite{ColeGL04} ($2^{\O(\kappa)} (\log\log n)^\kappa$ factor away).
In other words, the data structure of Cole et al. cannot be significantly improved both in space complexity and query time;
e.g., it is impossible to answer WCI queries in $\O( \log^{\O(1)} n + P + t)$ time
using $\O\left(n \left(\frac{\log n}{\kappa})^{\kappa(1-\varepsilon)} \right)\right)$ space, for any
constant $\varepsilon > 0$.

Two, $\kappa=2$ is the smallest value where
linear space becomes impossible with polylogarithmic query time.
This is very nice since $\kappa=1$ can be solved with almost linear space~\cite{lessspace}.
Furthermore this shows that the increase by a $\log n$ factor in the space complexity
for every increase of $\kappa$ is necessary (for small values of $\kappa$).

Three, we can combine our second result with our first result when $\kappa = 3 \sqrt{\log n}$.
As discussed before, our first result rules out fast queries (e.g., when $Q(n) \le 2^{\kappa/2}$),
unless the data structure uses large amounts of space, so consider the case when $Q(n) = \O(2^\kappa)$. 
In this case, we can rule out  lowering the space usage of the data structure of Cole et al. to 
$\Omega\left( n \left( \frac{\log n}{\kappa}  \right)^{\kappa^{1-\varepsilon}}\right)$ for any
constant $\varepsilon > 0$:
apply our second lower bound 
with fewer wild cards, specifically, with $\kappa' = \kappa^{1-\delta}$ wild cards, for a small enough constant $\delta > 0$
that depends on $\varepsilon$.
Observe that $\kappa' < \log_{Q(n)} n$, so the second result lower bounds the space by 
$\Omega\left( \frac{n}{\kappa'} \Theta\left( \log^{ \frac{\delta}{2} \kappa^{1-\delta}} n \right)\right)$,
which for a sufficiently small $\delta$ is greater than~$\Omega\left( n \left( \frac{\log n}{\kappa}  \right)^{\kappa^{1-\varepsilon}}\right)$.

As mentioned in the beginning, our results show that the exponential dependency of space or query time on $\kappa$ cannot be improved in general.
Furthermore, at a particular point in the range of parameters (when $\kappa = 3 \sqrt{\log n}$), 
all the known exponential dependencies on $\kappa$ are almost tight and cannot be lowered to an exponential dependency on $\kappa^{1-\varepsilon}$
(or on $\kappa^{2-\varepsilon}$ for the alternate solution) for any constant $\varepsilon > 0$.
Nonetheless, there are still gaps between our lower bounds and the known data structures.
We believe it is quite likely both our lower bounds and the existing data structures can be improved to narrow the gap. 

\subsubsection{Gapped Pattern Indexing}
A $\kappa$-gapped pattern is a pattern $p_1\{\alpha_1, \beta_1\}
p_2\{\alpha_2,\beta_2\}, \cdots, p_\kappa\{\alpha_\kappa,\beta_\kappa\}p_{\kappa+1}$
where $\alpha_i$ and $\beta_i$, $1 \le i \le \kappa$ are integers, and each $p_i$, $1 \le i \le \kappa+1$, 
is a string over an alphabet of size $\sigma$.
Such a $\kappa$-gapped pattern matches a documents in which one can find one occurrence of every $p_i$ such that
there are at least $\beta_i$ and at most $\alpha_i$ characters between the occurrence of $p_i$ and the occurrence of 
$p_{i+1}$, $1 \le i \le \kappa$.

\para{Previous Results.}
The gapped pattern indexing is often considered both in the online and the offline version 
(e.g., see~\cite{onegap,unevengaps}).
However, the result most relevant to us is \cite{mindthegap}, where they consider the following data structure problem:
given a set of $1$-gapped patterns of total size $n$, where all the patterns are in the form of
$p_1\{\alpha, \beta\}p_2$, store them in a data structure such that
given a document of length $D$ at the query time, 
one can find all the gapped patterns that match the query document (in general
we call this the \emph{$\kappa$-gapped pattern indexing ($\kappa$-GPI)} problem).
In~\cite{mindthegap}, the authors give a number of upper bounds and conditional lower bounds for the problem.
Among a number of results, they can build a data structure of linear size that can answer queries in
$\tilde{\O}(D (\beta-\alpha) + t)$ where $\tilde{\O}$ notation hides polylogarithmic factors and
$t$ is the output size.
For the lower bound and among a number of results, 
they can show that with linear space $\Omega(D (\beta-\alpha)^{1-o(1)} + t)$ query time is
needed.

\para{Our Results.}
We consider the general $\kappa$-GPI problem where $\beta_i -\alpha_i = \gamma$
for all $1 \le i \le \kappa$, and a prove lower bound that is surprisingly very high: 
any pointer machine data structure that can answer queries in 
$o( D \gamma^{\kappa} /( 2\log D)^{\kappa+1})$ time must use \emph{super polynomial} space
of $n^{\Omega(\log^{1/(\kappa+1)}n)}$.
By construction, this result also holds if the input is a set of $\kappa+1$ patterns
where they all need to match the query document (regardless of their order and size of the gaps).
In this case, answering queries in $o( D^{\kappa+1} /( 2\log D)^{\kappa+1})$ requires the same
super-polynomial space.
Note that in this case $\kappa=1$ is the ``dual'' of the 2P problem:  store a set of two-patterns in
data structure such that given a query document, we can output the subset of two-patterns that
match the document.

\subsection{Technical Preliminaries}

\para{The Pointer Machine Model~\cite{TarjanPM}.}
This models data structures that solely use
pointers to access memory locations (e.g., any tree-based data structure)\footnote{
    Many of the known solutions for various indexing problems use tree structures, such as suffix
trees or wavelet trees.
While sometimes trees are can be encoded using bit-vectors with rank/select structures on top,  
these tricks can only save polylogarithmic factors in space and query times.}.
We focus on a variant that is the popular 
choice when proving lower bounds~\cite{Chazelle90a}.
Consider an abstract ``reporting'' problem where the input includes a universe set 
$\U$ where each query $q$ reports a subset, $q_\U$, of $\U$. 
The data structure is modelled as a directed graph $G$ 
with outdegree two (and a root node $r(G)$) where each vertex represents
one memory cell and each memory cell can store one element of $\U$; 
edges between vertices represent pointers between the memory cells. 
All other information
can be stored and accessed for free by the data structure. The only requirement is that given 
the query $q$, the data structure must start at $r(G)$ and explore a connected subgraph of $G$ and find its way 
to  vertices of $G$ that store the elements of $q_\U$. 
The size of the subgraph explored is a lower bound on the query time and the size of $G$ is a lower bound on the
space.

\para{An important remark.} The pointer-machine can be used to prove lower bounds for data structures with query time
$Q(n) + \O(t)$ where $t$ is the output size and $Q(n)$ is ``search overhead''. 
Since we can simulate any RAM algorithm on a pointer-machine with $\log n$ factor slow down,
we cannot hope to get high unconditional lower bounds if we assume the query time is $Q(n) + \O(t \log n)$,
since that would automatically imply RAM lower bounds for data structures with 
$Q(n)/\log n + \O(t)$ query time, something that is hopelessly impossible with current techniques.
However, when restricted to query time of $Q(n) + \O(t)$, the pointer-machine model is 
an attractive choice and it has an impressive track record of proving lower bounds that match the best 
known data structures up to very small factors, even when compared to RAM data structures; we mention
two prominent examples here.
For the fundamental \textit{simplex range reporting} problem, all known solutions are pointer-machine
data structures~\cite{Matousek.RS.hierarchal.93,Chan.ParTree,Chazelle.et.al.simplex.RS} and the
known pointer machine lower bounds match these up to an $n^{o(1)}$ factor~\cite{Afshani12,ChazelleR95}.
One can argue that it is difficult to use the power of RAM for simplex range reporting problem.
However, for the other fundamental \textit{orthogonal range reporting}, where it is easy to do
various RAM tricks, the best RAM data structures save at most a $\log n$ factor compared to the best
known pointer machine solutions~(e.g., see \cite{aal10,aal12,clp11}).
Also, where cell-probe lower bounds cannot break the $\log n$ query barrier, 
very high lower bounds are known for the orthogonal range reporting problem in the pointer machine
model~\cite{aal10,aal12,Chazelle90a}. 

\para{Known Frameworks.}
The fundamental limitation in the pointer machine model is that starting from a memory cell $v$,
one can visit at most $2^{\ell}$ other memory cells using $\ell$ pointer navigations.
There are two known methods that exploit this limitation and build two different frameworks for
proving lower bounds. 

The first lower bound framework was given by Bernard Chazelle~\cite{Chazelle90a,ChazelleR95}.
However, we will need a slightly improved version of his framework that is presented in the following lemma;
essentially, we need a slightly tighter analysis on a parameter that was originally intended as a large constant.
Due to lack of space we defer the proof to the Appendix~\ref{sec:chazelle}.

\begin{restatable}{theorem}{thmlb}\label{thm:lblemma}
Let $\mathcal{U}$ be a set of $n$ input elements and $\Q$ a set of queries where each query outputs
a subset of $\mathcal{U}$. Assume there exists a data structure that uses $S(n)$ space
and answers each query in $Q(n) + \alpha k$ time, where $k$ is the output size.
Assume (i) the output size of any query $q \in \Q$, denoted by $|\U \cap q|$, is at least $t$, for a parameter $t \ge Q(n)$
and (ii) for integers $\ell$ and $\beta$, and indices, $i_1, \cdots, i_\ell$,
  $|\U~\cap~q_{i_1}~\cap~\cdots~\cap~q_{i_{\ell}}|~<~\beta$.  Then,
   $ S(n) = \Omega\left(\frac{|\Q|t}{\ell \cdot 2^{O(\alpha \beta)}}\right)$.
\end{restatable}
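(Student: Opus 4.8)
The plan is to use Chazelle's pointer‑machine framework, but to track every hidden constant so that the ``separation parameter'' is permitted to be the super‑constant quantity $\Theta(\alpha\beta)$ rather than the large constant Chazelle could afford. We may assume $\alpha\ge 1$ and $t\ge\beta$. Fix a query $q$ with output size $k:=|\U\cap q|\ge t\ge Q(n)$. By definition of the model, answering $q$ forces the data structure to explore a connected subgraph $T_q$ of $G$ that contains the root $r(G)$, has at most $Q(n)+\alpha k\le 2\alpha k$ vertices (this is the only place we use $t\ge Q(n)$), and contains, for every element of $\U\cap q$, a memory cell storing it; since distinct elements occupy distinct cells, $T_q$ contains a set of $\ge k$ distinct ``marked'' cells, storing $k$ distinct elements of $\U\cap q$. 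Fix a BFS spanning tree $\tau_q$ of $T_q$ rooted at $r(G)$; since $G$ has outdegree at most $2$, so does $\tau_q$.

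Next I would extract many small canonical subtrees. Inside $\tau_q$ there are $\ge k$ marked cells among at most $2\alpha k$ cells. A standard splitting argument on the binary tree $\tau_q$ (repeatedly cut off the subtree rooted at a deepest vertex whose subtree contains $\ge\beta$ marked cells) produces $\Omega(k/\beta)$ vertex‑disjoint subtrees, each containing between $\beta$ and $2\beta$ marked cells. As these subtrees are disjoint and $|\tau_q|\le 2\alpha k$, their average size is $\O(\alpha\beta)$, so at least half of them have size at most $c\alpha\beta$ for a fixed constant $c$. Keeping only those, we obtain for each query $q$ a family of $\ge t/(c'\beta)$ subtrees, each of size at most $c\alpha\beta$ and each containing cells that store $\ge\beta$ distinct elements of $\U\cap q$; let $P_q$ be the set of roots of these subtrees, so $|P_q|\ge t/(c'\beta)$.

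The final step is a charging argument. Summing over queries, $\sum_q |P_q|\ge |\Q|\,t/(c'\beta)$, hence $\sum_w e_w\ge |\Q|\,t/(c'\beta)$ where $e_w$ is the number of queries $q$ with $w\in P_q$. The point is that $e_w$ must be small for every cell $w$. Each time $w\in P_q$, the associated subtree is a connected subgraph of $G$ rooted at $w$ with at most $c\alpha\beta$ vertices, and an outdegree‑at‑most‑$2$ graph has only $2^{O(\alpha\beta)}$ such subgraphs; moreover, once that subtree is fixed, the query distinguishes a $\beta$‑element subset $B_q\subseteq\U\cap q$ among the $\le c\alpha\beta$ elements stored in it, and there are at most $\binom{c\alpha\beta}{\beta}=2^{O(\alpha\beta)}$ such subsets. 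Two pigeonhole steps then show that if $e_w\ge \ell\cdot 2^{O(\alpha\beta)}$ there exist indices $i_1,\dots,i_\ell$ whose queries share the same subtree rooted at $w$ \emph{and} the same $\beta$‑element subset $B$; but then $B\subseteq \U\cap q_{i_1}\cap\cdots\cap q_{i_\ell}$, contradicting assumption (ii). Hence $e_w< \ell\cdot 2^{O(\alpha\beta)}$ for all $w$, so $|\Q|\,t/(c'\beta)\le \sum_w e_w< S(n)\cdot\ell\cdot 2^{O(\alpha\beta)}$, which yields $S(n)=\Omega(|\Q|\,t/(\ell\,\beta\,2^{O(\alpha\beta)}))=\Omega(|\Q|\,t/(\ell\,2^{O(\alpha\beta)}))$ since $\beta\le 2^{O(\alpha\beta)}$.

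The main obstacle — and the reason we cannot simply cite the framework — is making the subtree extraction and the subgraph count above work with a super‑constant separation parameter: in Chazelle's original argument the quantity playing the role of $c\alpha\beta$ is a fixed large constant, so the splitting argument, the ``at least half are small'' averaging, and the $2^{O(\cdot)}$ subgraph and subset counts are all applied with generous slack and with the constant absorbed into $O(1)$. Here one must check that each of these estimates still holds once that quantity grows, and in particular that all the hidden exponents are genuinely $O(\alpha\beta)$ (for instance, not $O(\alpha^2\beta)$); this is the ``slightly tighter analysis'' the statement alludes to, and I would verify it first.
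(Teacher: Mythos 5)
Your proposal is correct and follows essentially the same route as the paper's own proof: decompose each query's explored tree into pieces containing $\Theta(\beta)$ output cells and $\O(\alpha\beta)$ vertices, bound the number of canonical (subtree, $\beta$-subset) configurations per root by $2^{O(\alpha\beta)}\binom{O(\alpha\beta)}{\beta}=2^{O(\alpha\beta)}$, and use assumption (ii) to cap the multiplicity at $\ell$ before charging to space. The paper simply cites the partition fact from Afshani's Lemma 2 where you rederive the splitting and averaging by hand, and it likewise absorbs the extra $\beta$ factor into the $2^{O(\alpha\beta)}$ term, so no substantive difference remains.
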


The second framework is due to Afshani~\cite{Afshani12} and it is designed  for 
``geometric stabbing problems'': given an input set of $n$ geometric regions, the goal is store them
in a data structure such that given a query point $q$, one can output the subset of regions that contain $q$.
The framework is summarized below.
\begin{theorem}\cite{Afshani12}\label{thm:dual}
	Assume one can construct a set of $n$ geometric regions inside the $d$-dimensional 
	unit cube such that (i) every point of the unit cube is contained in at least $t$
	regions \footnote{In~\cite{Afshani12} this is
stated as ``exactly $t$ ranges'' but the proof works with only a lower bound on $t$.}, and (ii) the volume of the intersection of every $\beta$ regions is
	at most $v$, for some parameters $\beta$, $t$, and $v$. 
	Then, for any pointer-machine data structure that uses $S(n)$ space and can answer geometric stabbing queries on the above input
    in time $g(n) + O(k)$, where $k$ is the output size and $g(\cdot)$ is some
    increasing function, if $g(n) \le t$ then $S(n) = \Omega( t
    v^{-1}2^{-O(\beta)})$.
\end{theorem}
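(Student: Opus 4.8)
The plan is to adapt Chazelle's pointer-machine technique (in the form developed in \cite{Afshani12}) to the volume setting: the bound $v$ on the volume of a $\beta$-wise intersection plays the role that the combinatorial bound $\beta$ plays in Theorem~\ref{thm:lblemma}, and a direct integration over $[0,1]^d$ replaces the usual union bound over a finite sample of queries, which is what keeps the loss down to the clean $2^{-O(\beta)}$ factor. First I would fix the data-structure graph $G$, with root $r$ and $S(n)$ nodes, each storing one region. For a query point $q \in [0,1]^d$, which is covered by some $k_q \ge t$ regions, the algorithm must traverse a connected subgraph $T_q \ni r$ that contains at least $k_q$ distinct cells storing those regions (one per reported region); since the query time is $g(n) + O(k_q)$ and $g(n) \le t \le k_q$, we get $|T_q| = O(k_q)$. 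Thus every point of the cube yields a connected ``search subgraph'' whose size is only a constant factor more than its number of ``marked'' cells.

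The heart of the argument is a carving lemma: if $B \ni r$ is a connected subgraph of $G$ with $|B| \le c m$ containing $\ge m$ marked cells, then $B$ can be cut into $\Omega(m/\beta)$ pairwise cell-disjoint ``clusters,'' each a connected subgraph of $G$ of size $O(\beta)$ containing at least $\beta$ distinct marked regions. I would prove this by taking the search arborescence of $B$ rooted at $r$ and repeatedly removing the subtree hanging from the \emph{deepest} vertex whose subtree contains either $\ge 2\beta$ distinct marked regions or $\ge c'\beta$ cells in total, for a constant $c' \gg c$; by minimality (and out-degree two) each removed subtree is $O(\beta)$ in both measures. The marked cells form a constant fraction of $|B|$, so the total size of ``unmarked chains'' is $O(m)$, and hence only a small constant fraction of the marked cells can end up in subtrees removed because of the size trigger; the remaining $\Omega(m/\beta)$ ``good'' clusters each carry $\ge\beta$ distinct marked regions. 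Applying this with $m=k_q$ to each $T_q$, every point $q$ produces $\Omega(t/\beta)$ good clusters, each a rooted subtree of $G$ of size $O(\beta)$ whose $\ge\beta$ stored regions all contain $q$. (These clusters are scattered through $G$ and do \emph{not} all contain $r$.)

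Finally I would double count by integration. Define a \emph{token} to be a pair (good cluster $C$, a choice of $\beta$ distinct marked regions stored in $C$). Since a rooted subtree of $G$ of size $O(\beta)$ is determined by its root node ($S(n)$ choices) and its shape ($2^{O(\beta)}$ choices), and the sub-choice of $\beta$ regions costs another $2^{O(\beta)}$ factor, there are at most $S(n)\,2^{O(\beta)}$ tokens in all. A token $(C,R_1,\dots,R_\beta)$ can be generated by a point $q$ only if $q \in R_1 \cap \dots \cap R_\beta$, a set of volume $\le v$ by hypothesis~(ii). Hence $\int_{[0,1]^d} \#\{\text{tokens generated by }q\}\,dq \le S(n)\,2^{O(\beta)}\,v$. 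On the other hand, each $q$ generates $\Omega(t/\beta)$ distinct tokens (one per good cluster, its clusters being cell-disjoint), and $[0,1]^d$ has unit volume, so the same integral is $\Omega(t/\beta)$. Comparing the two bounds and absorbing the $1/\beta$ into $2^{-O(\beta)}$ gives $S(n) = \Omega(t\, v^{-1}\, 2^{-O(\beta)})$.

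I expect the carving lemma to be the main obstacle — in particular, controlling the \emph{total} size of each cluster rather than only its number of marked cells, since a subtree with $\Theta(\beta)$ marked cells can otherwise be arbitrarily large. This is exactly why one must carve by the combined (marked-or-total) threshold and then bound the marked cells lost to size-triggered clusters, which is the one place the density estimate $|T_q| = O(k_q)$ (equivalently, $g(n) \le t$) is used. A smaller point is to confirm that every cluster is a rooted subtree of $G$, so that the count of possible clusters is $S(n)\,2^{O(\beta)}$ and not larger, and to dispatch the degenerate regime where $\beta$ is comparable to $t$, in which the claimed bound is essentially trivial.
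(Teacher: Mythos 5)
Your proposal is correct and takes essentially the standard route: the paper itself does not reprove Theorem~\ref{thm:dual} (it cites \cite{Afshani12}), but your argument is exactly the known one, built from the same ingredients as both Afshani's proof and the paper's Appendix~\ref{sec:chazelle} proof of Theorem~\ref{thm:lblemma} --- the $O(k_q)$ bound on the explored subgraph from $g(n)\le t$, the partition of it into connected pieces of size $O(\beta)$ each holding $\Theta(\beta)$ marked cells (the Fact quoted from \cite{Afshani12}), the $S(n)\,2^{O(\beta)}$ count of possible rooted pieces, and a double count in which the finite union bound of the primal framework is replaced by integration over the unit cube against the volume bound $v$. So there is nothing to flag beyond noting that your ``carving lemma'' is precisely \cite{Afshani12}, Lemma~2, and your outlined proof of it (deepest-vertex removal with a marked-or-size trigger and charging the size-triggered losses to $|T_q|=O(k_q)$) is the intended one.
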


These two frameworks are not easily comparable. 
In fact, for many constructions, often only one of them gives a non-trivial lower bound.
Furthermore, as remarked by Afshani~\cite{Afshani12}, Theorem~\ref{thm:dual} does not need to be operated in the
$d$-dimensional unit cube and in fact any measure could be substituted instead of the $d$-dimensional
Lebesgue measure.

\para{Our Techniques.}
Our first technical contribution is to use Theorem~\ref{thm:dual} in a non-geometric setting by 
representing queries as abstract points under a discrete measure and each input object as a range
that contains all the matching query points.
Our lower bound for the $\kappa$-GPI problem and one of our WCI lower bounds are proved in this way.
The second technical contribution is actually building and analyzing proper input and query sets to be used in the 
lower bound frameworks.
In general, this is not easy and in fact for some problems it is highly challenging\footnote{
A good example is the classical halfspace range reporting problem where constructing
 proper input and query sets has been a longstanding open problem; the current best lower bound
 uses a highly inefficient reduction from the simple range reporting problem~\cite{Afshani12}.}. 
Also see Section~\ref{sec:conc} (Conclusions) for a list of open problems.

In the rest of this article, we present the technical details behind
our $\kappa$-GPI lower bound and most of the details of our 
first WCI lower bound. 
Due to lack of space, the rest of the technical details have been moved to the appendix.

We begin with the $\kappa$-GPI problem since it turns out for this particular problem we can
get away with a simple deterministic construction. 
For WCI, we need a more complicated randomized construction to get the best result and thus it is presented
next.



\section{Gapped Pattern Lower Bound}
\label{sec:gapped}

In this section we deal with the data structure version of the gapped pattern problem.
The input is a collection of $\kappa$-gapped patterns (typically called a
dictionary), with total length $n$ (in characters).  The goal is to store the
input in a data structure such that given a document of size $D$, one can
report all the input gapped patterns that match the query.  
We focus on special $\kappa$-gapped patterns that we call \emph{standard}: 
a standard $\kappa$-gapped pattern in the form of $p_1\left\{ 0,\gamma
\right\}p_2\left\{ 0,\gamma \right\}\dots\left\{ 0,\gamma
\right\}p_{\kappa+1}$ where each $p_i$ is a string (which we call a \emph{subpattern}) and $\gamma$ is an integer.

\begin{restatable}{theorem}{thmgapped}\label{thm:gapped}
    For $\kappa = o(\frac{\log\log n}{\log\log\log n})$ and in the pointer machine model,
    answering $\kappa$-GPI queries in
  $o\left(\frac{D\gamma^{\kappa}}{(2+2\log D)^{\kappa+1}}\right) + O(t)$ time
  requires $n^{\Omega(\log^{1/(2\kappa)} n)}$ space.
\end{restatable}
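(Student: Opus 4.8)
The plan is to apply Theorem~\ref{thm:dual} in a purely combinatorial (non-geometric) setting, as advertised in the technical overview. I would let the ``points'' be a large family $\V$ of query documents, each of length $D$, under the uniform probability measure on $\V$ (so the total measure is $1$ and a single document has measure $1/|\V|$), and let the ``regions'' be the standard $\kappa$-gapped patterns of a dictionary $\D$ of total length $n$, where the region attached to a pattern $p$ is the set of documents of $\V$ matched by $p$. Under this translation a pointer-machine $\kappa$-GPI data structure with query time $g(n)+O(t)$ is exactly a stabbing data structure, so by Theorem~\ref{thm:dual} it suffices to build $\D$ and $\V$ so that (i) every document of $\V$ is matched by at least $t$ patterns of $\D$, and (ii) for some $\beta$, every $\beta$ patterns of $\D$ are matched in common by \emph{at most one} document, so that the ``volume'' $v$ in Theorem~\ref{thm:dual} satisfies $v\le 1/|\V|$. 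The theorem then gives $S(n)=\Omega\!\left(t\,|\V|\,2^{-O(\beta)}\right)$ as soon as $g(n)\le t$, so the task reduces to making $t$ at least as large as the forbidden query time $D\gamma^{\kappa}/(2+2\log D)^{\kappa+1}$, keeping $\beta$ small, and blowing $|\V|$ up so that $t\,|\V|\,2^{-O(\beta)}=n^{\Omega(\log^{1/(2\kappa)}n)}$.

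For the construction I would encode each document of $\V$ as a string of $\Theta(D/\log D)$ ``gadget'' blocks, each block being $\Theta(\log D)$ symbols over a constant-size alphabet and naming one of $\mathrm{poly}(D)$ gadget types, so that a document spells out a point of a structured product domain; I would make each of the $\kappa+1$ subpatterns of a standard pattern $p_1\{0,\gamma\}p_2\{0,\gamma\}\cdots p_{\kappa+1}$ a single gadget. Such a pattern then matches a document exactly when its $\kappa+1$ gadgets occur, gadget-aligned, inside one window of $O(\kappa\gamma)$ symbols with the $\kappa$ gaps taking gadget-length-multiple values in $\{0,\dots,\gamma\}$. This built-in locality manufactures the factors of the forbidden query time: there are $\Theta(D/\log D)$ placements of the window and $\Theta\!\left((\gamma/\log D)^{\kappa}\right)$ internal gap configurations, so a single document is matched by $t=\Theta\!\left(D\gamma^{\kappa}/(\log D)^{\kappa+1}\right)$ patterns, which is exactly the target quantity. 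To pump $|\V|$ up to a super-polynomial function of $n$ while keeping $\beta$ of order $\kappa$ (up to lower-order factors), I would take the gadget-type domain and the document domain to be recursively subdivided products, arranged so that a constant number of patterns pins one product coordinate --- hence $\beta$ patterns pin the whole document --- while $n=|\D|\cdot\Theta(\kappa\log D)$ grows only polynomially in the ``base'' size; balancing the base size against $D$, $\gamma$ and $\kappa$ yields $|\V|=n^{\Omega(\log^{1/(2\kappa)}n)}$, and it is exactly this balancing that forces the hypothesis $\kappa=o(\log\log n/\log\log\log n)$ (for larger $\kappa$ the $2^{O(\beta)}$ loss, or the requirement $g(n)\le t$, swallows the gain from $|\V|$).

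The step I expect to be the main obstacle is property (ii): arranging that \emph{any} $\beta$ patterns, with $\beta$ as small as $\Theta(\kappa)$ up to polylogarithmic factors, have at most one common match. This is a genuine tension --- each pattern must carry enough addressing power to eliminate almost all surviving documents, yet the gap bound $\{0,\gamma\}$ confines all $\kappa+1$ of its gadgets to a single $O(\kappa\gamma)$-symbol window, so a pattern can only constrain a local portion of a document. Overcoming this is what forces the recursive, geometrically shrinking block layout --- so that a short window placed at the right offset still selects the intended sub-block at every scale of the recursion --- and it is also where $\beta$, $D$, $\gamma$, the per-level alphabet sizes and the total length $n$ must all be made mutually consistent. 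The remaining ingredients --- the double-counting check of the coverage bound $t$, the geometric-series estimate of $n$, and the final substitution into Theorem~\ref{thm:dual} --- are routine.
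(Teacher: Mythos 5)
Your setup is the right one --- the paper also applies Theorem~\ref{thm:dual} with queries as discrete points of measure $1/|Q|$ and each gapped pattern as the range of queries it matches, and your count of window placements times gap configurations is essentially how the paper establishes the coverage bound $t=\Theta(D\gamma^{\kappa}/(p+1)^{\kappa+1})$ (Lemma~\ref{lemma:output}). The gap is in condition (ii), which you yourself flag as the main obstacle and never construct. The target you set --- every $\beta$ patterns, with $\beta$ of order $\kappa$ up to lower-order factors, have \emph{at most one} common matching document --- is both unattainable and unnecessary. Unattainable: since every document must be matched by at least $t\ge g(n)$ patterns, each document contributes $\binom{t}{\beta}$ distinct $\beta$-subsets of patterns having it as a common match, and the ``at most one'' property makes these subsets distinct across documents, so $|\V|\binom{t}{\beta}\le\binom{|\D|}{\beta}$, i.e.\ $|\V|\le (e|\D|/t)^{\beta}\le n^{O(\beta)}$; to reach the required superpolynomial $|\V|=n^{\Omega(\log^{1/(2\kappa)}n)}$ you would need $\beta=\Omega(\log^{1/(2\kappa)}n)$, which is $\omega(\kappa\cdot\mathrm{polyloglog}\,n)$ for the stated range of $\kappa$, not $\approx\kappa$. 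Moreover, the locality you worry about is fatal to the pinning idea as stated: one can pick $\beta$ patterns that all match a single document inside the \emph{same} window (e.g.\ different gap configurations over one $O(\kappa\gamma)$-symbol stretch); they constrain only $O(\beta\kappa\log D)$ symbols and leave the rest of the document free, so once $|\V|$ is superpolynomial they cannot isolate a unique document, no recursive block layout notwithstanding.

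The paper sidesteps all of this with a much softer requirement: intersections need only be small \emph{in measure}, not singletons, and $\beta$ is taken large, not small. Concretely, subpatterns are all $2^p$ binary strings of length $p$, the input is \emph{all} $\binom{2^p}{\kappa+1}$ patterns formed by $\kappa+1$ distinct subpatterns in lexicographic order, and the queries are \emph{all} $\binom{2^p}{D/(p+1)}$ texts made of distinct, lexicographically ordered substrings of the form $\#\{0,1\}^p$. The key counting step (Lemma~\ref{lemma:intersect}) is that any $\beta$ patterns must collectively contain at least $\beta^{1/(\kappa+1)}$ distinct subpatterns --- otherwise fewer than $\beta$ patterns could be formed from them --- and every common query must contain all of these, so the intersection measure is at most $\binom{2^p-\beta^{1/(\kappa+1)}}{D/(p+1)-\beta^{1/(\kappa+1)}}/\binom{2^p}{D/(p+1)}\approx\bigl(D/((p+1)2^{p-1})\bigr)^{\beta^{1/(\kappa+1)}}$. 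With $p=\Theta(\log n/(\kappa+1))$ and $D=2^{p/2-1}$, each forced subpattern buys a factor $2^{p/2}$, and one then chooses $\beta=\Theta\bigl((\log n/(\kappa+1))^{1+1/\kappa}\bigr)$ --- i.e.\ polylogarithmically \emph{large} --- to balance the $2^{-O(\beta)}$ loss in Theorem~\ref{thm:dual}, which is exactly where the exponent $\log^{1/(2\kappa)}n$ and the restriction $\kappa=o(\log\log n/\log\log\log n)$ come from. Your proposal is missing this counting lemma and the realization that one should let $\beta$-wise intersections be huge in absolute terms (only tiny relative to $|Q|$), which is what makes a simple deterministic construction suffice.
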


To prove this theorem, we build a particular input set of standard $\kappa$-gapped patterns.
We pick the alphabet $\Sigma = \{0,1,\#\}$, and the gapped patterns only use $\{0,1\}$.
Each subpattern in the input is a binary string of length $p$.
The subpatterns in any gapped pattern are in lexicographic order, and a subpattern occurs at most once in a pattern
(i.e., no two subpatterns in a pattern are identical).
The input set, $S$, contains all possible gapped patterns obeying these conditions.
Thus, $|S| = {2^p \choose \kappa+1}$.
Each query is a text composed of concatenation of $D/(p+1)$ substrings (for simplicity, we assume $D/(p+1)$ is an integer)
and each substring is in the form '$\#\{0,1\}^p$'.
We restrict the query substrings to be in lexicographic order
and without repetitions (no two substrings in a query are identical).
The set of all query texts satisfying these constraints is denoted by~$Q$.
Thus, $|Q| = {2^p \choose D/(p+1)}$.

\begin{lemma}
  \label{lemma:output}
  For $D \ge 2\kappa\gamma$, any text $T \in Q$ matches $\Theta\left(\tfrac{D\gamma^{\kappa}}{(p+1)^{\kappa+1}}\right)$
  gapped patterns in $S$.
\end{lemma}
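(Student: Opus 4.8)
The plan is to exhibit an exact bijection between the patterns of $S$ that match a fixed query $T\in Q$ and a concrete family of increasing integer tuples, and then to count the tuples. Write $m=D/(p+1)$, so that $T=\#w_1\#w_2\cdots\#w_m$ with $w_1<w_2<\cdots<w_m$ distinct binary strings of length $p$. The first step is a structural observation: since a $\#$ occupies the first position of every length-$(p+1)$ block, the only occurrences in $T$ of a binary string of length exactly $p$ are the blocks $w_1,\dots,w_m$ themselves, each appearing exactly once and block-aligned. Hence a subpattern (a binary string of length $p$) occurs in $T$ iff it equals some $w_i$, and then at a uniquely determined position.

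The second step translates ``$P$ matches $T$'' into a combinatorial condition. Let $P=q_1\{0,\gamma\}q_2\cdots\{0,\gamma\}q_{\kappa+1}\in S$, so $q_1<\cdots<q_{\kappa+1}$. If $P$ matches $T$, each $q_j$ equals some $w_{i_j}$, and since both sequences are lexicographically sorted this forces $i_1<i_2<\cdots<i_{\kappa+1}$. Reading off block positions, the number of characters between the (now forced) occurrences of $w_{i_j}$ and $w_{i_{j+1}}$ is $(i_{j+1}-i_j-1)(p+1)+1$, so the gap constraint $\{0,\gamma\}$ is equivalent to $1\le i_{j+1}-i_j\le g$ with $g:=1+\lfloor(\gamma-1)/(p+1)\rfloor$ (the lower bound is automatic as $i_{j+1}>i_j$). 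Conversely, every increasing $(\kappa{+}1)$-tuple in $\{1,\dots,m\}$ whose consecutive gaps are all at most $g$ yields a pattern in $S$ matching $T$ — the one with subpatterns $w_{i_1},\dots,w_{i_{\kappa+1}}$ — and distinct tuples give distinct patterns since the $w_i$ are distinct. Therefore the number of patterns in $S$ matching $T$ equals the number of such tuples, which, on writing $d_j=i_{j+1}-i_j$ and summing over the admissible starting indices $i_1$, equals $\sum_{(d_1,\dots,d_\kappa)\in\{1,\dots,g\}^\kappa}\max\!\big(0,\ m-\sum_{j}d_j\big)$.

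The third step evaluates this sum. In the construction one has $\gamma\ge p+1$, so $g=\Theta(\gamma/(p+1))$, and the hypothesis $D\ge 2\kappa\gamma$ says exactly that $m=D/(p+1)$ dominates the largest possible displacement $\kappa g$ (up to constants); in particular every $\sum_j d_j$ lies between $\kappa$ and at most a small constant fraction of $m$, so no term of the sum is truncated and each is $\Theta(m)$. The sum therefore equals $\Theta(m)\cdot g^{\kappa}$, and substituting $m=D/(p+1)$ and $g=\Theta(\gamma/(p+1))$ gives $\Theta\!\big(D\gamma^{\kappa}/(p+1)^{\kappa+1}\big)$.

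The only real subtlety — what I would call the ``hard'' part, though it is really careful bookkeeping — lies in this last step: one must check that the floor in $g=1+\lfloor(\gamma-1)/(p+1)\rfloor$ genuinely keeps $g$ proportional to $\gamma/(p+1)$ in the parameter regime used (this is where $\gamma\ge p+1$ enters), and that $D\ge 2\kappa\gamma$ is strong enough to make the $\max(0,\cdot)$ truncation harmless, so that the sum stays $\Theta(m g^{\kappa})$ rather than collapsing. The structural claim about binary substrings of $T$, the bijection, and the closed form of the sum are all routine.
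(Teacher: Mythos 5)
Your proof is correct and takes essentially the same route as the paper's: both count matching patterns by choosing the starting block and then each successive subpattern position within a window of $\Theta(\gamma/(p+1))$ admissible blocks, with $D \ge 2\kappa\gamma$ ensuring the count is not truncated; you merely formalize this as a bijection to gap-vectors and an explicit sum. The only (shared, minor) looseness is at the extreme boundary (e.g.\ $D = 2\kappa\gamma$ and $\gamma = p+1$, where $\kappa g$ can equal $m$ rather than being a small constant fraction of it), so your claim that every term is $\Theta(m)$ needs the same small repair the paper implicitly uses---restrict the first index to the first half of the text (or halve the gap range)---which costs only a constant (or $2^{O(\kappa)}$) factor.
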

\begin{proof}
	Consider a text $T \in Q$. To count the number of gapped patterns that can match it,
	we count the different ways of selecting $\kappa+1$ positions that correspond to the starting positions
	of a matching subpattern. 
	Each position starts immediately after '\#', with at most $\gamma$ characters between consecutive positions.
  Since $D \ge 2\kappa\gamma$, we have $\Theta(D/p)$ choices for picking the first position, i.e., 
  the starting position of a gapped pattern matching $T$.
  After fixing the first match, there are at most $\gamma/(p+1)$ choices for the position of the next match
  between a subpattern and substring.
  However, if the first match happens in the first half of text $T$, there are always $\gamma/(p+1)$ choices
  for the position of each subpattern match (since $D \ge 2\kappa\gamma$).
  Thus, we have $\Theta\left( D\gamma^k / (p+1)^{\kappa+1} \right)$ choices. As input subpatterns
  are in lexicographically increasing order, different choices result in distinct gapped patterns
  that match the query.
\end{proof}

To apply Theorem~\ref{thm:dual}, we consider each query text $T$ in $Q$ as a ``discrete'' point
with measure $1/|Q|$. 
Thus, the total measure of $Q$ (i.e., the query space) is one and $Q$ functions as the ``unit cube'' within
the framework of Theorem~\ref{thm:dual}.
We consider an input gapped pattern $P$ in $S$ as a range that contains all the points of $Q$ that match $P$. 
Thus, to apply Theorem~\ref{thm:dual}, we need to find a lower bound on the output size of every query (condition
(i)) and an upper bound on $v$, the measure of the intersection of $\beta$ inputs (condition (ii)). 
By the above lemma, meeting the first condition is quite easy:
we pick $t=\Theta\left(\tfrac{D\gamma^{\kappa}}{(p+1)^{\kappa+1}}\right)$ (with the right hidden constant).
Later we shall see that $p = 1+2\log D$ so this can be written 
as  $ t = \Theta\left(\tfrac{D\gamma^{\kappa}}{(2+2\log D)^{\kappa+1}}\right)$
Thus, we only need to upper bound $v$  which we do below.

\begin{lemma}
  \label{lemma:intersect}
  Consider $\beta$ patterns  $P_1, \cdots, P_\beta \in S$.
  At most 
  $\binom{2^p-\beta^{1/(\kappa+1)}}{D/(p+1)-\beta^{1/(\kappa+1)}}$
  texts in $Q$ can match all patterns $P_1, \cdots, P_\beta$.
\end{lemma}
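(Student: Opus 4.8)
The plan is to reduce the statement to a purely set-theoretic counting problem. The first observation is that a subpattern, being a binary string of length $p$ over $\{0,1\}$, can occur inside a text $T\in Q$ only as one of the $D/(p+1)$ maximal $\#$-free blocks of $T$: each such block has length exactly $p$, and a length-$p$ string containing no $\#$ cannot straddle a separator, hence it occurs iff it equals some block. Identify each $T\in Q$ with the set of $D/(p+1)$ length-$p$ binary strings that are its blocks (this is a bijection, since in $Q$ the blocks are sorted and distinct). Then $T$ can match a pattern $P\in S$ only if all $\kappa+1$ subpatterns of $P$ lie in this set; in particular, if $T$ matches every one of $P_1,\dots,P_\beta$, the union $W$ of all subpatterns occurring in $P_1,\dots,P_\beta$ must be contained in $T$'s block set. (For the upper bound we need only this necessary condition and may ignore the gap and ordering constraints entirely.)

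Next I would count the texts satisfying this necessary condition. For a fixed set $W\subseteq\{0,1\}^p$ of size $w$, the number of $T\in Q$ whose block set contains $W$ equals $\binom{2^p-w}{D/(p+1)-w}$, since one must choose the remaining $D/(p+1)-w$ blocks among the $2^p-w$ strings outside $W$ (and this is $0$ when $w>D/(p+1)$, matching the convention for the binomial in the statement). Hence the number of texts matching all of $P_1,\dots,P_\beta$ is at most $\binom{2^p-|W|}{D/(p+1)-|W|}$. To finish, I would lower bound $|W|$: each $P_i$ is a distinct $(\kappa+1)$-element subset of $W$, so $\beta\le\binom{|W|}{\kappa+1}\le|W|^{\kappa+1}$, giving $|W|\ge\beta^{1/(\kappa+1)}$.

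The last step is a monotonicity argument: writing $N=2^p$ and $M=D/(p+1)$, we have $\binom{N-w}{M-w}=\binom{N-w}{N-M}$, which is nonincreasing in $w$ when $N-M\ge 0$. Since $|W|\ge\beta^{1/(\kappa+1)}$, this yields $\binom{2^p-|W|}{D/(p+1)-|W|}\le\binom{2^p-\beta^{1/(\kappa+1)}}{D/(p+1)-\beta^{1/(\kappa+1)}}$, which is the claimed bound. I do not expect a genuine obstacle here; the only points that need care are the block-alignment observation (a subpattern matches only whole blocks, so occurrences correspond exactly to set membership) and the degenerate case $|W|>D/(p+1)$, where no text matches and the asserted bound is $0$; also $\beta^{1/(\kappa+1)}$ should be read as $\lceil\beta^{1/(\kappa+1)}\rceil$ wherever integrality matters, which only strengthens the inequality.
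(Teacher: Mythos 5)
Your proof is correct and follows essentially the same route as the paper: you lower bound the number of distinct subpatterns collectively contained in $P_1,\dots,P_\beta$ by $\beta^{1/(\kappa+1)}$ via the same counting argument, and then bound the number of texts forced to contain them by the stated binomial. You merely spell out the steps the paper leaves implicit (block alignment, the exact count of texts containing a fixed set of blocks, and monotonicity in $|W|$), which is fine.
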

\begin{proof}
    Collectively, $P_1, \cdots, P_\beta$ must contain at least $r=\beta^{1/(\kappa+1)}$ distinct subpatterns:
    otherwise, we can form at most ${r
    \choose \kappa+1} < \beta$ different gapped patterns, a contradiction.
    This in turn implies that any text $T$ matching $P_1, \cdots, P_\beta$ must contain all these at least $r$
    distinct subpatterns. 
    Clearly, the number of such texts is at most 
  $\binom{2^p-\beta^{1/(\kappa+1)}}{D/(p+1)-\beta^{1/(\kappa+1)}}$.
\end{proof}

As the measure of each query in $Q$ is $1/|Q|$, by the above theorem, we have
$v \le \binom{2^p-\beta^{1/(\kappa+1)}}{D/(p+1)-\beta^{1/(\kappa+1)}}/|Q|$.
We now can apply Theorem~\ref{thm:dual}.
Each pattern in the input has $\Theta(p\kappa)$ characters and thus the total input size, $n$, is
$\Theta(p \kappa|S| ) = \Theta(p\kappa {\binom{2^p}{\kappa+1}})$.
By the framework, and Lemmata~\ref{lemma:intersect} and \ref{lemma:output}, we know that the space
usage of the data structure is at least 
\vspace{-1mm}
\begin{eqnarray*}
    \Omega\left(\tfrac{D\gamma^{\kappa}}{(p+1)^{\kappa+1}} \cdot \frac{\binom{2^p}{D/(p+1)}}{\binom{2^p-\beta^{1/(\kappa+1)}}{D/(p+1)-\beta^{1/(\kappa+1)}}} \cdot 2^{-\O(\beta)}\right)
    = \Omega\left( 1 \cdot
  \left(\frac{(p+1)2^{(p-1)}}{D}\right)^{\beta^{1/(\kappa+1)}} \cdot
  2^{-\O(\beta)} \right)
\end{eqnarray*}
where to obtain the rightmost equation we expand the binomials, simplify,
and constrain $\beta^{1/(\kappa+1)} < 2^p/2$ to lower bound $2^p-\beta^{1/(\kappa+1)}$ with $2^{p-1}$.
Now we work out the parameters. 
We know that $n = \Theta(p\kappa \binom{2^p}{\kappa+1}) = 2^{\Theta(p(\kappa+1))}$; this is satisfied by 
setting $p = c_p (\log n) / (\kappa+1)$ for some constant $c_p$.
Observe that there is an implicit constraint on $D$ and $p$:
there should be sufficient bits in the subpatterns to fill out a query with distinct subpatterns, i.e. $2^p > D/(p+1)$;
we pick $D =  n^{c_D 1/(\kappa+1)}$ for some other constant $c_D$ such that $D = 2^{p/2-1}$
and thus $(p+1)2^{p-1}/ D = 2^{p/2}$.
Using these values, the space lower bound is simplified to
\[ \Omega\left(  2^{\beta^{1/(\kappa+1)} \frac{c_p}{2}\frac{\log n}{\kappa+1} }\cdot 2^{-c_\beta \beta}\right) \]
where $c_\beta$ is another constant.
We now optimize the lower bound by picking $\beta$ such that
$c_\beta \beta = \frac{1}{2}\beta^{1/(\kappa+1)} \frac{c_p}{2}\frac{\log n}{\kappa+1}$
which solves to $\beta = \Theta( (\frac{\log n}{\kappa+1})^{1+1/\kappa})$.
Thus, for  constant $c$, the space complexity of the data structure is
\[ \Omega\left( 2^{c \left(\frac{\log n}{\kappa+1}\right)^{1+\frac{1}{\kappa}}}\right) =  
\Omega\left( 2^{\log n \cdot c \log^{\frac{1}{\kappa}}n \left(\frac{1}{\kappa+1}\right)^{1+\frac{1}{\kappa}}}\right) =  
n^{\Omega\left(  \log^{\frac{1}{\kappa}}n \left(\frac{1}{\kappa+1}\right)^{1+\frac{1}{\kappa}} \right)} = 
n^{\Omega\left(  \log^{\frac{1}{2\kappa}}n \right)}\]
where the last part follows from $\kappa = o( \log\log n / \log\log\log n)$.

\section{Wild Card Indexing}
In this section we consider the wild card indexing (WCI) problem and prove both space and query lower bounds
in the pointer machine model of computation.
Note that our query lower bound applies to an alphabet size of two (i.e., binary strings).

\subsection{The Query Lower Bound}\label{subsec:querywci}
Assume for any input set of
documents of total size $n$, we can build a data structure such that given a WCI query of length
$m$ containing $\kappa$ wild cards, we can find all the documents that match the query in 
$Q(n,\kappa) + O(m + t)$ time, where $t$ is the output size. 
Furthermore, assume $\kappa$ is a fixed parameter known by the data structure and that
the data structure consumes $S(n,\kappa)$ space. 
Our main result here is the following.

\begin{theorem}\label{thm:wciq}
	If $3 \sqrt{\log n} \le \kappa = o(\log n)$ and
	$Q(n,\kappa) = O(2^{\frac{\kappa}{2}})$, then $S(n,\kappa) \ge n^{1+\Theta(\frac{1}{\log \kappa})}$.
\end{theorem}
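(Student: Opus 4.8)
The plan is to invoke the dual framework of Theorem~\ref{thm:dual} in a non-geometric way, just as in the $\kappa$-GPI argument. Fix a ground set $[L]$, with $L$ tuned against $n$, and a target set size $s=\Theta(\kappa)$ (concretely $s=\kappa/2$). The query ``cube'' will be the uniform discrete measure on the family of $\binom{L}{\kappa}$ binary patterns of length $L$ that carry a wild card at every position of some $\kappa$-subset $W\subseteq[L]$ and the letter $0$ everywhere else, each pattern receiving mass $1/\binom{L}{\kappa}$. Each document will be the indicator string $e_S\in\{0,1\}^L$ of an $s$-subset $S\subseteq[L]$, for $S$ ranging over a family $\mathcal{S}$ still to be built; since documents and patterns have the same length, $e_S$ occurs in the pattern for $W$ precisely when $S\subseteq W$, so the ``range'' of $e_S$ is $\{W: S\subseteq W\}$, of measure $\binom{L-s}{\kappa-s}/\binom{L}{\kappa}=\binom{\kappa}{s}/\binom{L}{s}\approx(\kappa/L)^{s}$ (using $\binom{L}{\kappa}\binom{\kappa}{s}=\binom{L}{s}\binom{L-s}{\kappa-s}$).

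The two hypotheses of Theorem~\ref{thm:dual} then become pure combinatorics on $\mathcal{S}$. Condition (i) — every query covered by at least $t$ ranges with $t\ge Q(n,\kappa)$ — becomes, with $t:=2^{\kappa/2}$ (permitted since $Q(n,\kappa)=O(2^{\kappa/2})$): every $\kappa$-subset of $[L]$ contains at least $2^{\kappa/2}$ members of $\mathcal{S}$, i.e. $\mathcal{S}$ is a \emph{covering design} of multiplicity $2^{\kappa/2}$. Condition (ii) with $\beta=2$ — the measure $v$ of the intersection of two ranges is small — becomes: for all distinct $S_1,S_2\in\mathcal{S}$ one needs $|S_1\cup S_2|\ge s+d$ (as $\mathrm{range}(e_{S_1})\cap\mathrm{range}(e_{S_2})=\{W:S_1\cup S_2\subseteq W\}$ has measure decreasing in $|S_1\cup S_2|$), i.e. $\mathcal{S}$ is simultaneously a \emph{code}, any two members meeting in at most $s-d$ positions, whence $v\approx(\kappa/L)^{s+d}$. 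Granting such an $\mathcal{S}$, Theorem~\ref{thm:dual} yields $S(n,\kappa)=\Omega\!\big(t\,v^{-1}2^{-O(1)}\big)=\Omega\!\big(2^{\kappa/2}(L/\kappa)^{s+d}\big)$, while $n=L\,|\mathcal{S}|$ and a near-optimal covering has $|\mathcal{S}|=\Theta\!\big(2^{\kappa/2}\binom{L}{s}/\binom{\kappa}{s}\big)$ (lower bound by a double count; matched up to a constant by the construction below). Expanding the binomials and choosing $s=\kappa/2$, $d=\Theta(\kappa/\log\kappa)$, and $L$ so that $L/\kappa=n^{\Theta(1/\kappa)}$ — a quantity that grows, and is a large enough polynomial in $\kappa$ precisely in the regime $3\sqrt{\log n}\le\kappa=o(\log n)$ — then gives $\log\big(S(n,\kappa)/n\big)=\Theta(\kappa)$ while $\log n=\Theta(\kappa\log\kappa)$, i.e. $S(n,\kappa)=n^{1+\Theta(1/\log\kappa)}$.

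The crux, and the reason a randomized construction is needed, is building $\mathcal{S}$ that is at once a covering design of multiplicity $2^{\kappa/2}$ \emph{and} a constant-weight code with pairwise intersections at most $s-d$, $d=\Theta(\kappa/\log\kappa)$ — two requirements pulling against each other (a covering wants density, a code wants sparsity), and it is their reconciliation that pins down the range of $\kappa$. The intended route is the alteration method: include each $s$-subset of $[L]$ independently with probability $p\approx 2^{1+\kappa/2}/\binom{\kappa}{s}$, so the expected multiplicity inside a fixed $\kappa$-subset exceeds $2^{\kappa/2}$ by a factor comfortably larger than $\log\binom{L}{\kappa}$, and conclude the covering property by a Chernoff bound and a union bound over all $\binom{L}{\kappa}$ subsets; then delete one member of every pair meeting in more than $s-d$ positions, using that the intersection of two random $s$-subsets of $[L]$ concentrates around $s^2/L\ll s-d$ once $L$ is a sufficiently large power of $\kappa$ to argue that every $\kappa$-subset loses only an $o(1)$ fraction of its members, so the covering survives and $|\mathcal{S}|$ stays within a constant of the covering lower bound (essential: a larger $|\mathcal{S}|$ inflates $n$ and destroys the $n^{\Theta(1/\log\kappa)}$ overhead). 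I expect the main obstacle to be this joint calibration of $p,s,d,L$: the covering concentration must survive the union bound, the deletions must be provably negligible \emph{per} $\kappa$-subset — which needs more than Markov, e.g. a Chernoff/Azuma estimate on the number of deleted members in a fixed $\kappa$-subset in terms of the independent inclusions — and all of this, with the closing arithmetic, must go through uniformly across $3\sqrt{\log n}\le\kappa=o(\log n)$. A comparatively minor point, already handled by taking equal document and pattern length (alternatively a unique delimiter), is to forbid a pattern matching a document as a proper substring.
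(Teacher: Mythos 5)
Your dualization is the same as the paper's (patterns with $\kappa$ wild cards as discrete points of measure $1/\binom{L}{\kappa}$, documents with $s=\kappa/2$ ones as ranges, then Theorem~\ref{thm:dual}), but the combinatorial requirement you impose for condition (ii) is fatally too strong, and this is where the proposal breaks. You set $\beta=2$ and ask for a family $\mathcal{S}$ of $s$-subsets of $[L]$ that is simultaneously a covering design of multiplicity $T=2^{\kappa/2}$ (every $\kappa$-subset of $[L]$ contains at least $T$ members) and a code with pairwise intersections at most $s-d$, $d=\Theta(\kappa/\log\kappa)$. These two properties are incompatible in precisely the regime the theorem is about. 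Indeed, since no $(s-d+1)$-set can lie in two members, $|\mathcal{S}|\le \binom{L}{s-d+1}/\binom{s}{s-d+1}$, while the covering double count forces $|\mathcal{S}|\ge T\binom{L}{s}/\binom{\kappa}{s}$; combining the two and expanding the binomials gives the feasibility constraint $(L/\kappa)^{d}\le 2^{\kappa/2+o(\kappa)}$, i.e.\ $d\log(L/\kappa)=O(\kappa)$. But your closing arithmetic needs $(L/\kappa)^{d}=n^{\Theta(1/\log\kappa)}=2^{\Theta(\log n/\log\kappa)}$ to dominate the $2^{O(\kappa)}$ slack, i.e.\ $\log n/\log\kappa\gg\kappa$, which holds throughout the interesting range and in particular at the headline point $\kappa=3\sqrt{\log n}$ (there $\log n/\log\kappa=\Theta(\kappa^{2}/\log\kappa)\gg\kappa$). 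So no such $\mathcal{S}$ exists where you need it, and the alteration you sketch collapses concretely: for a fixed sampled member $A$, the expected number of sampled partners meeting $A$ in more than $s-d$ positions is about $p\binom{s}{d}\binom{L-s}{d}\approx 2^{-\kappa/2+o(\kappa)}\cdot 2^{\Theta(\log n/\log\kappa)}$, which is enormous rather than $o(1)$; the deletions destroy essentially the whole family, not an $o(1)$ fraction per $\kappa$-window.

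The paper avoids exactly this trap by not imposing any pairwise condition at all: it only forbids $\beta$ documents from having all their ones inside a common set of $\ell+\ell'$ positions, with the much larger $\beta=\Theta(\log_{\kappa}m)=\Theta(\log n/(\kappa\log\kappa))$, and this weaker $\beta$-wise non-concentration property is what the sample-and-prune construction (sampling rate $2^{-\kappa/3}$, then deleting overloaded $(\ell+\ell')$-windows and patterns with too few matches, Lemma~\ref{lem:nice}) actually delivers. Pairs, and indeed any fewer than $\beta$ documents, are allowed to be as concentrated as they like; the price is the $2^{-O(\beta)}$ factor in Theorem~\ref{thm:dual}, which is only $n^{O(1/(\kappa\log\kappa))}$ and is absorbed into $n^{\Theta(1/\log\kappa)}$. (A minor further difference: the paper takes the per-pattern output $t=\Theta\bigl(\binom{\kappa}{\kappa/2}2^{-\kappa/3}\bigr)=\omega(2^{\kappa/2})$ rather than exactly $2^{\kappa/2}$, so $g(n)=Q(n,\kappa)+O(m)\le t$ holds with slack, and it prunes patterns instead of proving a covering statement for all of them.) To repair your argument you would have to abandon the $\beta=2$ code requirement and prove a $\beta$-wise statement of this kind, at which point you are essentially reproducing the paper's construction.
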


To prove the lower bound, we build a \textit{particular} set of documents and patterns and prove
that if the data structure can answer the queries fast, then it must consume lots of space, 
for this particular input, meaning, we get lower bounds for the function $S(\cdot)$.
We now present the details. We assume $Q(n,\kappa) \le  2^{\kappa/2}$, as otherwise the theorem is trivial.

\para{Documents and patterns.}
We build the set of documents in two stages.
Consider the set of all bit strings of length $m$ with exactly $\ell=\kappa/2$ ``1''s. 
In the first stage, we sample each such string independently with probability $r^{-1}$ where $r=2^{\kappa/3}$.
Let $\mathcal{D}$ be the set of sampled strings.
In the second stage,
for every set of $\ell+\ell'$ indices, $1 \le i_1 < i_2 < \cdots < i_{\ell+\ell'} \le m$, where
$\ell' = (\log_\ell r)/2 = \Theta(\kappa/\log \kappa)$, we  perform the following operation, given another parameter $\beta$:
if there are more than $\beta$ strings in $\mathcal{D}$ that have ``1''s only among positions
$i_1, \cdots, i_{\ell+\ell'}$, then we remove all such strings from $\mathcal{D}$.
Consequently, among the remaining strings in $\mathcal{D}$, ``1''s in every subset of $\beta$ strings 
will be spread over at least $\ell+\ell'+1$ positions.
The set of remaining strings $\mathcal{D}$ will form our input set of documents.
Now we consider the set $\P$ of all the patterns of length $m$ that have exactly $\kappa$ wild cards and 
$m-\kappa$ ``0''s.
We remove from $\P$ any pattern that matches fewer than ${\kappa\choose \ell}/(2r)$ documents from $\mathcal{D}$.
The remaining patterns in $\P$ will form our query set of patterns.
In Appendix~\ref{sec:nice}, we prove the following.
\begin{restatable}{lemma}{lemnicel}\label{lem:nice}
	With positive probability, we get a set $\mathcal{D}$ of $\Theta( {m \choose \ell} /r)$ documents
	and a set $\P$ of $\Theta( {m \choose \kappa})$ patterns such that
	(i) each pattern matches $\Theta( {\kappa \choose \ell}/r)$ documents, and
	(ii) there are no $\beta=\Theta(\log_\kappa m)$ documents whose ``1''s are contained
	in a set of $\ell + \ell'$ indices.
\end{restatable}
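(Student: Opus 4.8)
The plan is to analyze the two-stage random construction using concentration inequalities and a union bound, showing each of the three desired properties holds with probability strictly greater than $2/3$, so that all three hold simultaneously with positive probability. Let me set notation: let $N = \binom{m}{\ell}$ be the number of length-$m$ strings with exactly $\ell = \kappa/2$ ones, and recall $r = 2^{\kappa/3}$, so the expected size of $\mathcal{D}$ after stage one is $N/r$. First I would handle the size of $\mathcal{D}$ after stage one: $|\mathcal{D}|$ is a sum of $N$ independent indicator variables, so by a Chernoff bound it concentrates around $N/r = \Theta(\binom{m}{\ell}/r)$ with overwhelming probability (this needs $N/r$ to be large, which follows from the parameter regime $\kappa = o(\log n)$ and $m$ chosen appropriately large — I expect $m$ is set so that $\binom{m}{\ell}/r$ is polynomially large). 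The deletions in stage two can only shrink $\mathcal{D}$, so I need a separate argument (below) that they delete only a small constant fraction in expectation, hence by Markov few are deleted with constant probability.

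Next I would establish property (ii), which is essentially enforced by construction but I must check that stage two does not destroy too much. Fix a set of $\ell + \ell'$ indices $i_1 < \cdots < i_{\ell+\ell'}$. The number of stage-one strings supported entirely within these indices is $\binom{\ell+\ell'}{\ell}$, and each survives sampling independently with probability $1/r$; so the expected number of such surviving strings is $\binom{\ell+\ell'}{\ell}/r$. The key calculation is that with the choice $\ell' = (\log_\ell r)/2$, we have $\binom{\ell+\ell'}{\ell} \le (\ell+\ell')^{\ell'} = \ell^{\ell'(1+o(1))} \approx r^{1/2 + o(1)} = o(r)$, so this expectation is $o(1)$; a Chernoff/Markov bound then says that with probability at least $1 - o(1)$ per index set, at most $\beta = \Theta(\log_\kappa m)$ such strings appear (in fact typically $O(1)$ — the slack up to $\beta$ is what makes the union bound over all $\binom{m}{\ell+\ell'}$ index sets go through). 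Summing the tiny failure probabilities over all index sets, and checking $\beta$ is large enough relative to $\log \binom{m}{\ell+\ell'} = \Theta((\ell+\ell')\log m)$ for the Chernoff tail to beat the union bound, gives that stage two deletes strings from $o(1)$ fraction of index sets — hence only $o(|\mathcal{D}|)$ strings total are deleted — with high probability, simultaneously giving (ii) for the surviving set. This union-bound balancing is where the precise value $\beta = \Theta(\log_\kappa m)$ comes from, and it is the first place the argument is delicate.

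Then I would turn to property (i) and the size of $\P$, which I expect to be the main obstacle. Fix a pattern $\pi \in \P$ with its $\kappa$ wild-card positions; the strings (with $\ell$ ones) that $\pi$ matches are exactly those whose $\ell$ ones lie within the $\kappa$ wild-card positions, of which there are $\binom{\kappa}{\ell}$ among all length-$m$ strings. Each survives stage-one sampling independently with probability $1/r$, so the expected number of matched surviving strings is $\binom{\kappa}{\ell}/r$; since $\binom{\kappa}{\ell} = \binom{\kappa}{\kappa/2} = 2^{\Theta(\kappa)}$ and $r = 2^{\kappa/3}$, this expectation is $2^{\Theta(\kappa)}$, growing, so a Chernoff bound shows that for each fixed $\pi$, with probability $\ge 1 - 2^{-\Theta(\kappa)}$ the number of matched (stage-one) documents is $\Theta(\binom{\kappa}{\ell}/r)$ — comfortably above the threshold $\binom{\kappa}{\ell}/(2r)$ used to prune $\P$. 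The subtlety is that stage-two deletions can remove some of these matched documents: I must argue that for a typical pattern only $o(\binom{\kappa}{\ell}/r)$ of its matched documents get deleted, so it still survives the pruning. Here I would bound, in expectation over the randomness, the total number of (pattern, deleted-document-it-matched) incidences, using the $o(|\mathcal{D}|)$ bound on total deletions from the previous paragraph together with a bound on how many patterns each document can match; dividing out shows the average pattern loses few matches, and Markov plus a union bound over the $\binom{m}{\kappa}$ patterns (or rather, a counting argument showing at most a constant fraction of patterns lose too many) yields that $\Theta(\binom{m}{\kappa})$ patterns survive with all the matched-count and property-(ii) events holding. Finally I would combine: each of ``$|\mathcal{D}| = \Theta(\binom{m}{\ell}/r)$'', ``(ii) holds'', and ``$\ge \frac12\binom{m}{\kappa}$ patterns survive with $\Theta(\binom{\kappa}{\ell}/r)$ matches each'' fails with probability $< 1/3$, so by a union bound all hold together with positive probability, completing the proof. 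The hard part throughout is coordinating the two sources of randomness — the independent sampling and the deterministic-but-data-dependent stage-two deletions — so that the deletions are provably a lower-order perturbation of all the quantities of interest; the $\ell'$ and $\beta$ parameters are tuned precisely to make every union bound in sight converge.
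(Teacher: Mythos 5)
Your overall plan (Chernoff bounds for the stage-one sizes, then an argument that stage-two deletions are a lower-order perturbation, then positive probability by combining three constant-probability events) is in the right spirit, but the step where you control the stage-two deletions has a genuine gap, and it is exactly the step that the stated value $\beta=\Theta(\log_\kappa m)$ is tuned for. You propose to fix an index set of size $\ell+\ell'$, bound the probability that at least $\beta$ sampled documents are supported inside it, and then union over all $\binom{m}{\ell+\ell'}$ index sets. The per-set tail is $(e\mu/\beta)^\beta$ with $\mu\le \sqrt r/r=2^{-\kappa/6}$, i.e. roughly $2^{-\Theta(\beta\kappa)}=2^{-\Theta(c\,\kappa\log m/\log \kappa)}$ when $\beta=c\log_\kappa m$, whereas the number of index sets is $2^{\Theta((\ell+\ell')\log m)}=2^{\Theta(\kappa\log m)}$ since $\ell=\kappa/2$. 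For this union bound to converge the constant $c$ would have to grow like $\log\kappa$, which is not a constant in the regime $\kappa\ge 3\sqrt{\log n}$; equivalently, your route only closes if $\beta=\Theta(\log m)$, a $\log\kappa$ factor larger than the lemma claims. The fallback inference you sketch (``only an $o(1)$ fraction of index sets trigger deletions, hence only $o(|\mathcal D|)$ documents are deleted'') also does not follow: there are $\binom{m}{\ell+\ell'}\gg|\mathcal D|$ index sets, so even a tiny fraction of triggering sets, each deleting at least $\beta$ documents, cannot be related to $|\mathcal D|$ without a further argument.

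The paper's proof avoids this by changing the granularity of the union bound: it conditions on a single sampled document $d$ with support $I$, $|I|=\ell$, and unions only over the at most $m^{2\ell'}=2^{\Theta(\kappa\log m/\log\kappa)}$ index sets of the form $I\cup I'$ that could cause $d$'s removal. Now the Chernoff exponent $\Theta(\beta\kappa)$ and the union-bound exponent $\Theta(\ell'\log m)$ are both $\Theta(\kappa\log m/\log\kappa)$, so a sufficiently large constant in $\beta=c\log_\kappa m$ drives the per-document removal probability below $2^{-2\kappa}$; linearity of expectation then shows almost all documents survive, and a per-pattern union bound over its at most $2^{2\kappa/3}$ matched sampled documents shows all but a $2^{-\kappa}$ fraction of patterns lose no match at all (this replaces, and is simpler than, your incidence-counting step, although that step would also work once a bound on the total number of deletions is in hand). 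To repair your argument, replace the global union bound over index sets with this per-document analysis; the rest of your outline then goes through essentially as you describe.
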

\toappendix{\lemnice}{\true}{\section{Proof of Lemma~\ref{lem:nice}}\label{sec:nice}\lemnicel*}{
	\begin{proof}
			Observe that the second stage of our construction guarantees property (ii).
			So it remains to prove the rest of the claims in the lemma.

		Consider a sampled document (bit string) $d$.
		Conditioned on the probability that $d$ has been sampled, 
		we compute the probability that $d$ gets removed in the second stage of our sampling
		(due to conflict with $\beta-1$ other sampled documents).

		Let $I \subset [m]$ be the set of indices that describe the position of ``1''s in $d$.
		We know that $|I| = \ell$ by construction.
		Consider a subset $I' \subset [m]\setminus I$ with $|I'| = \ell'$.
		By our construction, we know that if there are $\beta-1$ other sampled documents 
		whose ``1''s are at positions $I\cup I'$, then we will remove $d$ (as well as all those
		$\beta-1$ documents).
		We first compute the probability that this happens for a fixed set $I'$ and then 
		use the union bound.
		The total number of documents that have ``1''s in positions $I\cup I'$ is
		\begin{align}
				{\ell + \ell' \choose \ell'} \le \left( \frac{\ell e}{\ell'} \right)^{\ell'} < \ell^{\ell'}\le \sqrt{r}
		\end{align}
		and thus we expect to sample at most  $1/ \sqrt{r}$	 of them.
		We bound the probability that instead $\beta$ documents among them are sampled.
		We use the Chernoff bound by picking 
		$\mu = {1}/ \sqrt{r}$, and $(1+\delta)\mu = \beta$ and
		we obtain that the probability that $\beta$ of these documents are
		sampled is bounded by
		\begin{align*}
				\left(\frac{ e^\delta}{(1+\delta)^{1+\delta}}  \right)^\mu \le
				\left(\frac{O(1)}{\sqrt{r} \beta}  \right)^\beta.
		\end{align*}
		We use the union bound now. The total number of possible ways for picking the set $I'$
		is ${m-\ell \choose \ell'} {\ell \choose \ell'} < m^{2\ell'}$
		which means the probability that we remove document $d$ at the second stage of our construction
		is less than
		\begin{align*}
				\left(\frac{O(1)}{\sqrt{r} \beta}  \right)^\beta m^{2\ell'} < \frac{1}{r^6} = 2^{-2\kappa}
		\end{align*}
		if we pick $\beta = c \log_\kappa m$ for a large enough constant $c$.
		Thus, most documents are expected to survive the second stage of our construction.
		Now we turn our attention to the patterns.

		For a fixed pattern $p \in \P$, there are ${\kappa \choose \ell} = {\kappa \choose \kappa/2} \ge 2^\kappa/\kappa$ documents that could
		match $p$ and among them we expect to sample  ${\kappa \choose \ell}/r$ documents.
		An easy application of the Chernoff bound can prove that with high probability, we will sample
		a constant fraction of this expected value, for every pattern, in the first stage of our construction.
		Since the probability of removing every document in the second stage is at most $2^{-2\kappa}$, 
		the probability that a pattern is removed at the second stage is
		less than $2^{2\kappa/3}/(2\kappa) 2^{-2\kappa} < 2^{-\kappa}$ and thus, we expect a constant
		fraction of them to survive the second stage.
	\end{proof}
}

To prove the  lower bound, we use Theorem~\ref{thm:dual}.
We use a discrete measure here:
each pattern is modelled as a ``discrete'' point with measure $\frac{1}{|\P|}$, meaning,
the space of all patterns has measure one. 
Each document forms a range: a document $d_i$ contains all the patterns (discrete points) that match $d_i$. 
Thus, the measure of every document $d_i \in \mathcal{D}$ is $t_i/|\P|$, where $t_i$ is the number of patterns that match $d_i$.
We consider the measure of the intersection of $\beta$ documents (regions) $d_1, \dots, d_\beta$.
By Lemma~\ref{lem:nice}, there are $\ell + \ell'$ indices where one of these documents has a ``1'';
any pattern that matches all of these documents must have a wild card in all of those positions.
This means, there are at most $m-\ell - \ell' \choose  \kappa - \ell - \ell'$
patterns that could match documents $d_1, \dots, d_\beta$.
This means, when we consider documents as ranges, the intersection of every $\beta$ documents
has measure at most ${m-\ell - \ell' \choose  \kappa - \ell - \ell'} / |\P|$ which is an upper bound
for parameter $v$ in Theorem~\ref{thm:dual}.
For the two other parameters $t$ and $g(n)$ in the theorem we have,
$t = \Theta( {\kappa \choose \ell}/r)$ (by Lemma~\ref{lem:nice}) and
$g(n) = Q(n,\kappa) + O(m)$.
To obtain a space lower bound from Theorem~\ref{thm:dual}, we must check
if $g(n) \le t = \Theta( {\kappa \choose \ell}/r)$.  Observe that
${\kappa\choose \ell} = {\kappa\choose \kappa/2} \ge 2^\kappa/\kappa$ since
the binomials $\kappa \choose i$ sum to $2^\kappa$ for $0 \le i \le \kappa$ and
$\kappa \choose \kappa/2$ is the largest one. 
As $r= 2^{-\kappa/3}$, we have $t = \Omega(2^\kappa 2^{-\kappa/3}/\kappa) = \omega(2^{\kappa/2}) = \omega(Q(n,k))$.
However, $g(n)$ also involves an additive $O(m)$ term. 
Thus, we must also have $t = \omega(m)$ which will hold for our choice of parameters but we will verify it later. 

By guaranteeing that $g(n) \le t$, Theorem~\ref{thm:dual} gives a space lower bound of 
$\Omega( t v^{-1} 2^{-O(\beta)})$.
However, we would like to create an input of size $\Theta(n)$ which means the number of sampled
documents must be $\Theta(n/m)$ and thus we must have 
${m\choose \ell}/r  = \Theta(n/m)$. 
As $m=\omega(\kappa)$, it follows that ${m \choose \ell} = (\Theta(1)m/\ell)^\ell$.
Thus, $(\Theta(1)m/\ell)^\ell = \Theta(rn/m)$. 
Thus, we have that
\[ v^{-1} \ge \frac{|P|}{{m-\ell - \ell' \choose  \kappa - \ell - \ell'}} \ge 
\frac{\Theta\left( {m \choose \kappa}\right)}{{m-\ell - \ell' \choose  \kappa - \ell - \ell'}} = 
\Theta(1) \cdot \frac{\frac{m!}{\kappa!(m-\kappa)!}}{\frac{(m-\ell-\ell')!}{(\kappa-\ell-\ell')!(m-\kappa)!}}\ge 
\Theta(1)\cdot \frac{m^{\ell+\ell'}}{(2\kappa)^{\ell+\ell'}} = 
\Theta(1)\cdot \frac{n}{m \Theta(1)^{\kappa}} \cdot \left( \frac{m}{2\kappa} \right)^{\ell'}\]
where the last step follows from  $(\Theta(1)m/\ell)^\ell = \Theta(rn/m)$, $r=2^{\kappa/3}$ and $\ell = \kappa/2$.

Now we bound $m$ in terms of $n$.
From $(m/(2\kappa))^\ell = \frac{n}{m c^\kappa}$ we obtain that
$m = \kappa^{\ell/(\ell+1)} \cdot n^{1/(\ell+1)}/\Theta(1)^\kappa = n^{2/(\kappa+2)} / \Theta(1)^\kappa$.
Remember that $\ell' = \Theta(\kappa/\log \kappa)$.
Based on this, we get that
$v^{-1} = n \cdot n^{\Theta(1/\log \kappa)} / \Theta(1)^{\kappa}$ and since $\kappa = o(\log n / \log\log n)$
the $\Theta(1)^{\kappa}$ term is dominated and we have
$v^{-1} = n \cdot n^{\Theta(1/\log \kappa)}$.
It remains to handle the extra $2^{-O(\beta)}$ factor in the space lower bound.
From Lemma~\ref{lem:nice}, we know that $\beta = c \log_\kappa m$.
Based on the value of $m$, this means $\beta = \Theta(\log n / (\kappa\log \kappa))$ which means
$2^{-O(\beta)}$ is also absorbed in $n^{\Theta(1/\log \kappa)}$ factor.
It remains to verify one last thing: previously, we claimed that we would verrify that $t = \omega(m)$.
Using the bound $t = \omega(2^{\kappa/2})$ this can be written as 
$2^{\kappa/2} = \omega(m)$ which translates to
$\kappa/2 = (2\log n)/(\kappa+2)  + \omega(1)$ which clearly holds if $\kappa \ge 3\sqrt{\log n}$.

\subsection{The Space Lower Bound}\label{subsec:spacewci}
We defer the details of our space lower bound to  Appendix~\ref{sec:wcis}, where we prove the following.

\begin{restatable}{theorem}{thmwcis}\label{thm:wcis}
	Any pointer-machine data structure that answers WCI queries with $\kappa$ wild cards in
	time $Q(n) + O(m+t)$ over an input of size $n$ must use 
	$\Omega\left( \frac{n}{\kappa} \Theta\left(\frac{ \log_{Q(n)} n}{\kappa}\right)^{\kappa-1}\right)$ space, 
	as long as $\kappa < \log_{Q(n)} n$,
	where $t$ is the output size, and $m$ is the pattern length.
\end{restatable}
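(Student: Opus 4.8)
The plan is to derive the bound from Chazelle's improved framework, Theorem~\ref{thm:lblemma}, rather than from the geometric-stabbing framework that was used for the query lower bound. So I would take the document collection $\mathcal D$ as the ground set $\U$ and a suitable family $\P$ of $\kappa$-wild-card patterns as the query set, and the whole job reduces to constructing $\mathcal D$ and $\P$ so that (i) every pattern in $\P$ matches at least $t$ documents of $\mathcal D$, for a threshold $t = \Theta(Q(n)+m)$ (so that $t$ dominates the search overhead, which is where the hypothesis $\kappa < \log_{Q(n)} n$ gets used), and (ii) for a small $\ell$ and a small $\beta$, any $\ell$ distinct patterns of $\P$ jointly match fewer than $\beta$ documents. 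Theorem~\ref{thm:lblemma} then gives $S(n) = \Omega(|\P|\,t/(\ell\,2^{O(\beta)}))$, and everything comes down to making $|\P|\,t$ — equivalently, the total number of (document, matching pattern) incidences — as large as roughly $\Theta(n\cdot b^{\kappa-1})$ while keeping $\ell$ and $\beta$ small enough that the $1/\ell$ and $2^{-O(\beta)}$ losses are absorbed, where $b = \Theta(\log_{Q(n)}n/\kappa)$ is a branching parameter.

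To build the input I would use a sampling-plus-pruning construction in the spirit of Lemma~\ref{lem:nice}: take documents and patterns to be strings in which $\kappa$ ``slots'' of size $\Theta(b)$ are the only places a wild card may fall, so that a single admissible pattern naturally matches a whole family of documents obtained by varying the slot contents; then sample the candidate documents at an appropriate rate $r^{-1}$ and delete every ``dangerous'' combinatorial subcube — one that could arise as the common matching set of a few admissible patterns — that was left with more than $\beta$ sampled documents. A Chernoff bound plus a union bound over the relevant subcube shapes (exactly as in Appendix~\ref{sec:nice}) shows that with positive probability the surviving set $\mathcal D$ is still of size $\Theta(n/m)$, hence the total input length is $\Theta(n)$, and $|\P|$ is still large, which gives condition (i). Condition (ii) then falls out of the pruning: a document matching $\ell$ distinct admissible patterns must agree with all of them outside the union of their wild-card slots, so it is confined to a low-dimensional subcube, and no such subcube was left with $\beta$ documents. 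Substituting the resulting values of $|\P|$, $t$, $\ell$, and $\beta$ into $S(n) = \Omega(|\P|\,t/(\ell\,2^{O(\beta)}))$ and simplifying — absorbing the $2^{O(\beta)}$ and $1/\ell$ factors into the constant hidden inside the $(\cdot)^{\kappa-1}$, which is why the base of the exponent is written as $\Theta(\log_{Q(n)}n/\kappa)$ — yields $S(n) = \Omega(\tfrac{n}{\kappa}\,\Theta(\log_{Q(n)}n/\kappa)^{\kappa-1})$.

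The step I expect to be the real obstacle is the construction and its quantitative analysis. Forcing every admissible pattern to match at least $\Omega(Q(n)+m)$ documents and, at the same time, keeping the common matching set of a handful of patterns below $\beta$ are competing requirements, and one has to calibrate the slot size $b$, the sampling rate $r$, the pruning threshold $\beta$, and the document length $m$ so that they are mutually consistent and produce precisely the exponent $\kappa-1$ and base $\log_{Q(n)}n/\kappa$ — one of the $\kappa$ ``degrees of freedom'' effectively being consumed by the requirement that the output size reach $Q(n)$. This is similar in spirit to, but fiddlier than, the randomized construction behind Lemma~\ref{lem:nice}, because here we must control, uniformly over \emph{every} choice of a few queries, the actual \emph{size} of the common output, not merely its measure in a query space.
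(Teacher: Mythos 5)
You have the right framework: the paper's proof of Theorem~\ref{thm:wcis} also runs Chazelle's Theorem~\ref{thm:lblemma} on a randomly sampled document set, and your accounting of the target (total incidences of order $\frac{n}{\kappa}\bigl(\log_{Q(n)}n/\kappa\bigr)^{\kappa-1}$, with one degree of freedom spent on pushing the per-query output up to $Q(n)$) is exactly what the paper extracts. But your proposal stops at the point where the proof actually has to happen — you yourself flag the construction and its calibration as the obstacle — and the direction you sketch misses the single idea that makes the two competing requirements compatible: the alphabet size is chosen as a function of the query time, $\sigma=t^{1+\varepsilon}$ with $t=\Theta(Q(n))$, which is legitimate precisely because the assumed query bound $Q(n)+O(m+t)$ is independent of $\sigma$ (the paper lists this alphabet-independence as an essential assumption). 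With that choice, the paper takes \emph{all} $\binom{m}{\kappa}\sigma^{m-\kappa}$ patterns as queries and $n/m$ uniform random documents, notes that two queries sharing $\kappa-s$ wild-card positions have expected common output $t/\sigma^{s}\ll 1$, prunes only a small fraction of \emph{query pairs}, and so runs Theorem~\ref{thm:lblemma} with $\ell=2$ and $\beta=O(1)$; the requirement $t=D\sigma^{\kappa-m}\ge Q(n)$ then forces $m=\kappa+\Theta(\log_{Q(n)}n)$, which is exactly where the base $\log_{Q(n)}n/\kappa$ comes from, and the bound is $\Omega\bigl(\frac{n}{m}\binom{m}{\kappa}\bigr)$.

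Your sketch instead follows the recipe of Lemma~\ref{lem:nice} (small alphabet, sampling at rate $r^{-1}$, pruning \emph{documents} in over-full subcubes, $\beta$ polylogarithmic), and that route does not deliver this theorem in its stated regime. With a constant-size (or any $Q$-independent) alphabet, a pattern with $\kappa$ wild cards matches at most $\sigma^{\kappa}$ documents, so for small $\kappa$ — e.g.\ $\kappa=2$, the most interesting case of the theorem — condition (i) of Theorem~\ref{thm:lblemma} ($t\ge Q(n)$) simply cannot be met; and even when $\sigma^{\kappa}$ is large enough, two patterns whose wild-card supports differ in a single position share roughly a $1/\sigma$ fraction of their output, so $\beta$ would have to be $\Omega(t/\sigma)$ and the $2^{O(\beta)}$ loss destroys the bound, unless you thin the query family to patterns with far-apart wild-card supports, which collapses $|\P|$. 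Pruning documents, as in Lemma~\ref{lem:nice}, does not help here because the damaging close query pairs are supported on essentially the whole document set. So the missing step is not mere parameter tuning: you need the alphabet to grow polynomially with $Q(n)$ and the pruning to act on query pairs; once that ingredient is in place, your slot restriction is unnecessary — the unrestricted pattern set already gives $\binom{m}{\kappa}$ matching patterns per document, and your argument essentially becomes the paper's.
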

Refer to the introduction for a discussion of the consequences of these lower bounds.

\section{Two Pattern Document Indexing and Related Problems}

Due to lack of space we only state the primary results for 2P, FP, 2FP, and SI.
The proofs for Theorems \ref{thm:reportmain} and \ref{thm:searchingmain} can be found in Appendix \ref{sec:reporting_lb} and \ref{appendix:semi_group_lower_bound} respectively.

\begin{restatable}{theorem}{thmreportmain}\label{thm:reportmain}
Any data structure on the Pointer Machine for the 2P, FP, 2FP, and SI 
problems with query time $Q(n)$ and space usage $S(n)$ must obey $S(n)Q(n) = \Omega\left(n^{2-o(1)}\right)$.
\\
Also, if query time is  $O( (nk)^{1/2-\alpha} + k)$ for a constant $0 < \alpha < 1/2$, then
$S(n) = \Omega\left(n^{\frac{1+6\alpha}{1+2\alpha}-o(1)}\right)$.
\end{restatable}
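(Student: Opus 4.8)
The plan is to apply Chazelle's framework (Theorem~\ref{thm:lblemma}) to a single \emph{randomized} hard instance that handles 2P, FP and 2FP at once, with SI following by a direct reduction. Recall the framework needs a universe $\U$ (here, documents of total length $n$), a query set $\Q$, a threshold $t\ge Q(n)$ below which no query's output drops, and integers $\ell,\beta$ such that the outputs of any $\ell$ queries intersect in fewer than $\beta$ elements; it then yields $S(n)=\Omega\!\bigl(|\Q|\,t/(\ell\,2^{O(\beta)})\bigr)$.

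\textbf{The instance.} Fix a ground set $[N]$ of ``symbols'', each realized as a distinct binary string of a common length $\rho=\Theta(\log N)$, plus a disjoint family $\{\bar P\}_{P\in[N]}$ of equal-length strings, chosen so that no symbol is a substring of another symbol or of any $\bar P$ (a leading-bit trick suffices). We take $D$ documents, where document $i$ is the $\#$-separated concatenation of the string $P$ for every $P\in T_i$ and of $\bar P$ for every $P\notin T_i$, with $T_i\subseteq[N]$ a uniformly random $s$-subset. Then $|d_i|=\Theta(N\log N)$ and $n=\Theta(DN\log N)$, and a length-$\rho$ pattern $P$ (resp.\ $\bar P$) occurs in $d_i$ iff $P\in T_i$ (resp.\ $P\notin T_i$). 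Hence the 2P query $\{P_1,P_2\}$, the FP query ``$P_1$ present, $\bar P_2$ absent'', and the 2FP query ``$\bar P_1,\bar P_2$ both absent'' each return exactly the document set $\{i:\{P_1,P_2\}\subseteq T_i\}$; we let $\Q$ consist of all $\binom N2$ such queries. For SI one instead uses the $N$ sets $\sigma_P=\{i:P\in T_i\}$ over ground set $[D]$, for which query $(P_1,P_2)$ returns the same set $\sigma_{P_1}\cap\sigma_{P_2}$, of total input size $\sum_P|\sigma_P|=Ds\le n$.

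\textbf{Parameters and the probabilistic step.} Set $w=\lceil\sqrt{\log n}\rceil$, $\ell=\binom{w-1}{2}+1=\Theta(\log n)$, $\beta=w$, $\delta=c/w$ for a suitable constant $c$, $s=N^{1-\delta}$, and pick $D$ so that the expected number of documents containing a fixed pair equals $2t$ (so $D=2tN^{2\delta}$); with $n=\Theta(DN\log N)$ this pins down $N=\Theta\!\bigl((n/t)^{1/(1+2\delta)}\bigr)\cdot n^{-o(1)}$. Two union bounds over the random $T_i$'s establish, with positive probability, that (i) every one of the $\binom N2$ pairs lies in at least $t$ of the $T_i$'s (Chernoff, valid once $t=\Omega(\log n)$), and (ii) no $w$-element subset of $[N]$ lies in $\beta$ of the $T_i$'s --- here $\delta=\Theta(1/w)$ is precisely what pushes the expected coverage of a $w$-set below $N^{-\Omega(1)}$, so the Chernoff tail beats the $\binom Nw\le 2^{w\log N}$ union-bound cost. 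Since any $\ell$ distinct pairs touch at least $w$ symbols (as $\binom{w-1}{2}<\ell$), (ii) gives that any $\ell$ queries have fewer than $\beta$ common matches. Feeding $|\Q|=\binom N2$, this $t$, $\ell=\Theta(\log n)$ and $\beta=\Theta(\sqrt{\log n})$ (so $\ell\,2^{O(\beta)}=n^{o(1)}$) into Theorem~\ref{thm:lblemma} gives
\[
S(n)=\Omega\!\Bigl(\tfrac{|\Q|\,t}{\ell\,2^{O(\beta)}}\Bigr)=\Omega\!\bigl((n/t)^{2/(1+2\delta)}\,t\cdot n^{-o(1)}\bigr)=\Omega\!\bigl(n^{2-o(1)}/t\bigr),
\]
the last step using $\delta=o(1)$ and $t\le n^{1-o(1)}$.

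\textbf{Reading off the two bounds, and the obstacle.} For the first claim, a data structure with search overhead $Q(n)$ answers a query of output size $t=\Theta(\max(Q(n),\log^2 n))$ in time $Q(n)+O(\log n+t)=O(t)$ (choosing constants so $t$ exceeds the non-output part), so the display yields $S(n)=\Omega(n^{2-o(1)}/t)$, hence $S(n)Q(n)=\Omega(n^{2-o(1)})$ --- if $Q(n)\ge\log^2n$ then $t=\Theta(Q(n))$, and otherwise $t=n^{o(1)}$ so $S(n)$ alone is already $\Omega(n^{2-o(1)})$. For the second claim, a data structure of query time $O((nk)^{1/2-\alpha}+k)$ reads $(nt)^{1/2-\alpha}+O(\log n+t)$ on a query of output $t$, and Theorem~\ref{thm:lblemma} requires the non-output part to be $O(t)$, forcing $(nt)^{1/2-\alpha}=O(t)$, i.e.\ $t\ge n^{(1-2\alpha)/(1+2\alpha)}$; since $(n/t)^{2/(1+2\delta)}t$ is decreasing in $t$ for $\delta<1/2$ we take $t=n^{(1-2\alpha)/(1+2\alpha)+o(1)}$ (the smallest feasible value), and the display gives $S(n)=\Omega\!\bigl(n^{\,2-\frac{1-2\alpha}{1+2\alpha}-o(1)}\bigr)=\Omega\!\bigl(n^{(1+6\alpha)/(1+2\alpha)-o(1)}\bigr)$ via $2-\tfrac{1-2\alpha}{1+2\alpha}=\tfrac{1+6\alpha}{1+2\alpha}$. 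The real difficulty is the instance, not the bookkeeping: framework conditions (i) and (ii) pull against each other --- large, heavily overlapping $T_i$'s make every pair popular but also make some small symbol-set over-covered, and $\beta$ sits inside a $2^{O(\beta)}$ factor that must stay $n^{o(1)}$ --- and the construction hinges on the fact that the balance point is $|T_i|=N^{1-\delta}$ with $\delta$ as small as $\Theta(1/\sqrt{\log n})$, which is exactly what drives the exponent to $2-o(1)$; getting $\delta,w,\ell,\beta$ to jointly satisfy the $w$-subset union bound while keeping $\ell\,2^{O(\beta)}=n^{o(1)}$ is the technical heart. (When $Q(n)$ is within a polylog factor of $n$ the stated bound degenerates to roughly $S(n)=\Omega(n^{1-o(1)})$ and is argued separately.)
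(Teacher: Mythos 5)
Your overall strategy is the same as the paper's: Chazelle's framework (Theorem~\ref{thm:lblemma}) applied to a randomized construction in which every query's answer is the set of random subsets containing a fixed pair of symbols, with Chernoff plus union bounds establishing the two framework conditions, and the same final bookkeeping for both claims. Your complement-string trick that makes one instance serve 2P, FP and 2FP at once (the paper instead builds modified instances for FP/2FP and reaches SI via a suffix-tree reduction) is a genuinely nice simplification. However, there is a real gap in the parameterization of condition (ii). With $s=N^{1-\delta}$, $\delta=c/w$, $\beta=w=\lceil\sqrt{\log n}\rceil$ and $D=2tN^{2\delta}$, the expected number of documents covering a fixed $w$-subset is $\mu\le D(s/N)^w=2t\,N^{2\delta-c}$, and beating the union bound over $\binom{N}{w}\le(eN/w)^w$ subsets with the tail $(e\mu/\beta)^\beta$ requires $\mu\lesssim w^2/N$, i.e.\ $t\lesssim N^{\,c-1-2\delta}$. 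Since $N=(n/t)^{1/(1+2\delta)}n^{-o(1)}$, this is the constraint $t\le n^{1-1/c-o(1)}$: for any fixed constant $c$, your assertion that the expected coverage drops below $N^{-\Omega(1)}$ fails as soon as $t=\Theta(Q(n))$ exceeds $n^{1-1/c}$. That is a polynomially wide range of $Q(n)$ in which the first statement is still a nontrivial super-linear space bound (e.g.\ $Q(n)=n^{0.95}$ demands $S(n)=\Omega(n^{1.05-o(1)})$), and no single constant $c$ handles, say, $Q(n)=n/2^{\sqrt{\log n}}$; so your closing remark that the argument only degenerates when $Q(n)$ is within a polylog factor of $n$ substantially understates where it breaks.

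The gap is repairable, and the repair is precisely what the paper does: let the parameters depend on $Q(n)$ rather than on $n$ alone. Taking $\delta=\Theta\bigl((\log t+\log N)/(w\log N)\bigr)$ (equivalently, letting $c$ grow like $\log n/\log N$) restores the union bound throughout the relevant range while keeping $N^{2\delta}$, $2^{O(\beta)}$ and $\ell$ all $n^{o(1)}$, so your displayed bound $S(n)=\Omega(n^{2-o(1)}/t)$ survives; this mirrors the paper's choice $p=\Theta(\beta)=\Theta\bigl(\sqrt{\log(n/Q(n))}\bigr)$, whose loss factor $2^{O(\sqrt{\log(n/Q(n))})}$ shrinks exactly in the regime where a fixed $c$ fails. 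Your second claim is unaffected, since there $t\le n^{(1-2\alpha)/(1+2\alpha)}$ and a constant $c$ depending on the fixed $\alpha$ suffices; the remaining bookkeeping (the $\binom{w-1}{2}<\ell$ argument, the output-size Chernoff bound, the SI instance of size $n^{1-o(1)}$, and the exponent algebra for $\frac{1+6\alpha}{1+2\alpha}$) is sound.
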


The above theorem is proved using Theorem~\ref{thm:lblemma} which necessitates a randomized construction
involving various high probability bounds. Unlike our lower bound for $\kappa$-GPI we were unable to find a 
deterministic construction that uses Theorem~\ref{thm:dual}.

We also prove the following lower bound in the semi-group model which addresses the difficulty of the 
counting variants of 2P and the related problems.
\begin{restatable}{theorem}{thmsearchmain}\label{thm:searchingmain}
      Answering 2P, FP, 2FP, and SI  queries in the semi-group model requires $S(n) Q^2(n) = \Omega(n^2/ \log^4 n)$.
\end{restatable}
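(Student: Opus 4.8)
The plan is to run the randomized hard instance underlying Theorem~\ref{thm:reportmain} and to replace the pointer-machine counting of Theorem~\ref{thm:lblemma} by an argument tailored to the semi-group model. Recall that a semi-group data structure precomputes $S(n)$ partial sums $\sum_{x\in T_i}w(x)$, one for each of $S(n)$ subsets (``generators'') $T_i\subseteq\U$, and that to answer a query $q$ with query time $Q(n)$ it must return a sum of at most $Q(n)+1$ of these; since the answer $\sum_{x\in q_\U}w(x)$ has to come out for every weighting into a free commutative semigroup and there is no subtraction, the generators it uses must form an \emph{exact partition} of the output set $q_\U$. The first step is therefore the following semi-group counterpart of Theorem~\ref{thm:lblemma}.

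\begin{lemma}\label{lem:sg}
Under the hypotheses of Theorem~\ref{thm:lblemma} (every query outputs at least $t$ elements, and the intersection of the outputs of any $\ell$ queries has size $<\beta$), if moreover $t\ge(Q(n)+1)\,\beta$, then any semi-group data structure with query time $Q(n)$ uses $S(n)=\Omega(|\Q|/\ell)$ space.
\end{lemma}

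\noindent The proof is a one-line injectivity argument: for each query $q$, its $\ge t$ output elements are partitioned into $\le Q(n)+1$ generators, so one of them, $T_q$, has $|T_q|\ge t/(Q(n)+1)\ge\beta$; and if some generator $T$ were the chosen $T_q$ of $\ell$ distinct queries $q_1,\dots,q_\ell$, then $T\subseteq q_{1,\U}\cap\dots\cap q_{\ell,\U}$ would force that $\ell$-wise intersection to have size $\ge|T|\ge\beta$, contradicting hypothesis~(ii). Hence each generator is the $T_q$ of fewer than $\ell$ queries, and the number of distinct generators---which is at most $S(n)$---exceeds $|\Q|/\ell$. The essential contrast with Theorem~\ref{thm:lblemma} is that this bound carries no $2^{O(\beta)}$ factor, so $\beta$ may be taken as large as a polylogarithm; the price is the new requirement that the output size exceed the query time by a factor of $\beta$.

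The second step is to re-instantiate the construction of Theorem~\ref{thm:reportmain}---an SI instance of total size $\Theta(n)$ with index-pair queries, $|\Q|$ of them each of output size $\Theta(t)$, $\ell$-wise output intersections $O(\beta)$, and $\beta,\ell=\mathrm{polylog}\,n$---in the parameter window that Lemma~\ref{lem:sg} needs. Instead of taking $t$ as small as $\Theta(Q(n))$ (the balance point used for the reporting trade-off), take $t=\Theta(Q(n)\,\beta)$, so that $t\ge(Q(n)+1)\beta$ holds, while the family still furnishes $|\Q|=\Omega\!\big(n^{2-o(1)}/t^{2}\big)=\Omega\!\big(n^{2-o(1)}/(Q^{2}(n)\,\beta^{2})\big)$ queries. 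Lemma~\ref{lem:sg} then gives $S(n)=\Omega(|\Q|/\ell)=\Omega\!\big(n^{2-o(1)}/(Q^{2}(n)\,\beta^{2}\ell)\big)$, that is, $S(n)\,Q^{2}(n)=\Omega(n^{2}/\log^{4}n)$ once $\beta$ and $\ell$ are fixed as the appropriate polylogarithms. Finally, exactly as in Theorem~\ref{thm:reportmain}, the construction adapts directly to 2P, FP and 2FP as well---the argument only ever uses the incidence relation between queries and universe elements---so the bound holds for all four problems simultaneously.

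The main obstacle is the same as for Theorem~\ref{thm:reportmain}: actually building and analyzing the randomized instance, with its high-probability bounds on the per-query output sizes and on the $\ell$-wise intersections. The new subtlety is that this has to be done in a different parameter window: Theorem~\ref{thm:reportmain} needs $2^{O(\beta)}=n^{o(1)}$, hence $\beta=o(\log n)$, whereas Lemma~\ref{lem:sg} instead demands $t\ge(Q(n)+1)\beta$, which inflates the output size and deflates $|\Q|$. So the concentration estimates must be re-verified for the re-scaled parameters, and getting the exponent of $\log n$ down to exactly $4$ is a matter of choosing $\beta$ and $\ell$ as tight polylogarithms in that verification; everything else is bookkeeping.
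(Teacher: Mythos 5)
Your proposal follows essentially the same route as the paper: the paper reuses the 2P reporting construction (with the number of trailing bits $p$ reduced to $O(1)$), calls a precomputed sum involving at least $\beta$ documents ``crowded'', argues exactly as in your lemma that a query answered in $Q(n)$ time must use a crowded sum and that a crowded sum can serve at most $\ell^2$ queries (since no $\beta$ documents match $\ell$ distinct patterns), and then counts, choosing $D=\Theta(Q(n)\beta)$, $\beta=\Theta(\log (n/D))$ and $\ell=\Theta(\log Q(n))$ to get $S(n)Q^2(n)=\Omega(n^2/\log^4 n)$. The only caveat is the one you flag yourself: the $n^{o(1)}$ loss must genuinely be eliminated by re-running the union-bound argument in the new parameter window with $p=O(1)$ (so that $|\Q|=\Theta(n^2/t^2)$ with no hidden subpolynomial factor), exactly as the paper does; otherwise the stated bound would degrade to $S(n)Q^2(n)=\Omega(n^{2-o(1)})$.
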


\section{Conclusions}\label{sec:conc}
In this paper we proved unconditional and high space and query lower bounds for a number of problems in string indexing. 
Our main message is that the pointer machine model remains an extremely useful tool for
proving lower bounds, that are close to the \emph{true complexity} of many problems.
We have successfully demonstrated this fact in the area of string and document indexing. 
Within the landscape of lower bound techniques, the pointer machine model, fortunately or unfortunately, 
is the only model where we can achieve \emph{unconditional, versatile, and high} lower bounds and we believe 
more problems from the area of string and document indexing deserve to be considered in this model.
To this end, we outline a number of open problems connected to our results.
\begin{enumerate}
    \item Is it possible to generalize the lower bound for 2P to the case where the two patterns are
		required to match within distance $\gamma$? This is essentially a "dual" of the $1$-GPI problem.
    \item Recall that our space lower bound for the WCI problem (Subsection~\ref{subsec:spacewci}) assumes
        that the  query time is independent of the alphabet size $\sigma$. 
        What if the query is allowed to increase with $\sigma$?
    \item Our query lower bound for the WCI (Subsection~\ref{subsec:querywci}) is proved for a binary alphabet.
        Is it possible to prove lower bounds that take $\sigma$ into account?
        Intuitively the problem should become more difficult as $\sigma$
        increases, but we were unable to obtain such bounds.
    \item We require certain bounds on $\kappa$ for the WCI problem. Is it possible to remove
		or at least loosen them? Or perhaps, can the upper bounds be substantially improved? 
    \item What is the difficulty of the $\kappa$-GPI problem when $\kappa$ is large?
\end{enumerate}

\paragraph*{Acknowledgements.}
We thank Karl Bringmann for simplifying the construction of our documents for the 2P problem. 
We thank Moshe Lewenstein for bringing the WCI problem to our attention. 
We also thank the anonymous referees for pointing us to~\cite{DietzMRU95}.

\bibliography{bib}{}

\begin{thebibliography}{10}

\bibitem{popular}
A.~Abboud and V.~Williams.
\newblock {Popular Conjectures Imply Strong Lower Bounds for Dynamic Problems}.
\newblock In {\em Proceedings of Annual {IEEE} Symposium on Foundations of
  Computer Science ({FOCS})}, pages 434--443, 2014.

\bibitem{Afshani12}
P.~Afshani.
\newblock {Improved pointer machine and {I/O} lower bounds for simplex range
  reporting and related problems}.
\newblock In {\em Symposium on Computational Geometry ({S}o{CG})}, pages
  339--346, 2012.

\bibitem{aal10}
P.~Afshani, L.~Arge, and K.~D. Larsen.
\newblock Orthogonal range reporting: query lower bounds, optimal structures in
  3-d, and higher-dimensional improvements.
\newblock In {\em Symposium on Computational Geometry ({S}o{CG})}, pages
  240--246, 2010.

\bibitem{aal12}
P.~Afshani, L.~Arge, and K.~G. Larsen.
\newblock Higher-dimensional orthogonal range reporting and rectangle stabbing
  in the pointer machine model.
\newblock In {\em Symposium on Computational Geometry ({S}o{CG})}, pages
  323--332, 2012.

\bibitem{mindthegap}
A.~Amir, T.~Kopelowitz, A.~Levy, S.~Pettie, E.~Porat, and B.~R. Shalom.
\newblock {Online Dictionary Matching with One Gap}.
\newblock {\em CoRR}, abs/1503.07563, 2015.

\bibitem{onegap}
A.~Amir, A.~Levy, E.~Porat, and B.~R. Shalom.
\newblock {Dictionary Matching with One Gap}.
\newblock In {\em Annual Symposium on Combinatorial Pattern Matching ({CPM})},
  pages 11--20, 2014.

\bibitem{hslb}
S.~Arya, D.~Mount, and J.~Xia.
\newblock Tight lower bounds for halfspace range searching.
\newblock {\em Discrete \& Computational Geometry}, 47(4):711--730, 2012.

\bibitem{BilleGVV12}
P.~Bille, I.~L. G{\o}rtz, H.~W. Vildh{\o}j, and S.~Vind.
\newblock String indexing for patterns with wildcards.
\newblock In {\em Scandinavian Symposium and Workshops on Algorithm Theory
  ({SWAT})}, pages 283--294, 2012.

\bibitem{fastunion}
P.~Bille, A.~Pagh, and R.~Pagh.
\newblock Fast evaluation of union-intersection expressions.
\newblock In {\em Algorithms and Computation}, volume 4835 of {\em Lecture
  Notes in Computer Science}, pages 739--750. Springer Berlin Heidelberg, 2007.

\bibitem{BiswasGST15}
S.~Biswas, A.~Ganguly, R.~Shah, and S.~V. Thankachan.
\newblock {Ranked Document Retrieval with Forbidden Pattern}.
\newblock In {\em Annual Symposium on Combinatorial Pattern Matching ({CPM})},
  pages 77--88, 2015.

\bibitem{Chan.ParTree}
T.~M. Chan.
\newblock Optimal partition trees.
\newblock In {\em Symposium on Computational Geometry ({S}o{CG})}, pages 1--10.
  ACM, 2010.

\bibitem{clp11}
T.~M. Chan, K.~G. Larsen, and M.~P\v{a}tra\c{s}cu.
\newblock Orthogonal range searching on the {RAM}, revisited.
\newblock In {\em Symposium on Computational Geometry ({S}o{CG})}, pages 1--10,
  2011.

\bibitem{ChanL15}
T.~M. Chan and M.~Lewenstein.
\newblock Clustered integer {3SUM} via additive combinatorics.
\newblock In {\em Proceedings of {ACM} Symposium on Theory of Computing
  ({STOC})}, pages 31--40, 2015.

\bibitem{c86}
B.~Chazelle.
\newblock Filtering search: A new approach to query-answering.
\newblock {\em SIAM Journal of Computing}, 15(3):703--724, 1986.

\bibitem{c88}
B.~Chazelle.
\newblock A functional approach to data structures and its use in
  multidimensional searching.
\newblock {\em SIAM Journal of Computing}, 17(3):427--462, 1988.

\bibitem{Chazellesimplb}
B.~Chazelle.
\newblock Lower bounds on the complexity of polytope range searching.
\newblock {\em Journal of the American Mathematical Society}, 2(4):pp.
  637--666, 1989.

\bibitem{Chazelle90a}
B.~Chazelle.
\newblock {Lower Bounds for Orthogonal Range Searching: I. The Reporting Case}.
\newblock {\em J. {ACM}}, 37(2):200--212, 1990.

\bibitem{Chazelle90b}
B.~Chazelle.
\newblock {Lower Bounds for Orthogonal Range Searching {II.} The Arithmetic
  Model}.
\newblock {\em J. {ACM}}, 37(3):439--463, 1990.

\bibitem{ChazelleR95}
B.~Chazelle and B.~Rosenberg.
\newblock {Simplex Range Reporting on a Pointer Machine}.
\newblock {\em Comput. Geom.}, 5:237--247, 1995.

\bibitem{Chazelle.et.al.simplex.RS}
B.~Chazelle, M.~Sharir, and E.~Welzl.
\newblock Quasi-optimal upper bounds for simplex range searching and new zone
  theorems.
\newblock {\em Algorithmica}, 8:407--429, December 1992.

\bibitem{CohenP10}
H.~Cohen and E.~Porat.
\newblock {Fast set intersection and two-patterns matching}.
\newblock {\em Theor. Comput. Sci.}, 411(40-42):3795--3800, 2010.

\bibitem{CPoracle}
H.~Cohen and E.~Porat.
\newblock On the hardness of distance oracle for sparse graph.
\newblock http://arxiv.org/abs/1006.1117, 2010.

\bibitem{ColeGL04}
R.~Cole, L.~Gottlieb, and M.~Lewenstein.
\newblock Dictionary matching and indexing with errors and don't cares.
\newblock In {\em Proceedings of {ACM} Symposium on Theory of Computing
  ({STOC})}, pages 91--100, 2004.

\bibitem{DietzMRU95}
P.~F. Dietz, K.~Mehlhorn, R.~Raman, and C.~Uhrig.
\newblock Lower bounds for set intersection queries.
\newblock {\em Algorithmica}, 14(2):154--168, 1995.

\bibitem{Ferragina2Dind}
P.~Ferragina, N.~Koudas, S.~Muthukrishnan, and D.~Srivastava.
\newblock Two-dimensional substring indexing.
\newblock {\em Journal of Computer and System Sciences}, 66(4):763 -- 774,
  2003.
\newblock Special Issue on {PODS} 2001.

\bibitem{FischerGKLMSV12}
J.~Fischer, T.~Gagie, T.~Kopelowitz, M.~Lewenstein, V.~M{\"{a}}kinen,
  L.~Salmela, and N.~V{\"{a}}lim{\"{a}}ki.
\newblock Forbidden patterns.
\newblock In {\em {LATIN} 2012: Theoretical Informatics - 10th Latin American
  Symposium, Arequipa, Peru, April 16-20, 2012. Proceedings}, pages 327--337,
  2012.

\bibitem{HonSTV10}
W.~Hon, R.~Shah, S.~V. Thankachan, and J.~S. Vitter.
\newblock {String Retrieval for Multi-pattern Queries}.
\newblock In {\em String Processing and Information Retrieval - 17th
  International Symposium, {SPIRE} 2010, Los Cabos, Mexico, October 11-13,
  2010. Proceedings}, pages 55--66, 2010.

\bibitem{HonSTV12}
W.~Hon, R.~Shah, S.~V. Thankachan, and J.~S. Vitter.
\newblock {Document Listing for Queries with Excluded Pattern}.
\newblock In {\em Combinatorial Pattern Matching - 23rd Annual Symposium, {CPM}
  2012, Helsinki, Finland, July 3-5, 2012. Proceedings}, pages 185--195, 2012.

\bibitem{CTSEdycon}
C.~Kejlberg{-}Rasmussen, T.~Kopelowitz, S.~Pettie, and E.~Porat.
\newblock Word-packing algorithms for dynamic connectivity and dynamic sets.
\newblock http://arxiv.org/abs/1407.6755, 2014.

\bibitem{KPPx}
T.~Kopelowitz, S.~Pettie, and E.~Porat.
\newblock Higher lower bounds from the {3SUM} conjecture.
\newblock http://arxiv.org/abs/1407.6756, 2016.

\bibitem{KPPc}
T.~Kopelowitz, S.~Pettie, and E.~Porat.
\newblock Higher lower bounds from the {3SUM} conjecture.
\newblock In {\em Proceedings of the Annual {ACM-SIAM} Symposium on Discrete
  Algorithms ({SODA})}, pages 1272--1287, 2016.

\bibitem{LarsenMNT14}
K.~G. Larsen, J.~I. Munro, J.~S. Nielsen, and S.~V. Thankachan.
\newblock {On Hardness of Several String Indexing Problems}.
\newblock In {\em Combinatorial Pattern Matching - 25th Annual Symposium, {CPM}
  2014, Moscow, Russia, June 16-18, 2014. Proceedings}, pages 242--251, 2014.

\bibitem{lessspace}
M.~Lewenstein, J.~I. Munro, V.~Raman, and S.~V. Thankachan.
\newblock Less space: Indexing for queries with wildcards.
\newblock {\em Theoretical Computer Science}, 557:120--127, 2014.

\bibitem{wci14}
M.~Lewenstein, Y.~Nekrich, and J.~S. Vitter.
\newblock Space-efficient string indexing for wildcard pattern matching.
\newblock {\em CoRR}, abs/1401.0625, 2014.

\bibitem{Matousek.RS.hierarchal.93}
J.~Matou{\v s}ek.
\newblock Range searching with efficient hierarchical cuttings.
\newblock {\em Discrete {\&} Computational Geometry}, 10(2):157--182, 1993.

\bibitem{Muthukrishnan02}
S.~Muthukrishnan.
\newblock {Efficient algorithms for document retrieval problems}.
\newblock In {\em Proceedings of the Annual {ACM-SIAM} Symposium on Discrete
  Algorithms ({SODA})}, pages 657--666, 2002.

\bibitem{Pdylb}
M.~P\v{a}tra\c{s}cu.
\newblock Towards polynomial lower bounds for dynamic problems.
\newblock In {\em Proceedings of {ACM} Symposium on Theory of Computing
  ({STOC})}, pages 603--610, 2010.

\bibitem{Patrascu11}
M.~P\v{a}tra\c{s}cu.
\newblock Unifying the landscape of cell-probe lower bounds.
\newblock {\em SIAM Journal of Computing}, 40(3):827--847, 2011.

\bibitem{PRoracle}
M.~P\v{a}tra\c{s}cu and L.~Roditty.
\newblock Distance oracles beyond the thorup--zwick bound.
\newblock {\em SIAM Journal on Computing}, 43(1):300--311, 2014.

\bibitem{PRToracle}
M.~P\v{a}tra\c{s}cu, L.~Roditty, and M.~Thorup.
\newblock A new infinity of distance oracles for sparse graphs.
\newblock In {\em Proceedings of Annual {IEEE} Symposium on Foundations of
  Computer Science ({FOCS})}, pages 738--747, 2012.

\bibitem{RahmanI07}
M.~S. Rahman and C.~S. Iliopoulos.
\newblock Pattern matching algorithms with don't cares.
\newblock In {\em {SOFSEM} 2007: Theory and Practice of Computer Science, 33rd
  Conference on Current Trends in Theory and Practice of Computer Science,
  Harrachov, Czech Republic, January 20-26, 2007, Proceedings Volume {II}},
  pages 116--126, 2007.

\bibitem{TarjanPM}
R.~E. Tarjan.
\newblock A class of algorithms which require nonlinear time to maintain
  disjoint sets.
\newblock {\em Journal of Computer and System Sciences}, 18(2):110 -- 127,
  1979.

\bibitem{unevengaps}
H.-F. Ting and Y.~Yang.
\newblock Dictionary matching with uneven gaps.
\newblock In {\em Annual Symposium on Combinatorial Pattern Matching ({CPM})},
  volume 9133, page 247, 2015.

\end{thebibliography}
\bibliographystyle{abbrv}

\newpage

\appendix
\section{Proof of Lemma~\ref{thm:lblemma}}
\label{sec:chazelle}

\thmlb*

  The proof is very similar to the one found in \cite{ChazelleR95},
  but we count slightly differently to get a better dependency on
  $\beta$. Recall the data structure is a graph where each node stores
  $2$ pointers and some input element. 
  At the query time, the algorithm must explore a subset of the graph.
  The main idea is to show  that the subsets explored by different queries
  cannot overlap too much, which would imply that there must be many
  vertices in the graph, i.e., a space lower bound.

  By the assumptions above a large fraction of the
  visited nodes during the query time will be output nodes (i.e., the algorithm
  must output the value stored in that memory cell). 
  We count the number of nodes in the graph by partitioning each query into sets
  with $\beta$ output nodes. By assumption 2 each such set will at
  most be counted $\ell$ times. We need the following fact:
  \begin{fact}[\cite{Afshani12}, Lemma 2]
    Any binary tree of size $ct$ with $t$ marked nodes can be
    partitioned into subtrees such that there are $\Theta(\frac{ct}{\beta})$ subtrees each with
    $\Theta(\beta)$ marked nodes and size $\Theta(\frac{ct}{\beta})$,
    for any $\beta \ge 1$.
  \end{fact}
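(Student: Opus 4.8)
This Fact is Lemma~2 of Afshani~\cite{Afshani12}, and I would reprove it by a greedy bottom-up \emph{cutting} procedure; throughout I assume $1 \le \beta \le t$, since otherwise no subtree can hold $\beta$ marked nodes and the statement is vacuous. Root the tree $T$ arbitrarily and process its nodes in post-order, maintaining for the current node the number of \emph{marked} nodes in the part of its rooted subtree that has not yet been removed. The moment this surviving subtree of a node $v$ reaches at least $\beta$ marked nodes, \emph{cut} it: this surviving subtree becomes one part of the partition, and all of its nodes are deleted from $T$. Because $v$ has at most two children and, by the time $v$ is processed in post-order, the surviving subtree of each child carries fewer than $\beta$ marked nodes (otherwise that child would already have triggered a cut), the part cut at $v$ contains at least $\beta$ and strictly fewer than $2\beta$ marked nodes. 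Thus every part produced by a cut has $\Theta(\beta)$ marked nodes; the one possible exception is the ``leftover'' subtree that survives at the root, which carries fewer than $\beta$ marked nodes and which I simply keep as one extra part that is never counted among the ``good'' parts below.

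Next I would bound the number of good (cut) parts from both sides. They are pairwise disjoint and each uses at least $\beta$ of the $t$ marked nodes, so there are at most $t/\beta$ of them. For a matching lower bound, note that every marked node lying outside the cut parts lies in the leftover subtree, of which there are fewer than $\beta$; hence when $\beta \le t/2$ the cut parts jointly contain more than $t/2$ marked nodes, and since each contains fewer than $2\beta$, there are more than $t/(4\beta)$ of them. (The complementary range $t/2 < \beta \le t$ is trivial: take $T$ itself as the single part, which has $t \in [\beta, 2\beta)$ marked nodes.) So the partition has $\Theta(t/\beta)$ good parts, each with $\Theta(\beta)$ marked nodes, which — treating $c$ as a constant — is the count $\Theta(ct/\beta)$ stated in the Fact.

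It remains to control the \emph{total} size of the parts, and here I would avoid any per-part structural argument: the good parts are disjoint subtrees of $T$, so their sizes sum to at most $|T| \le ct$, while we have just produced $\Omega(t/\beta)$ of them, so by Markov's inequality at least half of them have total size $O(c\beta)$; restricting to this sub-collection yields $\Omega(t/\beta)$ subtrees, each with $\Theta(\beta)$ marked nodes and $O(c\beta)$ nodes in total, which is the content of the Fact (it is this per-part size bound, with $c = O(\alpha)$, that produces the $2^{O(\alpha\beta)}$ factor in Theorem~\ref{thm:lblemma}). If a clean absolute constant is preferred, a harmless preprocessing of $T$ — repeatedly deleting unmarked leaves and suppressing unmarked degree-one vertices, which never removes a marked node and never increases $|T|$ — leaves every leaf marked and every unmarked internal vertex binary, so $|T| \le 2t$ and $c$ may be replaced by an absolute constant. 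I do not expect any step here to be a genuine obstacle; the only point that needs care is that the two-sided bound on the number of parts and the size guarantee all fall out of the single greedy rule ``cut as soon as the surviving subtree reaches $\beta$ marked nodes'', together with the facts that binary branching caps each cut part at fewer than $2\beta$ marked nodes and that a Markov step over the disjoint parts caps their sizes.
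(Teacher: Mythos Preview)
The paper does not prove this Fact at all; it is quoted verbatim as Lemma~2 of~\cite{Afshani12} and used as a black box inside the proof of Theorem~\ref{thm:lblemma}. So there is no ``paper's own proof'' to compare against, and your write-up is effectively a reconstruction of the cited lemma.

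Your main argument --- post-order greedy cutting, so that each cut piece carries between $\beta$ and $2\beta-1$ marked nodes, followed by a Markov step to retain the $\Omega(t/\beta)$ pieces of size $O(c\beta)$ --- is correct and is the standard way such tree-decomposition lemmas are proved. Two minor caveats are worth flagging. First, after Markov you have a \emph{sub-collection} of disjoint subtrees, not a partition of $T$; as you correctly note, this is exactly what the proof of Theorem~\ref{thm:lblemma} needs (it only counts the good pieces), so the discrepancy with the word ``partitioned'' in the Fact's statement is harmless. Second, your optional preprocessing paragraph does not do what you want: suppressing unmarked degree-one vertices contracts paths, so a subtree of size $O(\beta)$ in the compressed tree can correspond to an arbitrarily large connected subtree in the \emph{original} $T$, and it is the original size that feeds into the $2^{O(\alpha\beta)}$ Catalan count in Theorem~\ref{thm:lblemma}. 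Since your Markov argument already delivers the $O(c\beta)$ size bound directly in $T$, you should simply drop the preprocessing remark.
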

  In this way we decompose all queries into these sets and count
  them. There are $|\Q|$ different queries, each query gives us
  $\frac{ct}{\beta}$ sets. Now we have counted each set at most
  $\ell$ times, thus  there are at least $\frac{ct|Q|}{\beta
    \ell}$ distinct sets with $\beta$ output nodes. On the other
  hand we know that starting from one node and following at most
  $a\beta$ pointers we can reach at most $2^{O(a\beta)}$ different
  sets (Catalan number). In each of those sets there are at most
  $\binom{a\beta}{\beta}$ possibilities for having a subset with
  $\beta$ marked nodes. This gives us an upper bound of
  $S(n)2^{O(a\beta)}\binom{a\beta}{\beta}$ for the number of possible
  sets with $\beta$ marked nodes. In conclusion we get
  \begin{equation*}
    S(n)2^{O(a\beta)}\binom{a\beta}{\beta} \ge \frac{ct|\Q|}{\beta\ell} \Rightarrow S(n) = \Omega\left(\frac{t|\Q|}{\beta\ell2^{O(a\beta)}}\right) = \Omega\left(\frac{t|\Q|}{\ell2^{O(a\beta)}}\right)
  \end{equation*}

\lemnice 

\section{Space Lower bound for $k$ wild card}\label{sec:wcis}

For this problem we use Chazelle's framework (Lemma~\ref{thm:lblemma})
and give a randomized construction for the input and query sets.
Note that here we no longer assume a binary alphabet.
In fact, we will vary the alphabet size.
To be more precise, we assume given any input of size $n$, over an alphabet of 
size $\sigma$, there exists a data structure that can answer WCI queries of size $m$ 
with $\kappa$ wild cards in $Q(n,\kappa) + O(m+t)$ time where $t$ is the output size.
Note that this query time is forced to be independent of $\sigma$.

Unlike the case for the query lower bound, we build the set of \textit{queries} in two stages.
In the first stage, we consider all documents of length $m$ over the alphabet $[\sigma]$ (that is $[\sigma]^m$) and
independently sample $n/m$ of them (with replacement) to form the initial set $\mathcal{D}$ of input documents.
And for queries, we consider the set $\mathcal{Q}$ of all strings of length $m$ over the alphabet $[\sigma]\cup \left\{ * \right\}$
containing exactly $\kappa$ wild cards (recall that $*$ is the wild card character).
In total we have $|\mathcal{Q}| = \binom{m}{\kappa}\tfrac{\sigma^m}{\sigma^\kappa}$ queries.
In the second stage, for a parameter $\beta$, we consider all pairs of queries and remove both queries if the 
number of documents they both match is $\beta$ or more.
No document is removed in this stage.
We now want to find a value of $\beta$ such that we retain almost all of our queries after the second stage.

The probability that a fixed query matches a random document is $\tfrac{\sigma^\kappa}{\sigma^m}$.
There are in total $|\mathcal{D}|$ documents, meaning we expect a query output $t = \tfrac{\sigma^\kappa}{\sigma^m}|\mathcal{D}|$ 
documents.
By an easy application of Chernoff bound we can prove that with high probability, all queries
output $\Theta(\tfrac{\sigma^\kappa}{\sigma^m}|\mathcal{D}|)$ documents.

We now bound the number of  queries that survive the second stage.
First observe that if two queries do not have any wild card in the same position, 
then there is at most $1 \le \beta$ document that matches both.
Secondly, observe that for a fixed query $q$ there are $\binom{\kappa}{s}\sigma^s\binom{m-\kappa}{s}$ other queries 
sharing $\kappa-s$ wild cards.
We say these other queries are at distance $s$ from $q$.
For a fixed query, we prove that with constant probability it survives the second stage.
This is accomplished by considering each $s = 1,2,\ldots,\kappa$ individually and using a high concentration bound 
on each, and then using a union bound.
Since there are $\kappa$ different values for $s$ we bound the probability for each individual value by $\Theta(1/\kappa)$.

Now consider a pair of queries at distance $s$. The expected number of documents in their intersection is $\frac{t}{\sigma^s}$.
Letting $X$ to be the random variable indicating the number of documents in their intersection we get
\begin{equation*}
		\mathrm{Pr}[X > (1+\delta)\mu] = \mathrm{Pr}[X > \beta] = \left(\frac{e^\delta}{(1+\delta)^{1+\delta}}\right)^\mu < \left(\frac{t}{\sigma^s}\right)^{\Theta(\beta)}.
\end{equation*}
Recall that there are $\kappa$ values for $s$ and there are $\binom{\kappa}{s}\sigma^s\binom{m-\kappa}{s}$ ``neighbours'' at distance $s$, we want the following condition to hold:
\begin{equation*}
		\left(\frac{t}{\sigma^s}\right)^{\Theta(\beta)} \cdot \binom{\kappa}{s}\sigma^s\binom{m-\kappa}{s} \cdot \kappa \le \frac{1}{100} \Leftrightarrow \sigma^{{\Theta(s\beta)}} \ge 100 t^{\Theta(\beta)} \binom{\kappa}{s}\sigma^s\binom{m-\kappa}{s} \cdot \kappa
\end{equation*}
We immediately observe that the query time, $t$, should always be greater than $m$ (just for reading the query) and that there are never more than $\kappa \le m$ wild cards in a query.
Picking $\sigma = t^{1+\varepsilon}$ for some $\varepsilon > 0$ and letting $\beta$ be sufficiently large, 
we can disregard the factors $t^\beta$ and $\sigma^s$.
If $\sigma^{\Theta(s\beta)} > 100\kappa (\frac{e\kappa}{s})^s(\frac{m}{s})^s$ it follows that the condition above is satisfied.
Since $\kappa \le m \le t \le \sigma^{\frac{1}{1+\varepsilon}}$ it is sufficient to set $\beta = \Theta(1)$. 
We still have to derive the value of $m$.
Since $t = D \frac{\sigma^\kappa}{\sigma^m} = D \frac{t^{(1+\varepsilon)\kappa}}{t^{(1+\varepsilon)m}}$, $t \cdot t^{(1+\varepsilon)(m-\kappa)} = D$.
Manipulating this equation we see that $m = \kappa+\frac{\log_t D - 1}{1+\varepsilon}$.
We can now use Chazelle's framework (Lemma~\ref{thm:lblemma}):
by construction, the output size of any query is $t = \Omega(Q(n))$ and the any two queries have
$O(1)$ documents in common.
By the framework, we get the space lower bound of
\begin{equation*}
S(n) = \Omega\left(\binom{m}{\kappa}\frac{\sigma^m}{\sigma^\kappa}D\frac{\sigma^\kappa}{\sigma^m}\right) = \Omega\left(\frac{n}{m}\binom{m}{\kappa}\right).
\end{equation*}
For $\kappa < \log_{Q(n)} n$, we can upper bound $m \choose \kappa$ by $\Theta(\log_{Q(n)} n)^{\kappa}$.
Thus, we obtain the following theorem.
\thmwcis*

\section{Two Pattern Query Problem Reporting Lower Bound}
\label{sec:reporting_lb}

To prove the lower bound we use the (improved) framework of Chazelle
presented in Theorem~~\ref{thm:lblemma}.
To apply the lemma we need to create a set of queries and a set of inputs satisfying the stated properties. 
For now, we only focus on the 2P problem. 
The rest of this section is divided into four parts. 
The first part is preliminaries and technicalities we use in the remaining parts.
Next we describe how to create the documents, then we define the queries and finally we refine the two sets and prove that they satisfy the conditions stated in Theorem~\ref{thm:lblemma}.

\para{Preliminaries.} \label{sec:preliminaries}
Consider the alphabet $\Sigma = \left\{ \0, \1, \2, \cdots, \mathbf{2^\sigma-1} \right\}$ with $2^\sigma$ characters
(we adopt the convention to represent the characters in bold).
In our proof, documents and patterns are bitstrings.
For convenience we use $\bi$ to interchangeably refer to the character $\bi$ and its binary encoding.
The input is a set $\mathcal{D} = \{d_1, d_2, \ldots, d_D\}$ of $D$ documents where each document is a string over $\Sigma$ and the total length of the documents is $n$.
The set $\mathcal{D}$ is to be preprocessed such that given two patterns $P_1$ and $P_2$, that are also strings over $\Sigma$, all documents where both $P_1$ and $P_2$ occur can be reported.
Our main theorem is  the following.

\thmreportmain*


We start by focusing on the 2P problem and in Appendix \ref{appendix:forbidden_modifications} and \ref{appendix:extensions} we describe the differences to obtain the same lower bounds for FP, 2FP and SI.

\ignore{
For technical reasons we encode bitstrings such that we can make sure
patterns align in a predictable manner in the documents. We have an
alphabet $\Sigma$ with $2^\sigma$ characters, and we encode  each character
using an \emph{encoding function}. The encoding is simple: given
$\sigma$ bits $b_1b_2\cdots b_\sigma$ it produces the bitstring
$110b_10b_2\cdots 0b_\sigma$, i.e., there is a prefix $11$ and every
bit $b_i$ is then prefixed by a $0$. This encoding also generalizes to
bitstrings of length $\ell > \sigma$ bits by applying the encoding to
chunks of size $\sigma$ and with the last chunk possibly having less than $\sigma$ bits:
given $\ell = k\sigma + r$ bits $b_1, \dots, b_\ell$, we get the bit string
$110b_10b_2\ldots0b_{\sigma}110b_{\sigma+1}\dots0b_{2\sigma}\ldots110b_{k\sigma+1}\dots0b_{k\sigma+r}$.
The encoding function is denoted by $\emph{enc}(\cdot)$. 
}

\para{The Documents.}
Let $\sigma$ be some parameter to be chosen later. 
In our construction, the first character, $\0$, works as a delimiter
and to avoid confusion we use the symbol $\bs$ to denote it. 
The number of documents created is $D$ which is also a parameter to be chosen later. The set of
documents is created randomly as follows. Each document will have $3(|\Sigma|-1)$ characters,
in $|\Sigma|-1$ consecutive parts of three characters each.
The $i$-th part, $1 \le i \le 2^\sigma-1$, is $\bs \bi b_1b_2\cdots b_\sigma$ where each $b_j, 1 \le j \le \sigma$ 
is uniformly and independently set to ``0'' or ``1''. 
In other words, the $i$-th part is the encoding of the delimiter (which is basically $\sigma$ ``0''s) 
followed by the encoding of $\bi$, followed by 
$\sigma$ random ``0''s and ``1''s.

\para{The Queries.}
The patterns in our queries always starts with $\bs$, 
followed by another character~$\bi$, $1 \le \bi < 2^\sigma$ (called the \emph{initial} character), followed by some trailing bits.
Observe that any pattern $P_1 = \bs\bi$, for $1 \le \bi < 2^\sigma$, matches all the documents. 
Observe also, if two patterns 
$P_1 = \bs\bi b_1\cdots b_p$ and $P_2 = \bs\bi b'_1\cdots b'_p$ where  $b_j, b'_j \in \{ 0,1 \}, 1 \le j \le p$,
match the same document, then we must have $b_j = b_j'$ for every $1 \le j\le p$.
Based on this observation, our set of queries, $\Q$, is all pairs of
patterns $(P_1, P_2)$ with different initial characters and $p$ trailing bits each, for some parameter $p$ to be set later. 
The previous observation is stated below.
\begin{ob}\label{ob:match}
    Consider two patterns $P$ and $P'$ that have the same initial characters.
    If a document matches both $P$ and $P'$, then $P=P'$ (i.e., they have
    the same trailing bits).
\end{ob}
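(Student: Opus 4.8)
The plan is to show that, for a fixed document $d$ and a fixed initial character $\bi$ with $1\le\bi<2^\sigma$, the position at which a pattern of the form $\bs\bi b_1\cdots b_p$ can occur inside $d$ is completely forced, so the trailing bits $b_1,\dots,b_p$ are a function of $d$ and $\bi$ alone. Granting this, the claim is immediate: if $P=\bs\bi b_1\cdots b_p$ and $P'=\bs\bi b'_1\cdots b'_p$ both occur in $d$, then the trailing bits of each coincide with the same bit string read off from $d$, so $b_j=b'_j$ for all $1\le j\le p$, i.e.\ $P=P'$.

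The key step is to locate the occurrences of the two-character prefix $\bs\bi$ inside $d$. I claim $\bs\bi$ occurs in $d$ only at the start of the $i$-th part, i.e.\ the part $\bs\,\bi\,w_i$, where $w_i$ denotes the $\sigma$ random bits chosen for that part. By construction the delimiter $\bs=\0$ occurs within a part only as its leading character; hence in $d$ every $\bs$ is immediately followed either by the second character $\mathbf{k}$ of some part $\bs\,\mathbf{k}\,w_k$, or --- in the exceptional case that the random bits of a part happen to spell $\0$ --- by the delimiter $\bs$ of the next part. In the exceptional case the character following $\bs$ equals $\bs=\0$, which is not a legal initial character and in particular is not $\bi$. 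So every occurrence of $\bs\bi$ is at a part boundary, and since the second character of part $k$ is $\mathbf{k}$, this forces $k=i$. (At the bit level, the self-delimiting layout of the parts --- character boundaries are explicitly marked, and no run of data bits can reproduce a boundary marker out of place --- guarantees that any occurrence of the bit string encoding $\bs$, and therefore of the whole pattern $\bs\bi b_1\cdots b_p$, is aligned to a character boundary, so there are no additional spurious occurrences to consider.)

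Now suppose $d$ matches $P=\bs\bi b_1\cdots b_p$. By the previous step the occurrence of $P$ begins at the $i$-th part of $d$, so $b_1\cdots b_p$ is precisely the length-$p$ prefix of the suffix of $d$ starting immediately after the character $\bi$ of part $i$; that suffix, and hence its length-$p$ prefix, depends on $d$ only. Applying the same reasoning to $P'$ yields the same bit string, so $b_j=b'_j$ for every $j$ and $P=P'$. The only place where any care is needed is this key step --- ruling out occurrences of $\bs\bi$ elsewhere in $d$, whether at another part boundary or straddling a character boundary --- and the rest of the argument is routine.
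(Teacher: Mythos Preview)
The paper does not actually prove this observation; it is stated as immediate from the construction (the sentence just before the formal statement reads ``Observe also, if two patterns \ldots\ match the same document, then we must have $b_j = b_j'$ for every $1 \le j\le p$''). Your argument spells out exactly the intended reasoning and is correct at the character level: the delimiter $\bs$ can appear only as the first character of a part or, accidentally, as the random third character $w_k$, and in the latter case the character following it is the next part's $\bs$, which is not a legal initial character. Hence $\bs\bi$ with $\bi\neq\0$ occurs only at the start of part $i$, and the trailing bits of any matching pattern are determined by $d$ alone.

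One remark on your bit-level parenthetical. You assert that ``character boundaries are explicitly marked, and no run of data bits can reproduce a boundary marker out of place.'' Under the plain $\sigma$-bit encoding the paper actually describes (``the encoding of the delimiter \ldots\ is basically $\sigma$ `0's''), this is \emph{not} true: the bit string $0^\sigma$ can certainly occur at unaligned positions, so if matching were genuinely at the bit level the observation would fail without a self-synchronizing encoding. The paper is simply loose here (it says both ``each document is a string over $\Sigma$'' and ``documents and patterns are bitstrings''); the character-level reading, under which your main argument is airtight, is the one consistent with the probability $2^{-p}$ in Lemma~\ref{lemma:pr}, so your hedge is the right instinct but the justification you give for it does not hold as stated.
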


A consequence of our construction is that, each pattern has one position where it can possibly match a document and matching only depends on the random bits after the initial character. 

\para{Analysis.}
We start by counting the number of queries. For each pattern in a query we have one initial character and $p$
trailing bits but the two initial characters should be different.  So the number of queries is
  $|\Q| = \binom{2^\sigma-1}{2}2^{2p} = \Theta(2^{2\sigma + 2p})$.


We need to obey the properties stated in Theorem~\ref{thm:lblemma} in the worst case,
i.e. the intersection of $\ell$ queries should have size at most  $\beta$.  The analysis
proceeds by using high concentration bounds and picking parameters carefully
such that the requirements of Theorem~\ref{thm:lblemma} are satisfied with high
probability.
First we study the probability of fixed patterns or fixed queries matching random documents.

\begin{lemma}\label{lemma:pr}
  The probability that a fixed pattern matches a random document is
  $2^{-p}$ and the probability that a fixed query matches a random document
  is $2^{-2p}$.
\end{lemma}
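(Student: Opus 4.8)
The plan is to reduce both statements to a short independence computation, after first pinning down \emph{where} in a random document a fixed pattern can possibly occur. Throughout I assume $p \le \sigma$, as the later choice of parameters will guarantee.

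\textbf{Step 1 (alignment).} I would first argue that a fixed pattern $P = \bs\bi b_1\cdots b_p$ can occur in a document $d$ only at the very beginning of the $i$-th part of $d$. The delimiter symbol $\bs$ appears in $d$ precisely at the start of each of the $2^\sigma-1$ parts --- this is the role of $\bs$, and the (here omitted) prefix-encoding of characters is what rules out any spurious internal occurrence --- so every occurrence of $P$ must begin at a part boundary. Since the symbol following $\bs$ in $P$ is the initial character $\bi$, and among the part labels $\1,\dots,\mathbf{2^\sigma-1}$ only the $i$-th part carries $\bi$ in that position, the occurrence is forced to be at the start of part $i$. This is exactly the ``consequence of our construction'' recorded just before the lemma, and it means $P$ matches $d$ iff $b_1\cdots b_p$ equals the length-$p$ prefix of the $\sigma$ random bits planted in part $i$; by Observation~\ref{ob:match}, whether a pattern matches is determined solely by those bits.

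\textbf{Step 2 (the two probabilities).} For a single fixed pattern: the relevant $p$ bits of part $i$ are independent and uniform in $\{0,1\}$, so the probability that all $p$ of them agree with the prescribed bits of $P$ is $2^{-p}$, which is the first claim. For a fixed query $(P_1,P_2)$ with distinct initial characters $\bi$ and $\bj$: by Step~1, $P_1$ matches iff its trailing bits equal the relevant bits of part $i$, and $P_2$ matches iff its trailing bits equal the relevant bits of part $j$; since $i\neq j$ and the random bits of distinct parts are sampled independently, these two events are independent, each of probability $2^{-p}$, so the query matches $d$ with probability $2^{-2p}$.

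The only genuinely delicate point --- the one I would spell out most carefully --- is the alignment claim in Step~1, i.e.\ certifying that $P$, viewed as a bitstring, cannot occur at any position other than the start of the correctly-labelled part; this is precisely what the character-encoding (deferred in this write-up) is designed to enforce, and once it is granted the rest of the lemma is immediate from independence.
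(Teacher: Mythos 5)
Your proof is correct and follows essentially the same route as the paper's (very terse) argument: the paper's proof simply says the first claim ``follows by the construction'' --- i.e.\ the alignment property you spell out in Step~1, which the paper records just before the lemma as a consequence of the delimiter-based construction --- and then invokes independence of the events for patterns with distinct initial characters, exactly as in your Step~2. You merely make explicit the alignment and uniformity details that the paper leaves implicit, so no substantive difference or gap.
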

\begin{proof}
  The first part follows by the construction.
  Two patterns with distinct initial characters matching a document are independent events, thus the second part also follows.
\end{proof}

\begin{corollary}\label{cor:match}
  The expected number of documents matching a fixed query is $\Theta\left(D 2^{-2p}\right)$
\end{corollary}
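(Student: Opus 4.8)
The plan is to reduce this immediately to Lemma~\ref{lemma:pr} via linearity of expectation. Fix a query $q = (P_1,P_2) \in \Q$ and, for each document $d_i$, $1 \le i \le D$, let $X_i$ be the indicator random variable that is $1$ if $d_i$ matches $q$ and $0$ otherwise. By Lemma~\ref{lemma:pr}, each document matches a fixed query with probability $2^{-2p}$, so $\mathbb{E}[X_i] = 2^{-2p}$. The number of documents matching $q$ is $X = \sum_{i=1}^{D} X_i$, and by linearity of expectation
\begin{equation*}
\mathbb{E}[X] = \sum_{i=1}^{D} \mathbb{E}[X_i] = D\,2^{-2p} = \Theta\!\left(D\,2^{-2p}\right).
\end{equation*}

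No high-concentration argument is needed here: linearity of expectation does not require the $X_i$ to be independent, so I would not even invoke the independence of the documents for this statement (that independence will instead be used later when deriving the concentration bounds required by Theorem~\ref{thm:lblemma}). The only mild subtlety is purely notational: the claim is stated with a $\Theta(\cdot)$ rather than an exact equality, presumably to absorb the distinction between ``$2^\sigma - 1$'' and ``$2^\sigma$'' type rounding in the surrounding construction; the computation above gives the exact value $D\,2^{-2p}$, which is of course within a constant factor. There is essentially no obstacle in this step — it is a one-line consequence of the preceding lemma and is recorded here only because the expected output size governs the choice of the parameter $t$ in the application of Theorem~\ref{thm:lblemma}.
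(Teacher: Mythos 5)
Your proof is correct and matches the paper's (implicit) argument: the corollary is stated as an immediate consequence of Lemma~\ref{lemma:pr}, obtained exactly by summing the per-document matching probability $2^{-2p}$ over the $D$ documents via linearity of expectation. Your remarks about not needing independence here and about the $\Theta(\cdot)$ absorbing rounding are accurate and consistent with how the paper uses the bound.
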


According the framework, every query should have a large output 
(first requirement in Theorem~\ref{thm:lblemma}). 
By a straightforward application of Chernoff bounds it follows that all queries
output at least a constant fraction of the expected value. 
We also need to satisfy the second requirement in Theorem~\ref{thm:lblemma}. 
Consider the intersection of $\ell^2$ queries, $q_1, \cdots, q_{\ell^2}$, for some parameter $\ell$.
Assume the intersection of $q_1,\cdots,q_{\ell^2}$ is not empty, since otherwise
the second requirement is already satisfied otherwise.
There must be at least $\ell$ distinct patterns among these queries thus it follows by Observation~\ref{ob:match} that there are at least $\ell$ distinct initial characters among the $\ell^2$ queries. 

\begin{ob}\label{ob:ell}
  The number of sets that contain $\ell$ patterns, $P_1, \cdots, P_\ell$, with distinct initial characters 
  is at most $(2^{\sigma+p})^\ell$.
\end{ob}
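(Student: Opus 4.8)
The plan is to prove this by a straightforward counting argument: bound the number of distinct patterns that can ever occur, and then bound the number of $\ell$-element collections by the $\ell$-th power of that count. First I would recall from the description of the queries that every pattern is completely determined by two pieces of data: its initial character $\bi$, which ranges over $1 \le \bi < 2^\sigma$, and its string of $p$ trailing bits. Hence the number of distinct patterns that can possibly appear in any query is at most $(2^\sigma - 1)\cdot 2^p < 2^\sigma\cdot 2^p = 2^{\sigma+p}$.

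Next, to count the sets $\{P_1,\dots,P_\ell\}$ of $\ell$ patterns with pairwise distinct initial characters, I would bound each such set by an ordered tuple $(P_1,\dots,P_\ell)$. Each of the $\ell$ coordinates can be filled in at most $2^{\sigma+p}$ ways, so there are at most $(2^{\sigma+p})^\ell$ ordered tuples, and therefore at most that many (unordered) sets. The requirement that the initial characters be distinct only decreases the count, so dropping it is harmless for an upper bound; if desired one could even sharpen the estimate to $\binom{2^\sigma-1}{\ell}2^{p\ell}$, but this is not needed for the application.

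There is essentially no obstacle here — this is a routine union-bound-preparation step. The only point worth pinning down is that we are counting patterns as \emph{strings} (over $\Sigma$), not the documents they match, and that the word ``sets'' is used loosely and is in any case dominated by the number of ordered $\ell$-tuples. With that understood, the bound $(2^{\sigma+p})^\ell$ is immediate, and it is exactly the quantity that the subsequent analysis multiplies against the small-intersection failure probability when verifying the second hypothesis of Theorem~\ref{thm:lblemma}.
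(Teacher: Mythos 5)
Your proof is correct and matches the paper's intent: the paper states this as an immediate counting observation (each pattern is determined by one of fewer than $2^\sigma$ initial characters and $2^p$ trailing-bit strings, so $\ell$-tuples number at most $(2^{\sigma+p})^\ell$), which is exactly your argument. Nothing further is needed.
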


\begin{lemma}\label{lemma:matchell}
The probability that a random document satisfies a set of $\ell$ patterns, $P_1, \cdots, P_\ell$ 
is at most $\frac{1}{2^{p\ell}}$, and the expected number of such  documents is at most $\frac{D}{2^{p\ell}}$.
\end{lemma}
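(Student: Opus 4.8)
The plan is to establish Lemma~\ref{lemma:matchell} by a direct independence argument, mirroring the structure of the proof of Lemma~\ref{lemma:pr} but extended to $\ell$ patterns. First I would recall the key structural features of the construction: each pattern $P_j = \bs \mathbf{i_j} b_1^{(j)} \cdots b_p^{(j)}$ has a distinct initial character $\mathbf{i_j}$, and by the layout of the documents, the only place $P_j$ can align in a document is at the $i_j$-th part, where it must match the $p$ random trailing bits of that part. Since the patterns have distinct initial characters, each $P_j$ probes a \emph{different} part of the document, and the random bits across distinct parts are mutually independent by construction. Hence the events ``$P_j$ matches the document'' for $j = 1, \dots, \ell$ are independent, each occurring with probability $2^{-p}$ by Lemma~\ref{lemma:pr}.

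Multiplying these independent probabilities gives that the probability a random document matches all of $P_1, \dots, P_\ell$ is exactly $2^{-p\ell}$ (and at most $2^{-p\ell}$ in general, accounting for the edge case that some initial characters might coincide after the refinement steps, or that fewer than $\ell$ distinct parts are probed, which can only decrease the count — actually with distinct initial characters it is exactly $2^{-p\ell}$). Then, by linearity of expectation over the $D$ independently generated documents, the expected number of documents satisfying all $\ell$ patterns is at most $D/2^{p\ell}$. This is the entire content of the lemma.

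I would present the proof in two short sentences: one invoking the independence of the per-part random bits to get the $2^{-p\ell}$ probability bound, and one applying linearity of expectation over the $D$ documents to get the $D/2^{p\ell}$ bound on the expectation. The main (and only mild) obstacle is being careful about the word ``at most'': one must ensure that the argument does not silently assume all $\ell$ initial characters are distinct when invoking it later — but since Observation~\ref{ob:match} guarantees that in a nonempty intersection of queries the relevant patterns do have distinct initial characters, and since probing fewer distinct parts only makes matching all patterns \emph{harder} (not easier, because patterns with the same initial character but different trailing bits cannot simultaneously match), the bound $2^{-p\ell}$ is safe as an upper bound in all cases. After this lemma, the next step in the broader argument (not part of this statement) would be to combine Observation~\ref{ob:ell} with this lemma via a union bound over all $(2^{\sigma+p})^\ell$ candidate pattern sets to show that, with high probability, no set of $\ell$ patterns is matched by more than $\beta$ documents, thereby verifying condition (ii) of Theorem~\ref{thm:lblemma}.
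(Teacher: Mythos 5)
Your proposal is correct and matches the paper's own argument: the paper likewise splits into the case of repeated initial characters (probability $0$ by Observation~\ref{ob:match}) and the case of distinct initial characters (mutual independence of the per-part random bits giving $2^{-p\ell}$), with the expectation bound following by linearity over the $D$ documents. Nothing essential is missing or different.
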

\begin{proof}
  This is just an extension of the case where $\ell=2$. If the patterns do not have distinct initial
  characters then the probability is 0 by Observation~\ref{ob:match}. Otherwise, the events for
  each constraint matching a document are mutually independent.
\end{proof}



We will choose parameters such that $\frac{D}{2^{p\ell}} = O(1)$. 
We need to consider the intersection of $\ell^2$ queries and bound the size of their intersection.

\begin{lemma}
\label{lemma:betasatell}
  The probability that at least $\beta$ documents satisfy $\ell$ given patterns is $O\left(\left(\frac{eD}{\beta2^{p\ell}}\right)^\beta\right)$.
\end{lemma}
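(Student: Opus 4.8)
The plan is to apply a standard Chernoff/union bound argument, exactly parallel to the proof of Lemma~\ref{lemma:matchell}, to control the upper tail of the number of documents satisfying a fixed set of $\ell$ patterns. First I would fix $\ell$ patterns $P_1,\dots,P_\ell$; if they do not all have distinct initial characters the probability is $0$ by Observation~\ref{ob:match}, so assume they do. For each document $d_j$, $1 \le j \le D$, let $X_j$ be the indicator that $d_j$ satisfies all of $P_1,\dots,P_\ell$. By Lemma~\ref{lemma:matchell} (and the mutual independence of the per-pattern constraints when the initial characters are distinct), $\mathrm{Pr}[X_j = 1] \le 2^{-p\ell}$, and the $X_j$ are independent across $j$ since documents are generated independently. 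Hence $X = \sum_{j=1}^D X_j$ has mean $\mu = \mathbb{E}[X] \le D/2^{p\ell}$.

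Next I would invoke the multiplicative Chernoff bound in the form
\[
\mathrm{Pr}[X \ge \beta] \le \binom{D}{\beta}\left(\frac{1}{2^{p\ell}}\right)^{\beta} \le \left(\frac{eD}{\beta}\right)^{\beta}\left(\frac{1}{2^{p\ell}}\right)^{\beta} = O\!\left(\left(\frac{eD}{\beta 2^{p\ell}}\right)^{\beta}\right),
\]
where the first inequality is the usual union bound over all size-$\beta$ subsets of documents that could simultaneously satisfy all $\ell$ patterns (the event ``these $\beta$ specific documents all satisfy the patterns'' has probability at most $(2^{-p\ell})^{\beta}$ by independence), and the second uses $\binom{D}{\beta} \le (eD/\beta)^{\beta}$. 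This is exactly the claimed bound. Alternatively one could phrase it through the standard $\left(e^{\delta}/(1+\delta)^{1+\delta}\right)^{\mu}$ tail estimate with $(1+\delta)\mu = \beta$ as was done in the appendix proof of Lemma~\ref{lem:nice}; both routes give the same expression, and I would pick whichever is cleaner given the surrounding exposition.

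I do not expect any genuine obstacle here: the only things to be careful about are (a) correctly justifying the independence across documents (immediate from the construction, since each document's random bits are drawn independently) and (b) making sure the ``distinct initial characters'' case reduction via Observation~\ref{ob:match} is invoked so that the per-document probability really is bounded by $2^{-p\ell}$ rather than something larger. The later steps of the overall lower bound — combining this with Observation~\ref{ob:ell} via a union bound over all $(2^{\sigma+p})^{\ell}$ candidate sets of $\ell$ patterns, and choosing $D$, $p$, $\sigma$, $\ell$, $\beta$ so that $D/2^{p\ell} = O(1)$ and the union bound succeeds with high probability — are where the real parameter juggling happens, but those are outside the scope of this particular lemma.
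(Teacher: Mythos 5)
Your proof is correct, and it reaches the paper's bound by a slightly different (more elementary) final step. The paper sets up exactly the same indicator sum $X=\sum_{i=1}^{D}X_i$ with per-document success probability $2^{-p\ell}$ (via Lemma~\ref{lemma:matchell}, which itself rests on Observation~\ref{ob:match} and independence of the per-pattern constraints), and then applies the multiplicative Chernoff bound with $\mu = D/2^{p\ell}$ and $(1+\delta)\mu=\beta$, giving $e^{\beta}/(\beta/\mu)^{\beta} = \left(\frac{eD}{\beta 2^{p\ell}}\right)^{\beta}$. You lead instead with the first-moment/union bound over size-$\beta$ subsets of documents, $\Pr[X\ge\beta]\le \binom{D}{\beta}\left(2^{-p\ell}\right)^{\beta}\le \left(\frac{eD}{\beta}\right)^{\beta}2^{-p\ell\beta}$, which yields the identical expression; you also note the Chernoff route as an equivalent alternative, so the two arguments are interchangeable here. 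The subset union bound buys simplicity (it needs only that the joint probability factors for each fixed $\beta$-subset, i.e.\ independence across documents, and an upper bound on the per-document probability), while the Chernoff phrasing matches the paper's exposition and the analogous calculation in Lemma~\ref{lem:nice}. Two cosmetic points: calling the inequality $\Pr[X\ge\beta]\le\binom{D}{\beta}(2^{-p\ell})^{\beta}$ a ``multiplicative Chernoff bound'' is a mislabel (it is a union bound over witnesses), though the inequality itself is valid; and your use of $\mathbb{E}[X]\le D/2^{p\ell}$ is actually slightly more careful than the paper's stated equality, and suffices since the tail bound is monotone in the mean.
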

\begin{proof}
  Let $X_i$ be a random variable that is 1 if the $i$-th document matches the given $\ell$ patterns and 0 otherwise, and 
  let $X = \sum_{i=1}^D X_i$. By \rlem{matchell} we have $\mathbb{E}[X] = \frac{D}{2^{p\ell}}$. By the Chernoff bound
    $\Pr[X \ge \beta] =
    \Pr[X \ge (1+\delta)\mu] \le \left(\frac{e^\delta}{(1+\delta)^{1+\delta}}\right)^\mu \le \frac{e^\beta}{(\beta/\mu)^\beta}$
  where $\mu = D/2^{p\ell}$ and $1+\delta = \beta/\mu$.  The lemma follows easily afterwards.
\end{proof}

We now want to apply the union bound and use \rlem{betasatell} and Observation~\ref{ob:ell}.
To satisfy the second requirement of Theorem~\ref{thm:lblemma}, it suffices to have

\begin{equation}
    2^{\ell(p+\sigma)}O\left(\left(\frac{eD}{\beta2^{p\ell}}\right)^\beta \right) < 1/3. \label{eq:int}
\end{equation}

By Theorem~\ref{thm:lblemma} we have $S(n) \ge \frac{|\Q|t}{\ell^22^{O(\beta)}}$. 
Remember that by Corollary~\ref{cor:match}, the output size of each query is $t=\Omega( D2^{-2p})$. We set
$D$ such that $t = \Theta(Q(n))$.
We now plug in $t$ and $|\Q|$.
\begin{equation*}
  S(n) \ge \frac{|\Q|t}{\ell^22^{O(\beta)}} = \Omega\left(\frac{2^{2\sigma + 2p}D2^{-2p}}{\ell^22^{O(\beta)}}\right) = \Omega\left(\frac{(n/D)^2D}{\ell 2^{O(\beta)}} \right)= \Omega\left(\frac{n^2}{Q(n)2^{2p}\ell^22^{O(\beta)}}\right)
\end{equation*}

Since $D 2^\sigma = n$ (the input is $n$) and $D = \Theta(Q(n) 2^{2p})$. 
This implies $S(n)Q(n) = \Omega\left(\frac{n^2}{\ell^22^{2p}2^{O(\beta)}}\right)$, subject to 
satisfying~(\ref{eq:int}). 
Setting $p = \Theta(\beta) = \Theta(\sqrt{\log (n/Q(n))})$ and $\ell = \Theta(\log n)$ we satisfy the condition and get the trade-off:
%
  $S(n)Q(n) = \Omega\left(\frac{n^2}{2^{O(\sqrt{\log(n/Q(n))})}\log^2 n}\right)$

Though we do not explicitly use $\sigma$ in the bound, one can verify that $\sigma = \Theta(\log(n/Q(n)))$ and thus we can assume that
each character fits in one memory cell.

\ignore{
    Some examples of what this bound says are summarized in Figure \ref{fig:example_vals}.
    \begin{figure}
      \caption{Example values on the trade-off curve from the lower bound}
      \label{fig:example_vals}
      \begin{center}
        \begin{tabular}{c|c}
          Q(n) & S(n) \\
          \hline
          $\log^{c} n + k$, $c\ge1$ & $\Omega\left(\frac{n^2}{\log^{c+2}(n) 2^{O(\sqrt{\log n})}}\right) = \Omega\left(\frac{n^2}{2^{O(\sqrt{\log n})}}\right)$\\
          $n^\varepsilon + k$, $\varepsilon > 0$ & $\Omega\left(\frac{n^{2-\varepsilon}}{2^{O(\sqrt{\log n})}}\right)$ \\
          $\frac{n}{\log^{2+\varepsilon} n} + k$, $\varepsilon > 0$ & $\Omega\left(\frac{n \log^{\varepsilon} n}{2^{O(\sqrt{\log \log n})}}\right) = \Omega(n\log^{\varepsilon}n)$
        \end{tabular}
      \end{center}
    \end{figure}
}
Recently in the literature there has been bounds on the form
$O(\sqrt{nk}\log^{O(1)} n + k)$ with linear space. 
The trade-off proved here can be used to prove that is optimal
within polylogarithmic factors. 
Suppose $Q(n) = O((nk)^{1/2 - \alpha}+k)$, for a constant $0 < \alpha < 1/2$.
We can obtain a lower bound, 
by making sure that the search cost $(nk)^{1/2-\alpha}$ is dominated by the output size which means $k \ge
(nk)^{1/2-\alpha}$ or $ k \ge n^{(1/2-\alpha)/(1/2+\alpha)} =
n^{(1-2\alpha) / (1+2\alpha)}$. Plugging this in for the query time
gives the trade-off
$S(n) \ge \Omega({n^{\frac{1+6\alpha}{1+2\alpha}}}/ {2^{\Theta(\sqrt{\log(n)})}})$.

\section{Forbidden Pattern lower bound modifications}
\label{appendix:forbidden_modifications}
One quickly sees that the inputs and queries designed for the 2P problem
do not work to prove lower bounds for the Forbidden Pattern case 
(i.e. one positive pattern and one negative pattern). 
To overcome this, we design a slightly different input
instance and the bounds are slightly different though still not more
than polylogarithmic factors off from the upper bounds.

The documents are created very similar to before, except each document
now comes  in two parts. The first part is the same as before,
i.e. $2^\sigma$ subsequences of $3$ characters where the $i$-th subsequence
is $\bs$, follow by $\bi$, followed by a trailing bits.
Now we extend the alphabet to
$\Sigma = \Sigma_1 \cup \Sigma_2 = [\mathbf{2^{\sigma+1}}]$, and the second
part of the document will only contain symbols from $\Sigma_2 =
\{\mathbf{2^\sigma}, \mathbf{2^\sigma + 1}, \ldots, \mathbf{2^{2\sigma} -1\}}$. 
The second part of a document is a uniformly random subset with $m$ characters from
$\Sigma_2$, however, as before, we prefix each character with $\bs$. 
The ordering of the characters in the second part does not
matter. The queries consist of a positive pattern and a negative
pattern. The positive patterns are created in the same way as before.
The negative pattern is just $\bs$ follow by one character from $\Sigma_2$.

\begin{fact*}
  There are $2^{2\sigma+p}$ different queries.
\end{fact*}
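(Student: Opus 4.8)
The plan is a direct enumeration: since a query in the forbidden-pattern construction is, by definition, an ordered pair consisting of one positive pattern and one negative pattern, I would count each type of pattern separately and then multiply.

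First I would count the positive patterns. A positive pattern has the form $\bs\bi b_1 b_2\cdots b_p$, where the initial character $\bi$ ranges over $\Sigma_1\setminus\{\bs\}$ and each trailing bit $b_j\in\{0,1\}$ is free. There are $2^\sigma-1$ choices for $\bi$ and $2^p$ choices for the trailing string $b_1\cdots b_p$, giving $(2^\sigma-1)2^p=\Theta(2^{\sigma+p})$ positive patterns, exactly the positive-pattern count inherited from the 2P construction.

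Next I would count the negative patterns. A negative pattern is just $\bs$ followed by a single character of $\Sigma_2$, so there are $|\Sigma_2|=2^\sigma$ of them. By the product rule the total number of queries is $(2^\sigma-1)2^p\cdot 2^\sigma=\Theta(2^{2\sigma+p})$, which is the claimed bound (up to the constant factor implicit in the $\Theta(\cdot)$ notation, exactly as the analogous count $|\Q|=\Theta(2^{2\sigma+2p})$ is recorded for the 2P problem).

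I do not expect any real obstacle here; the only point requiring a moment of care is that the positive and negative patterns of a query draw their non-delimiter characters from the disjoint pools $\Sigma_1$ and $\Sigma_2$ respectively, so — unlike the 2P case, where one had to exclude pairs sharing an initial character — no overcounting or exclusion arises and the product rule applies cleanly.
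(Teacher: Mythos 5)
Your count is correct and is exactly the ``simple counting argument'' the paper has in mind: $(2^\sigma-1)2^p$ positive patterns times $2^\sigma$ negative patterns from the disjoint pool $\Sigma_2$, giving $\Theta(2^{2\sigma+p})$ queries. Your remark that the disjointness of $\Sigma_1$ and $\Sigma_2$ removes the distinct-initial-character exclusion of the 2P count is the right (and only) point of care.
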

\begin{proof}
  A simple counting argument.
\end{proof}

\begin{fact*}
  The probability a fixed query hits a random document is $2^{-p}(1-\frac{m}{|\Sigma_2|})$ and
  the expected number of documents returned by a query is $\frac{D}{2^{p}}(1-\frac{m}{|\Sigma_2|})$
\end{fact*}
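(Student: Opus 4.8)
The plan is to split the event ``the query hits the document'' into its two constituent parts and exploit the fact that a document's randomness comes from two independent sources: the $\sigma$ uniform bits appended to each character of the first part, and the uniformly random $m$-element subset of $\Sigma_2$ forming the second part. Write the query as a positive pattern $P^{+}=\bs\bi b_1\cdots b_p$ together with a negative pattern $P^{-}=\bs\mathbf{c}$ with $\mathbf{c}\in\Sigma_2$; the document is hit precisely when $P^{+}$ occurs and $P^{-}$ does not.

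First I would dispose of $P^{+}$ by reusing the analysis behind Observation~\ref{ob:match} and \rlem{pr}: the delimiter $\bs$ together with the distinct initial characters forces $P^{+}$ to have a single candidate alignment inside the first part of the document, and since $P^{+}$'s initial character lies in $\Sigma_1$ while the second part uses only symbols from $\Sigma_2$, the pattern $P^{+}$ can never align inside the second part. Hence $P^{+}$ occurs iff its $p$ trailing bits agree with the first $p$ of the $\sigma$ uniform bits at that alignment, an event of probability $2^{-p}$ that depends only on the first part.

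Next I would handle $P^{-}$: the string $\bs\mathbf{c}$ cannot occur in the first part, whose characters beyond the delimiters lie in $\Sigma_1$, disjoint from $\Sigma_2$; so $P^{-}$ occurs iff $\mathbf{c}$ is one of the $m$ characters chosen for the second part. As that subset is uniform among all $m$-subsets of $\Sigma_2$, we have $\Pr[\mathbf{c}\text{ chosen}]=m/|\Sigma_2|$, and hence $\Pr[P^{-}\text{ absent}]=1-m/|\Sigma_2|$, depending only on the second part.

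Because these two events depend on disjoint and independently generated pieces of the document, they are independent, giving $\Pr[\text{hit}]=2^{-p}\bigl(1-m/|\Sigma_2|\bigr)$ for a single document; summing over the $D$ independently constructed documents and using linearity of expectation yields the claimed expected output size $\frac{D}{2^{p}}\bigl(1-m/|\Sigma_2|\bigr)$. The only place warranting care — and the closest thing to an obstacle in an otherwise routine counting argument — is verifying that the two parts of a document really do not interfere, i.e.\ that neither $P^{+}$ can sneak a match into the second part nor $P^{-}$ into the first; this is exactly where the disjointness $\Sigma_1\cap\Sigma_2=\emptyset$ is used.
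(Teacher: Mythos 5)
Your proposal is correct and follows exactly the argument the paper intends: the paper's own proof is just the one-line remark that the fact ``follows easily from the construction,'' and your write-up is the natural expansion of it --- the positive pattern matches with probability $2^{-p}$ via the first (random-bits) part, the negative pattern $\bs\mathbf{c}$ is absent with probability $1-m/|\Sigma_2|$ via the random $m$-subset of $\Sigma_2$, the disjointness $\Sigma_1\cap\Sigma_2=\emptyset$ prevents cross-matches, independence of the two parts gives the product, and linearity of expectation over the $D$ documents gives the expected output size. No gaps; this is the same approach, only spelled out.
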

\begin{proof}
    The proof follows easily from the construction.
\end{proof}

For $\ell^2$ queries to have any output, there must either be
$\ell$ distinct positive patterns or $\ell$ distinct negative patterns.
In the former case, the rest of the proof from Appendix~\ref{sec:reporting_lb} goes through. In latter case there are
$\ell$ distinct negative patterns, i.e. $\ell$ distinct characters. If
any of the $\ell$ characters appear in a document, that document is
not in the intersection.

\begin{fact*}
  Let $P$ be a set of $\ell$ negative patterns. The probability that a
  random document contains at least one pattern from $P$ is
  $\frac{\binom{|\Sigma_2| - \ell}{m}}{\binom{|\Sigma_2|}{m}} =
  \Theta(\frac{1}{2^{pl}})$, for some choice of $m$.
\end{fact*}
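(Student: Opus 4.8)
The plan is to reduce this to an elementary count. By construction the $\Sigma_2$‑part of a random document is a uniformly random $m$‑element subset $T\subseteq\Sigma_2$ (the prefixed $\bs$'s and the ordering play no role), and a negative pattern $\bs\mathbf{c}$ with $\mathbf{c}\in\Sigma_2$ occurs in the document exactly when $\mathbf{c}\in T$: it cannot occur inside the first part of the document because every character there lies in $\Sigma_1=\{0,\dots,2^\sigma-1\}$ while $\mathbf{c}\in\Sigma_2$. Identifying $P$ with the set of $\ell$ distinct characters of $\Sigma_2$ it designates, a document lies in the intersection of the corresponding $\ell^2$ queries iff it contains none of the patterns of $P$, equivalently iff $T$ is disjoint from these $\ell$ characters; the number of such subsets $T$ is $\binom{|\Sigma_2|-\ell}{m}$ out of $\binom{|\Sigma_2|}{m}$ total, which is exactly the probability in the statement (so the event there should be read as ``contains none of the patterns of $P$'').

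For the asymptotic estimate I would write the ratio as the telescoping product
\[
\frac{\binom{|\Sigma_2|-\ell}{m}}{\binom{|\Sigma_2|}{m}}=\prod_{i=0}^{\ell-1}\frac{|\Sigma_2|-m-i}{|\Sigma_2|-i},
\]
and take $m$ to be the nearest integer to $|\Sigma_2|\bigl(1-2^{-p}\bigr)$. This is the natural choice: it makes the probability that a single forbidden character misses a document equal to $2^{-p}$, matching the $2^{-p}$ match probability of a positive pattern and agreeing with the earlier Fact that a query hits a document with probability $2^{-p}(1-m/|\Sigma_2|)=\Theta(2^{-2p})$. For this $m$ we have $|\Sigma_2|-m=\Theta(|\Sigma_2|2^{-p})$, which in the parameter regime at hand ($\sigma=\Theta(\log(n/Q(n)))$, $p=\Theta(\sqrt{\log(n/Q(n))})$, $\ell=\Theta(\log n)$) satisfies $\ell^2=o(|\Sigma_2|-m)$; hence the rounding perturbs the ratio $(|\Sigma_2|-m)/|\Sigma_2|$ only by a $1\pm o(1/\ell)$ factor, and each of the $\ell$ factors in the product equals $2^{-p}$ up to the discrepancy between $1-i/(|\Sigma_2|-m)$ and $1-i/|\Sigma_2|$. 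Since $\sum_{i<\ell}i/(|\Sigma_2|-m)=O(\ell^2/(|\Sigma_2|-m))=o(1)$, the accumulated correction lies in $[1-o(1),1+o(1)]$, so the ratio is $2^{-p\ell}(1\pm o(1))=\Theta(2^{-p\ell})$.

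Finally I would note that this Fact is the forbidden‑pattern counterpart of Lemma~\ref{lemma:matchell}: in the remaining case where the $\ell^2$ queries of an intersection carry $\ell$ distinct negative patterns, feeding this probability into a Chernoff bound as in Lemma~\ref{lemma:betasatell} and applying a union bound over all choices of $\ell$ distinct negative patterns produces a condition of exactly the form of inequality~(\ref{eq:int}), after which the space lower bound of Appendix~\ref{sec:reporting_lb} goes through (with the polylogarithmic changes mentioned at the start of this section). The step I expect to need the most care is the choice of $m$: because the per‑character factor is raised to the power $\ell=\Theta(\log n)$, it is not enough that $(|\Sigma_2|-m)/|\Sigma_2|=\Theta(2^{-p})$ — a stray constant factor there would grow to an $n^{\Theta(1)}$ error in the probability — so $m$ must pin this ratio to $2^{-p}$ within a $1\pm o(1/\ell)$ relative error while staying an integer in $[0,|\Sigma_2|]$ and remaining consistent with the document length and the other parameter settings inherited from the 2P construction.
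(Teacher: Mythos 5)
Your proof is correct and follows essentially the same route as the paper: choose $m$ so that $1-m/|\Sigma_2|=2^{-p}$ and bound the telescoping product $\prod_{i<\ell}\frac{|\Sigma_2|-m-i}{|\Sigma_2|-i}$ by $2^{-p\ell}$ per factor. You go slightly further than the paper by correcting the statement's wording (the event is ``contains none of the patterns'') and by verifying the matching lower bound for the $\Theta$, which the paper leaves implicit since only the upper bound is used downstream; note also that with $|\Sigma_2|=2^\sigma$ the choice $m=2^\sigma-2^{\sigma-p}$ is already an integer, so no rounding analysis is needed.
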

\begin{proof}
Suppose we choose $1-\frac{m}{\Sigma_2}$ to be $2^{-p}$, then it
follows that $\Sigma_2 - m = \frac{\Sigma_2}{2^p}$. Using these equations it follows that
\begin{equation*}
  \frac{\binom{|\Sigma_2| - \ell}{m}}{\binom{|\Sigma_2|}{m}} = \frac{(\Sigma_2-m)(\Sigma_2-m-1)\cdots(\Sigma_2 - m - \ell + 1)}{\Sigma_2 (\Sigma_2 - 1)\cdots(\Sigma_2 - \ell + 1)} \le \frac{1}{2^{pl}}
\end{equation*}
\end{proof}

Now one can go through the remaining steps from \rlem{matchell} and onwards to get the same results.
However the values change slightly.
We see now $n = D(m+2^\sigma)$, which is still $\Theta(D2^\sigma)$, since $m$ is bounded by $2^\sigma$.
As noted above the number of queries has changed to $|Q| = 2^{2\sigma + p}$.
We get the following when plugging in the values in the framework.

\begin{equation*}
  S(n) \ge \frac{|Q|t}{\ell^22^{O(\beta)}} = \frac{2^{2\sigma + p}D2^{-2p}}{\ell^22^{O(\beta)}} = \frac{(n/D)^2D}{\ell^22^{O(\beta)}2^p} = \frac{n^2}{Q(n)\ell^2 2^{O(\beta)}2^{3p}}
\end{equation*}

By choosing the same values for $\beta$, $p$, and $\ell$ as before,
we get the same trade-off up to constant factors, the only difference
is the denominator has $2^{3p}$ factor instead of $2^{2p}$.

\para{Adapting to two negative patterns.}

The description above defines the hard instance for one positive
pattern and one negative pattern. The change was to split
each document in two parts, one regarding the positive element and one
regarding the negative element. In exactly the same way a hard
instance can be designed for two negative patterns: simply use the
negative construction for the both parts of the documents. There are
two minor details that also need to be addressed in the case of two
negative patterns. The first is, that the length of each document is
``long enough'', which easily follows since $m$ is at least a constant
fraction of $2^\sigma$. The second part, as also noted above, is that
the queries lose yet another $p$ bits, which means we can now only
create $2^{2\sigma}$ queries. Similarly to above, this means the
denominator will have a $2^{4p}$ factor rather than $2^{3p}$ (one
positive pattern, one negative pattern) or $2^{2p}$ (two positive
patterns).

\section{Extensions}
\label{appendix:extensions}
\para{Set Intersection.}
We refer the reader to the introduction for the definition of this problem (and its variants).
Using a data structure that solves the set intersection problem,
one can solve the hard instance we gave for the 2P problem:
Notice that all our patterns have a fixed length.
This means in a suffix tree built on the set of documents, each pattern matches a set of leaves disjoint from all the other patterns.
For each pattern one identifies the set of documents in the corresponding leaf set as 
the input for the Set Intersection problem.
Verifying the correctness is rather straightforward.

\para{Common Colors.}  Here one is given an array of
length $n$ where each entry is a color (potentially with a weight) and the queries are two intervals where
the answer is the list of colors that appear in both intervals (reporting) or
the weighted sum of the colors (searching).
It is easy to see that a solution for the common colors problem can be used to solve 2P query problem~\cite{Ferragina2Dind} (using a suffix tree)
and thus our lower bounds apply.

\begin{corollary}
  The reporting (resp. searching) variant of the common colors problem or set intersection requires
  $S(n)Q(n) = \Omega(n^2/(2^{O(\sqrt{\log(n/Q(n)})}\log^2 n ))$ (resp. $S(n)Q^2(n) = \Omega(n^2/\log^4 n)$) in the pointer machine 
  (resp. semi-group) model.
\end{corollary}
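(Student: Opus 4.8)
The plan is to set everything up in the semi-group / arithmetic model recalled in Appendix~\ref{appendix:semi_group}: a data structure is a family $\mathcal{G}$ of \emph{generators}, each being a subset $g \subseteq \U$ equipped with the precomputed value $\sum_{x\in g} w(x)$; the space is $|\mathcal{G}|$, and to answer a query $q$ the algorithm must produce $\sum_{x\in q_\U} w(x)$, which — since a semi-group offers no subtraction — forces it to exhibit $q_\U$ as a \emph{disjoint} union $q_\U = \biguplus_{i\in I_q} g_i$ of generators, so the query time is at least $|I_q| \le Q(n)$. As in Appendix~\ref{appendix:extensions}, it suffices to prove the bound for the 2P problem and then carry it over to FP, 2FP, SI and the common-colors problem through the same suffix-tree reductions used there and in Appendix~\ref{appendix:forbidden_modifications}; these reductions preserve the query-output sets, hence the semi-group complexity. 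So I would reuse the randomized document/query construction of Appendix~\ref{sec:reporting_lb}, which produces $|\Q| = \Theta(2^{2\sigma+2p})$ queries, each matching $\Theta(D 2^{-2p})$ documents (Corollary~\ref{cor:match}; moreover, by Chernoff upper tails, \emph{every} output has size $\Theta(t)$, so $T := \max_q |q_\U| = \Theta(t)$), and for which any $\ell^2$ queries with a common output element involve at least $\ell$ distinct patterns and hence, with positive probability, intersect in fewer than $\beta$ documents whenever (\ref{eq:int}) holds (Lemma~\ref{lemma:betasatell}, Observation~\ref{ob:ell}) — but with the free parameters $p,\sigma,D,\ell,\beta$ re-tuned as functions of $n$ and $Q(n)$ for the semi-group goal.

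The heart of the argument is a convexity/fan-out count replacing the Chazelle-style counting of Theorem~\ref{thm:lblemma}. Fix a query $q$: its generators have sizes summing to $|q_\U| = \Theta(t)$ and there are at most $Q(n)$ of them, so by convexity $\sum_{i\in I_q}\binom{|g_i|}{2} \ge \binom{t}{2}/Q(n) - \Theta(t) = \Omega\!\big(t^2/Q(n)\big)$ once $t \ge 2Q(n)$. Summing over all queries and regrouping by generator gives $\sum_{g\in\mathcal{G}} \binom{|g|}{2}\, d(g) = \Omega\!\big(|\Q| t^2/Q(n)\big)$, where $d(g)$ is the number of queries using $g$. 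For the matching upper bound I split $\mathcal{G}$ at the size threshold $\beta$: a generator with $|g|\ge\beta$ is contained in the intersection of the $d(g)$ outputs that use it, which by the construction forces $d(g) < \ell^2$ as soon as $\beta$ exceeds the $\ell^2$-wise intersection size; a generator with $|g| < \beta$ contributes at most $\binom{\beta}{2}$, and $\sum_{g} d(g) = \sum_q |I_q| \le |\Q| Q(n)$. Taking $\beta \le t/(2Q(n))$ makes the small-generator part at most half of the total, whence $\Omega\!\big(|\Q| t^2/Q(n)\big) \le \ell^2\, |\mathcal{G}|\, T^2$, i.e. $|\mathcal{G}| = \Omega\!\big(|\Q| t^2/(\ell^2 Q(n) T^2)\big) = \Omega\!\big(|\Q|/(\ell^2 Q(n))\big)$ since $T = \Theta(t)$. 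The extra factor $Q(n)$ compared with the reporting bound of Theorem~\ref{thm:reportmain}, and the replacement of its $2^{O(\sqrt{\log n})}$ slack by a clean $\ell^2 = \log^2 n$, are exactly the effect of charging a query per \emph{generator} rather than per reported element.

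It remains to instantiate and optimize. The requirements are: (\ref{eq:int}) must hold for the chosen $\beta \le t/(2Q(n))$ and for $\ell = \Theta(\log n)$; we need $t \ge 2Q(n)$; and $D 2^\sigma = \Theta(n)$ with $D \le n$ so the input has total length $n$. Substituting $2^{2\sigma} = \Theta(n^2/D^2)$ into $|\mathcal{G}| = \Omega\!\big(2^{2\sigma+2p}/(\log^2 n\cdot Q(n))\big)$, I would choose $p$ of order $\log\log n$ and $D$ of the appropriate order in $2^{p}$ and $Q(n)$ so that $t=\Theta(Q(n))$, $\beta$ is a constant, (\ref{eq:int}) is satisfied, and after simplification $S(n) = \Omega\!\big(n^2/(Q^2(n)\log^4 n)\big)$, i.e. $S(n)Q^2(n) = \Omega(n^2/\log^4 n)$ (the second $\log^2 n$ entering through this parameter choice, the first through $\ell^2$). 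I expect the main obstacle to be precisely this joint balancing: keeping (\ref{eq:int}) satisfiable while the threshold $\beta$ is pushed up to $t/(2Q(n))$ and $D$ stays below $n$ — the same delicate interplay as in the reporting proof, now under the added pressure of the targeted $Q^2$ — whereas the Chernoff estimates and the FP/2FP/SI/common-colors reductions are routine and identical to Appendices~\ref{appendix:forbidden_modifications} and~\ref{appendix:extensions}.
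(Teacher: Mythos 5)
Your reduction step (prove the bound for the 2P instance and transfer it to SI and common colors via the suffix tree) is exactly the paper's route, and the reporting half is fine since you simply invoke Theorem~\ref{thm:reportmain}. The gap is in the semi-group half, where you replace the paper's counting (Appendix~\ref{appendix:semi_group_lower_bound}) with a convexity/double-counting argument that is quantitatively too weak. Your final inequality is $S(n)=\Omega\bigl(|\Q|t^2/(\ell^2 Q(n) T^2)\bigr)=\Omega\bigl(|\Q|/(\ell^2 Q(n))\bigr)$: charging each query only $t^2/Q(n)$ while each large generator may absorb up to $T^2\ell^2=\Theta(t^2\ell^2)$ loses a factor $Q(n)$ relative to what is needed. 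Concretely, your convexity step forces $t\ge 2Q(n)$, i.e.\ $D=\Omega(Q(n)2^{2p})$, hence $|\Q|=\Theta(2^{2\sigma+2p})=\Theta\bigl(n^2/(t^2 2^{2p})\bigr)$, and your bound becomes at best $S(n)=\Omega\bigl(n^2/(Q^3(n)2^{2p}\ell^2)\bigr)$ --- an $S(n)Q^3(n)$ trade-off, not $S(n)Q^2(n)$; with your instantiation ($t=\Theta(Q(n))$, $2^p=\Theta(\log n)$, $\ell=\Theta(\log n)$) this only recovers the claimed corollary when $Q(n)=\log^{O(1)}n$. Moreover that instantiation is infeasible anyway: inequality~(\ref{eq:int}) essentially requires $\sigma=O(p\beta)$, so taking $\beta=O(1)$ (forced by your threshold $\beta\le t/(2Q(n))$ once $t=\Theta(Q(n))$) together with $p=\Theta(\log\log n)$ forces $2^{\sigma}=\log^{O(1)}n$, i.e.\ $D\ge n/\mathrm{polylog}(n)$, which collapses $|\Q|$ to polylogarithmic size and trivializes the bound.

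The fix is the paper's much simpler count: a query's output has size $\Theta(t)$ and it may use at most $Q(n)$ precomputed sums, so once the parameters satisfy $Q(n)<t/(2\beta)$, by pigeonhole at least one used sum is ``crowded'' (involves $\ge\beta$ documents); by the intersection property a crowded sum is usable by fewer than $\ell^2$ queries, hence $S(n)\ge|\Q|/\ell^2$ with no $Q(n)$ loss, no $T^2$ slack, and no small/large split. The parameters are then chosen as in Appendix~\ref{appendix:semi_group_lower_bound}: $p=O(1)$, $\beta=\Theta(\log(n/D))$ (not constant), $\ell=\Theta(\log D)=\Theta(\log Q(n))$, and $D=\Theta(Q(n)\beta)$, so that $t/Q(n)=\Theta(\beta)$ grows with $n$; this yields $S(n)\ge (n/D)^2/\ell^2=\Omega\bigl(n^2/(Q^2(n)\log^4 n)\bigr)$, i.e.\ the searching half of the corollary, and the reductions carry it to SI and common colors unchanged.
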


\para{Two-Sided High-Dimensional Queries}
\noindent
\label{appendix:twosided}
Another implication is for high dimensional two-sided range queries, a special case of the fundamental problem of orthogonal range searching (see~\cite{aal12,aal10,Chazelle90a,Chazelle90b,c86,c88,clp11} for the best upper bounds, lower bounds, history and motivation).
In this problem, we would like to preprocess a set of points in $d$-dimensional space 
such that given an axis-aligned query rectangle $q=(a_1, b_1) \times \cdots \times (a_d, b_d)$, 
we can report (or count) the points contained in the query.
We consider a variant of the problem in which the input is in $d$-dimensions
but the query box $q$ has two sides, that is, all $a_i$'s and $b_i$'s are set to $-\infty$
and $+\infty$, respectively, except only for two possible values.

To give a lower bound for this problem, we consider our construction for the 2P problem,
and a suffix tree built on the set of documents. 
Consider a non-delimiter character $\bi$ and its corresponding range of leaves, $S_\bi$, in the suffix tree: 
we are guaranteed that each document appears exactly once in $S_\bi$.
We create $D$ points in $(2^\sigma-1)$-dimensional space by picking the $i$-th coordinate of the
points based on the order of the documents in $S_\bi$.
With a moment of thought, it can be verified that we can use a solution for 2-sided orthogonal
range searching to solve our hard instance for the 2P problem.
By reworking the parameters we obtain the following.
\begin{theorem}
  For a set of $m$ points in $d$-dimensional space, answering 2-sided orthogonal
  range reporting (resp. searching) queries in $m^{1-o(1)} + O(k)$, (resp. $m^{1-o(1)}$ ) time
  in the pointer machine (resp. semi-group) model requires $\Omega( n^{2-o(1)} / m)$
  (resp. $\Omega( n^{2-o(1)} / m^2)$) space, where $n=md$.
\end{theorem}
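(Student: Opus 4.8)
The plan is to reduce from the hard instance for the 2P problem built in Appendix~\ref{sec:reporting_lb}: I map each of the $D$ documents to a point in $(2^\sigma-1)$-dimensional space so that every 2P query becomes an axis-parallel box that is unbounded in all coordinates except two, i.e.\ a two-sided query, and so that the documents matching the query are exactly the points inside the box. Given such a reduction, a pointer-machine (resp.\ semi-group) data structure for two-sided range reporting (resp.\ searching) on $m=D$ points in $d=2^\sigma-1$ dimensions with space $S$ and query time $m^{1-o(1)}+O(k)$ (resp.\ $m^{1-o(1)}$) immediately yields a data structure with the same parameters for the 2P reporting (resp.\ searching) hard instance, and Theorems~\ref{thm:reportmain} and~\ref{thm:searchingmain} then finish the argument.

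Concretely, build a suffix tree on $\mathcal{D}$. For each non-delimiter character $\mathbf{i}$, $1\le \mathbf{i}<2^\sigma$, the leaves whose suffix begins with $\bs\mathbf{i}$ form a contiguous range $S_{\mathbf{i}}$ containing exactly $D$ leaves, one per document, since the encoding guarantees that $\bs\mathbf{i}$ occurs exactly once in each document (at the start of its $i$-th part). Let the $i$-th coordinate of the point representing document $d_k$ be the rank of $d_k$'s leaf within $S_{\mathbf{i}}$; this produces $m=D$ points in $d=2^\sigma-1$ dimensions whose coordinates form, in each dimension, a permutation of $\{1,\dots,D\}$. A pattern $P=\bs\mathbf{i}b_1\cdots b_p$ corresponds to a contiguous sub-range of $S_{\mathbf{i}}$, hence to a single constraint on coordinate $i$ (a half-line, after the standard rank-space/dyadic adjustment described below), and a 2P query $(P_1,P_2)$ with distinct initial characters $\mathbf{i}\neq\mathbf{j}$ therefore becomes a box constraining only coordinates $i$ and $j$ and unbounded elsewhere. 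By Observation~\ref{ob:match}, a document matches the 2P query exactly when its point lies inside this box, so the two instances are equivalent; for the searching variant we simply carry over the semi-group weight of each document to its point.

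It remains to rework the parameters. Setting $D=m$ and $2^\sigma-1=d$, the total size of the 2P instance is $n=\Theta(md)$ (the $\sigma=n^{o(1)}$ factor hidden by the character encoding is absorbed), and the search overhead the reduction must tolerate is $Q(n)=m^{1-o(1)}$; exactly as in Appendix~\ref{sec:reporting_lb}, we may take $p=\beta=\Theta(\sqrt{\log(n/Q(n))})=n^{o(1)}$ and $\ell=\Theta(\log n)$, which makes all hypotheses of Theorem~\ref{thm:lblemma} hold and ensures $Q(n)\le t$. Theorem~\ref{thm:reportmain} then gives $S(n)\cdot m^{1-o(1)}=\Omega(n^{2-o(1)})$, i.e.\ $S(n)=\Omega(n^{2-o(1)}/m)$, and Theorem~\ref{thm:searchingmain} gives $S(n)\cdot m^{2-o(1)}=\Omega(n^{2-o(1)})$, i.e.\ $S(n)=\Omega(n^{2-o(1)}/m^2)$, as claimed.

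The step I expect to be the main obstacle is making the reduction honestly produce a \emph{two-sided} query rather than a four-sided one, since the naive map sends each of the two patterns to an \emph{interval} in its coordinate. One resolves this by laying out the leaves of each $S_{\mathbf{i}}$ (or by choosing $p$ appropriately) so that each pattern pins its coordinate to one side of a single threshold, while re-verifying that $p$ can still be kept $n^{o(1)}$, that $Q(n)\le t$ still holds, and that the additive $O(k)$ term and the semi-group weights survive the transformation; everything past that is routine bookkeeping on top of Theorems~\ref{thm:reportmain} and~\ref{thm:searchingmain}.
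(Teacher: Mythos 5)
Your reduction is the same as the paper's: build a suffix tree over the 2P hard instance, let the $i$-th coordinate of each document's point be its rank within the leaf range $S_{\mathbf{i}}$, note that a 2P query then becomes a box restricted only in the two coordinates of its initial characters, and plug $Q(n)=m^{1-o(1)}$ into Theorems~\ref{thm:reportmain} and~\ref{thm:searchingmain} to get the claimed space bounds with $n=md$. On the one point you flag as a potential obstacle (each pattern naturally giving an \emph{interval} rather than a single threshold in its coordinate), the paper offers no more detail than you do—it dismisses the geometry of the query box with ``a moment of thought''—so your proposal matches the paper's proof essentially verbatim.
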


\section{The Semi-Group Model}
\label{appendix:semi_group}
In this model, each element of the input set $\U=\left\{ x_1, \dots, x_n \right\}$
is assigned
a weight, by a weight function $\w: \U \rightarrow G$ where $G$ is a semi-group.
Given a query $q$, let $q_\U$ be the subset of $\U$ that matches $q$. 
The output of the query is $\sum_{x \in q_\U} \w(x)$.
By restricting $G$ to be a semi-group, the model can capture very diverse set of queries. 
For example, finding the document with maximum weight that matches a query can
be modelled using the semi-group $\mathbb{R}$ with ``$\max$'' operation\footnote{
Observe that there is no inverse operation for $\max$.}.
The lower bound variant of the semi-group model was first
introduced by Chazelle~\cite{Chazelle90b} in  1990.
In this variant, it is assumed that the data structure stores ``precomputed'' sums $s_1, \dots, s_m$
where each $s_i = \sum_{j \in I_i} \alpha_{i,j}\w(x_j)$, $\alpha_{i,j} \in \mathbb{N}$, and $I_i$ is some subset of $[n]$ and thus
$m$ is a lower bound on space cost of the data structure.
While the integers $\alpha_{ij}$ can depend on the weight function $\w$, the set $I_i$ must be
independent of $\w$ (they both can depend on $\U$).
Finally, the semi-group is required to be \emph{faithful} which essentially means that if
$\sum_{i \in I} \alpha_i g_i = \sum_{i \in I'}\alpha'_i g_i$ for every assignment of semi-group
values to variables $g_i$, then we must have $I=I'$ and $\alpha_i = \alpha'_i$.
To answer a query, the data structure can pick $t$ precomputed values to create
the sum $\sum_{x \in q_\U} \w(x)$ and thus $t$ is a lower bound on the
query time.
For a detailed description and justification of the model we refer the reader to Chazelle's paper~\cite{Chazelle90b}.
Similar to the pointer machine model, this semi-group model is often used to prove lower bounds that
closely match the bounds of the existing data structures (see e.g.,~\cite{Chazelle90b,Chazellesimplb,hslb}).

\section{Semi-Group Model Lower Bound}
\label{appendix:semi_group_lower_bound}


The general strategy for proving the lower bound is as follows. 
We create queries and documents where the answer for each query is sum of
``a lot of'' documents. 
Then we aim to prove that any subset of poly($\log n$) documents are
unlikely to have more than a few patterns in common, which subsequently implies,
any pre-computed sum that involves more than polylogarithmic weights 
is only useful for few queries; let's call such a pre-computed sum ``crowded''. 
We charge one unit of space for every ``crowded'' sum stored. 
At the query time however, since the answer to queries involves
adding up ``a lot of'' weights,  to reduce the query time, the query algorithm
must use at least one ``crowded'' sum.  But we argued that each ``crowded'' sum
can be useful only for a small number of queries which means to be able to
answer all the queries, the data structure should store many ``crowded''  
sums, giving us a lower bound.

This strategy is very similar to the strategy employed by Dietz el al. \cite{DietzMRU95} to prove a lower bound for the offline set intersection problem of processing a sequence of updates and queries.
However we need to cast the construction in terms of documents and two-pattern queries.
The results we get are for static data structure problems where as they care about an algorithm for processing a sequence of queries/updates.
This means the results are not easily comparable since there are other parameters, such as the number of queries and updates.
They do also provide a lower bound for dynamic online set intersections with restrictions on space. However, since it is dynamic the results are also not directly comparable.
Both their result and our result suffer the penalty of some $\textrm{poly}\log(n)$ factors, but ours are slightly more sensitive. Their log factors are only dependent on $n$ (the number of updates, which for us is the number of elements) where as some of our log factors depend on $Q(n)$ instead (the query time), which makes a difference for fast query times.
However when considering the pointer machine model we have to be much more careful than they are in considering the size of the intersection of many sets.
For the pointer machine it is absolutely crucial that the size is bounded by a constant, whereas they can get away with only bounding it by $O(\log n)$ and only suffer $\log n$ factor in the bounds.
If the intersection cannot be bounded better than $\varepsilon \log n$ for pointer machine lower bounds, the bound gets a penalty of $n^\varepsilon$, which is really detrimental and would make the lower bounds almost non-interesting. Luckily we are able to provide a better and tighter analysis in this regard.

\para{Documents and Queries.}
Our construction will be very similar to the one used in the reporting case and the
only difference is that the number of trailing bits, $p$ will be constant.
To avoid confusion and for the clarity of the exposition, we will still use the variable $p$.

\subsection{Lower bound proof}

We now wish to bound how many documents can be put in a ``crowded'' sum
and be useful to answer queries. 
To be specific, we define a ``crowded'' sum, as a linear combination of 
the weights of at least $\beta$ documents. We would like to show that 
such a combination is useful for a very small number of queries (i.e., patterns).
The same technique as the previous section works and we can in fact
continue from inequality~(\ref{eq:int}).
\ignore{
Not surprisingly, our proof techniques will be very similar. 
In fact, we can continue from In order for a subset of documents to
be useful for at least $\ell^2$ queries, they should all contain $\ell$ patterns $P_1,
P_2, \ldots, P_\ell$ (each starting with a different character), in this
case the documents are useful for $\Theta(t^2)$ different queries. The
probability that a document contains all these patterns is
$\left(\frac{1}{2^p}\right)^\ell$. If we create $D$ documents and define
\begin{equation*}
  x_i = \left\{
  \begin{array}{rl}
    1 & \textrm{if $P_1, P_2, \ldots, P_t$ all occur in $d_i$} \\
    0 & \textrm{otherwise}
  \end{array}
  \right.
\end{equation*}

Defining $X = \sum_{i=1}^D x_i$ we have the number of documents where
these $t$ patterns occur. In expectation we see for a fixed set of $t$
patterns that $\mathbb{E}[X] = \frac{D}{2^{pt}}$. Since each document
is chosen independently of all other documents we can apply a high
concentration bound to show that with high probability very few
documents contain these $t$ patterns. From the Chernoff bound we get:

\begin{equation}
  Pr[X > r = \mu \delta] \le \left(\frac{e^\delta}{\delta^\delta}\right)^\mu = \left(\frac{e\mu}{r}\right)^r
\end{equation}
}

We make sure that $2^{p\ell}\ge D^2$ or $ 2\log D \le p\ell$. 
We also choose $\beta > ce$ for some $c>1$ and simplify inequality~\ref{eq:int} to get,
$2^{\ell(p + \sigma )}  \le 2^{p\ell\beta/2} \Rightarrow{p + \sigma } \le {p\beta/2}$, or
\begin{align}
   \sigma  &\le {p\beta/2}.  \label{eq:count}
\end{align}

If the above inequality holds, then, no sets of  $\beta$
documents can match $\ell$ different patterns. 
Furthermore, remember that with high probability the number of documents that match every query is $\frac{D}{2\cdot 2^{2p}}$.
Observe that to answer a query $q$ faster than $\frac{D}{2\cdot 2^{2p}\beta}$ time, one has to use at least one
crowded sum $s$.
If there is a document that is used in $s$ but it does not match $q$,
then the result will be incorrect as in the semi-group model there is no way to 
``subtract'' the extra non-matching weight from $s$. 
Thus, all the documents whose weights are used in $s$ must match $q$.
However, we just proved that documents in a crowded sum can only match $\ell$ patterns,
and thus be useful for $\ell^2$ queries.
So, if inequality~(\ref{eq:count}) holds, either
$Q(n) \ge \frac{D}{2 \cdot 2^{2p}\beta}$ or $S(n) \ge
\frac{2^{2\sigma + 2p}}{\ell^2}$; the latter follows since $2^{2\sigma + 2p}$ is the total number of queries
and the denominator is the maximum number of queries where a crowded sum can be used for. 
To exclude the first possibility, we pick $D$ such that 
\begin{align}
    Q(n) <  \frac{D}{2 \cdot 2^{2p}\beta}\label{eq:count2}.
\end{align}
Recall that $n = D 2^\sigma$. We pick $\sigma = \log({n}/{D})$, $p=O(1)$, which forces 
$\beta  \ge \log({n}/{D)})$ and $\ell \ge \log D = \Theta(\log (n/2^\sigma))
= \Theta(\log Q(n))$. 
We also pick $D$ as small as possible while satisfying (\ref{eq:count2}).
This leads to the following trade-off:

\begin{equation*}
  S(n) \ge \frac{2^{2\sigma + 2p}}{\ell^2} = \frac{n^2/D^2}{\Theta(\log^2 Q(n))}= \frac{n^2}{\Theta(Q^2(n)\log^2n \log^2 Q(n))}\Rightarrow S(n)Q(n)^2 = \Omega(n^2/\log^4 n).
\end{equation*}

\thmsearchmain*

\end{document}